\newcommand{\opt}{{\rm OPT}}
\newcommand{\spi}{SPI\xspace}
\newcommand{\sps}{SPS\xspace}
\newcommand{\mcg}{MCG\xspace}
\title{On Submodular Prophet Inequalities and Correlation Gap
  \thanks{Dept.\ of Computer Science, University of
    Illinois, Urbana, IL 61801. {\tt \{chekuri,livanos3\}@illinois.edu}.
    Work on this paper partially supported by NSF grant CCF-1910149.}
  }
\author{Chandra Chekuri
\and
Vasilis Livanos%
}
\date{\today}
\begin{document}

	\maketitle

\begin{abstract}
Prophet inequalities and secretary problems have been extensively
studied in recent years due to their elegance, connections to online
algorithms, stochastic optimization, and mechanism design problems in
game theoretic settings. Rubinstein and Singla \cite{rs} developed a
notion of \emph{combinatorial} prophet inequalities in order to
generalize the standard prophet inequality setting to combinatorial
valuation functions such as submodular and subadditive functions. For
non-negative submodular functions they demonstrated a constant factor
prophet inequality for matroid constraints. Along the way they showed
a variant of the correlation gap for non-negative submodular
functions.

In this paper we revisit their notion of correlation gap as
well as the standard notion of correlation gap and prove much tighter
and cleaner bounds.  Via these bounds and other insights we obtain
substantially improved constant factor combinatorial prophet
inequalities for both monotone and non-monotone submodular functions
over any constraint that admits an Online Contention Resolution
Scheme. In addition to improved bounds we describe efficient
polynomial-time algorithms that achieve these bounds.
\end{abstract}

\medskip
\noindent
{\small \textbf{Keywords:}
prophet inequalities, contention resolution schemes, online algorithms,
matroids
}

\thispagestyle{empty}

\pagenumbering{Alph}
\newpage

\pagenumbering{arabic}

%-----------------------------------------------------

\section{Introduction}\label{sec:intro}
Prophet inequalities arose from stochastic optimization and stopping
theory in the '70s.  In the basic setting there are $n$ independent
real-valued random variables $X_1,X_2,\ldots,X_n$ with prescribed
distributions $\cD_1,\ldots,\cD_n$; they correspond to values of some
items. An online algorithm (or agent) knows the distributions of the
random variables a priori but sees their realizations in an
\emph{adversarial} order, and has to choose exactly one of them. The
algorithm has to make an irrevocable decision on whether to accept an
item or not when it arrives. In the single item setting the first
accepted item stops the process. The algorithm's performance is
measured with respect to the value of a prophet who gets to see all
the realizations and then picks the variable with the largest
value. The expected value of the prophet is $V^* = \E[\max_i X_i]$. An
online algorithm $\alpha$-competitive if its expected value is at
least $\alpha V^*$.  Krengel and Sucheston \cite{kren-such}
showed that $1/2$ is the optimal competitive ratio for the single item
setting. Secretary problems are closely related to prophet
inequalities. In the classical version an online algorithm sees $n$ adversarially chosen
values in a \emph{random order} and has to pick one item irrevocably
and compete with the maximum value. A classical result of Dynkin \cite{dynkin}
shows an optimal competitive ratio of $1/e$.

There has been substantial recent interest in prophet inequalities
and secretary problems in theoretical computer science.  Initial
interest arose from strong connections to online mechanism design and
posted price mechanisms for revenue maximization \cite{haji, ChawlaHMS}
(see \cite{Hartline-survey, Chawla-survey} for surveys on Bayesian
mechanism design). Subsequent work explored several different variants
including prophet inequality and secretary problems in more general
settings. Of particular interest to us is the setting where multiple
variables/items from $X_1,X_2,\ldots,X_n$ can be chosen such that the
chosen items are feasible in some combinatorial constraint family.
Two important examples are choosing $k$ items for some integer $k \ge
1$ \cite{Alaei14} and a further generalization where the items chosen
are independent in a matroid \cite{klein-wein}. These generalizations
had several motivations including algorithmic game theory,
combinatorial optimization, stochastic optimization, and online
algorithms. A rich body of work has emerged with several elegant and
useful results. We refer the reader to surveys on prophet
inequalities by Lucier \cite{Lucier-survey} and Correa et
al.\ \cite{Correa-survey}, a survey by Dinitz on the secretary
problem \cite{Dinitz-survey} and by Gupta and Singla generally on
random-order models \cite{GuptaS-survey}, for several pointers to the
extensive literature on these problems and related topics.

\medskip \noindent {\bf Combinatorial Prophet Inequalities:} Prophet
inequalities and secretary problems were mostly studied with
modular/additive objective functions, by which we mean that the total
value of a subset $S$ of variables is simply the sum of their values.
However, more general combinatorial objective functions have many
useful applications. By a combinatorial objective we mean that the
value of a subset of items from $[n]$ is specified by a set function
$f:2^{[n]} \rightarrow \mathbb{R}$. Prominent examples are
submodular\footnote{A real-valued set function $f:2^\cN \rightarrow \mathbb{R}$ is
  submodular if and only if
  $f(A) + f(B) \ge f(A \cup B) + f(A \cap B)$ for all
  $A,B \subseteq \cN$. } and subadditive\footnote{A real-valued set
  function $f:2^\cN \rightarrow \mathbb{R}$ is subadditive if
  $f(A \cup B) \le f(A) + f(B)$ for all $A, B \subseteq \cN$. A
  non-negative submodular function is subadditive.} set functions, and
their special cases. The secretary problem was studied with submodular
objectives and it was shown that it can be reduced to the modular case
with a constant factor loss \cite{submod-sec}. This motivated
Rubinstein and Singla to define a model of combinatorial prophet
inequalities which is the main object of study in this paper. We
restrict our attention to submodular objectives which form a rich
class and, following \cite{rs}, we refer to this as the Submodular
Prophet Inequality (\spi) problem.

The model defined by Rubinstein and Singla is the following elegant
generalization of the standard prophet inequality problem. The input
consists of $n$ independent random variables $X_1,X_2,\ldots,X_n$.
Unlike the standard prophet inequality where $X_i$ is a real-valued
random variable, in the combinatorial setting, each $X_i$ is a
discrete-valued random variable over a finite set $\cU_i$. Thus
$\cD_i$ is a discrete probability distribution over $\cU_i$.  For
technical reasons one assumes that $\cU_1,\cU_2,\ldots,\cU_n$ are
mutually disjoint. Let $\cU = \bigcup_i \cU_i$. There is a
non-negative submodular function $f:2^{\cU} \rightarrow \mathbb{R}_+$
defined over the ground set $\cU$. As in the standard prophet
inequality setting, the variables arrive in an adversarial order. The
online algorithm has to make irrevocable decisions about accepting the
outcome of a variable after seeing its realization
when it arrives, and its goal is to maximize the value $f(S)$ of the
selected set $S \subseteq \cU$.  What about the constraints?  Recall
that in the standard prophet inequality, the goal is to select a
subset of variables from a feasible collection of sets. Similarly, we
assume that there is a downward-closed family of sets
$\cI \subseteq 2^{[n]}$, and it is required that the chosen variables
belong to $\cI$. We emphasize that $\cI$ is defined over $[n]$ and not
$\cU$.  The prophet, which gets to see all the realizations, can
optimize offline, and one can see that its expected profit is
$\E[\max_{S \in \cI} f(\bigcup_{i \in S} \{X_i\})]$.

We briefly motivate a scenario for the setup above. First, we observe
that the standard prophet inequality with additive functions and
arbitrary downward-closed constraints can be modeled by the
combinatorial setting. We simply need to approximate a real-valued
random variable $X_i$, sufficiently closely, by a discrete
distribution over point values. This is relatively easy to do for most
distributions of interest. Now consider the modeling power of the
combinatorial setting. Suppose we have an online advertising situation
where one sees on each day (or time slot) a customer of some type
drawn from a known distribution. It is natural to assume that customer
types are discrete (or can be approximated fairly well by a discrete
distribution with a sufficiently large domain). The agency has to
irrevocably decide whether to show an ad to the customer when they
arrive. The agency has various constraints on the ads that it can
show. For instance, there could be a budget constraint which dictates
that at most $k$ ads can be shown overall which corresponds to picking
$k$ days/variables from the set of $n$ arrivals. What is the value of
serving the ads? That depends on the application at hand. Rich
objective functions allow substantial flexibility in the ability to
model the profit.  For instance it is easy to imagine a decreasing
marginal utility for showing ads to the same type of customers and
this can be easily captured by submodular functions. The model
proposed by \cite{rs} allows arbitrary submodular functions over the
entire universe $\cU$, which is fairly powerful.  The goal of our
description is to point out the generality of the model as it relates
to the traditional prophet inequality setting which has already seen
many applications.

Rubinstein and Singla showed that one can obtain an $O(1)$-competitive
algorithm for \spi when the constraint on selecting the days is a
matroid constraint. However, they did not explicitly consider the case
of monotone submodular functions and they focused only on a single
matroid constraint. Moreover, the constant they obtained is
very large (in the thousands, although they did not try to optimize
it) and they did not consider or emphasize the computational aspects
of the online algorithm. We note that prophet inequalities in the
standard setting of modular/additive objectives are fairly small.
For instance, the well-known result of Kleinberg and Weinberg
\cite{klein-wein} showed a bound of $1/2$ even for arbitrary matroid
constraints, and it is also known that the bound for a cardinality
constraint with $k$ items is $(1-O(1/\sqrt{k}))$ (hence it tends to
$1$ as $k \rightarrow \infty$) \cite{Alaei14}.

\medskip
\noindent {\bf Motivation and technical challenges:} Our main
motivation is to obtain improved bounds for \spi via a clean framework
that applies to a wide variety of constraints. This question was
explicitly raised by Lucier in his survey on prophet inequalities
\cite{Lucier-survey}. Another motivation for
this paper, related to the goal of obtaining improved bounds, comes
from a technical tool that Rubinstein and Singla relied upon, namely
the notion of the correlation gap. One reason for the large constant
in their prophet inequality comes from the correlation gap they prove
for non-negative submodular functions. In addition to the correlation
gap bound, a second technical challenge comes from the model. On each
day $i$ a single element from $\cU_i$ is realized. Thus the overall
distribution over $\cU$ is correlated. Known rounding schemes for
submodular functions such as (Online) Contention Resolution Schemes
\cite{crs,ocrs,rocrs} rely on independence and one needs to suitably
adapt them when handling general classes of constraints beyond a single
matroid, considered in \cite{rs}. We describe the important notion of
the correlation gap that is of independent interest beyond the prophet
inequality setting.

\smallskip
\noindent \emph{Correlation gap of non-negative submodular
  functions:} The term correlation gap of a set function was
introduced in the work of Agrawal et al.\
\cite{adsy-corgap} and has since found several applications. For any
non-negative real-valued set function $f:2^\cN \rightarrow \mathbb{R}_+$
the correlation gap of $f$ is the worst-case ratio of two continuous
extension of $f$, namely the multilinear extension $F$ and the concave
closure $f^+$ \footnote{For some background on submodular
functions along with several lemmas used throughout the paper see
Appendix~\ref{app:background1}.}.
In probabilistic terms, consider a distribution $\cD$
over $\cN =  [n]$ with marginals given by $\bm{x} \in [0,1]^n$. The multilinear
extension $F(\bm{x})$ measures the expected value of $f$ under the product
distribution with marginals $\bm{x}$. The concave closure $f^+(\bm{x})$ gives
the maximum expected value of $f$ over all distributions with
marginals $\bm{x}$. The ratio $\inf_{\bm{x} \in [0,1]^n}\frac{F(\bm{x})}{f^+(\bm{x})}$
is the correlation gap.  If $f$ is a modular/additive function, it is
easy to see that the correlation gap is $1$. An important result in submodular
optimization is that the correlation gap is at most $(1-1/e)$ for any
\emph{monotone}\footnote{A real-valued set function $f$ is monotone if
  $f(A) \le f(B)$ whenever $A \subseteq B$.} submodular function
\cite{adsy-corgap,ccpv,vondrak}. However, it is known that the
correlation gap for general non-negative submodular functions (which
can be non-monotone) can be arbitrarily small. Rubinstein and Singla
\cite{rs} instead use the correlation gap of a related function,
namely, $f_{\max}$ defined as follows:
$f_{\max}(S) = \max_{T \subseteq S} f(T)$. The function $f_{\max}$ is
monotone, but is in general not submodular, even when $f$ is.
It is shown in \cite{rs} that for any non-negative submodular function
$f$, $\inf_{\bm{x} \in [0,1]^n} \frac{F_{\max}(\bm{x})}{f^+(\bm{x})} \ge 1/200$,
where $F_{\max}$ is the multilinear extension of $f_{\max}$.
The main technical claim they prove is that $F\prn{\frac{1}{2} \bm{x}} \ge \frac{1}{200}
f^+(\bm{x})$ for any $\bm{x} \in [0,1]^n$. The proof in \cite{rs} relies on
existing tools but is involved and goes via another continuous extension.
In this paper, we seek to improve the bound but also to give
a refined analysis of the correlation gap for non-negative functions
via a parameter $p = \max_{i \in \cN} x_i$.

\subsection{Our contributions}

In this paper we make two high-level contributions. First, we consider
the correlation gap for non-negative submodular functions both in the
original definition and in the modified sense of \cite{rs} described
above. In both cases, we obtain substantially improved bounds. Second,
we revisit the \spi problem and address three issues: (i)
significantly improved constants for the prophet inequalities for
monotone and non-monotone functions, (ii) a clean black-box reduction
to greedy Online Contention Resolution Schemes that allows one to obtain
prophet inequalities for various other constraints beyond a single
matroid constraint and (iii) computational aspects of the prophet
inequality that were not explicitly addressed in \cite{rs}.
In essence, we answer the open question in \cite{Lucier-survey}
in the affirmative.

\paragraph{Correlation gap:} For a non-negative submodular function,
for any given $p \in [0,1]$, there is a simple instance even when
$n = 2$ such that $F(\bm{x}) \le (1-p)f^+(\bm{x})$ and this implies
that, as $p$ tends to $1$, the correlation gap tends to $0$. In
particular, it turns out that $p = \max_i x_i = |\bm{x}|_{\infty}$.
This phenomenon manifests itself in non-monotone submodular function
maximization in various ways and the typical way to overcome this is
to restrict attention to settings where $p$ is away from $1$.
Nevertheless, there has been little work on precisely quantifying the
correlation gap as a function of this parameter. Our first theorem
addresses this.

\begin{theorem}
  \label{thm:intro-gap}
  Let $f:2^\cN \rightarrow \Rp$ be a non-negative submodular
  function and let $\bm{x} \in [0,1]^n$, where $n = |\cN|$. Let $p = \max_i x_i$. Then
  $F(\bm{x}) \ge (1-p)(1-1/e)f^+(\bm{x})$.  Given any $p \in [0,1]$ there are
  instances such that $F(\bm{x}) \le (1-e^{-(1-p)})f^+(\bm{x})$.
\end{theorem}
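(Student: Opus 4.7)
My plan is to combine the measured continuous greedy framework with the Feige--Mirrokni--Vondr\'ak (FMV) sampling lemma, which states that for any non-negative submodular $g:2^U \to \Rp$ and random $R \subseteq U$ with $\Pr[i \in R] \le q$ independently, $\E[g(R)] \ge (1-q)\,g(\emptyset)$; this is the non-monotone surrogate for the inequality $F(\bm{y}) \ge f(\emptyset)$ that drives the classical monotone correlation-gap proof. I would first write $f^+(\bm{x}) = \sum_T \alpha_T f(T)$ for an optimal distribution $\alpha$ with marginals $\bm{x}$, and then introduce the trajectory $y_i(t) = 1 - e^{-t x_i}$, which satisfies $\dot y_i = (1-y_i)\,x_i$, $\bm{y}(0) = \bm{0}$, and $|\bm{y}(t)|_\infty \le 1 - e^{-tp}$. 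The idea is to first bound $\phi(t) := F(\bm{y}(t))$ along this path and then transfer the bound from $\bm{y}(1) = \bm{1} - e^{-\bm{x}}$ to $\bm{x}$ itself.

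For the ODE step I would combine the standard multilinear telescoping $\sum_{i \in T}(1-y_i)\,\partial_i F(\bm{y}) \ge F(\bm{y} \vee \bm{1}_T) - F(\bm{y})$ with the bound $F(\bm{y} \vee \bm{1}_T) \ge (1-|\bm{y}|_\infty)\,f(T)$, obtained by applying FMV to $g(S) = f(T \cup S)$ on ground set $\cN \setminus T$. Averaging over $T \sim \alpha$ and inserting $|\bm{y}(t)|_\infty \le 1-e^{-tp}$ yields $\phi'(t) + \phi(t) \ge e^{-tp} f^+(\bm{x})$, which after multiplying by $e^t$ and integrating on $[0,1]$ gives $F(\bm{y}(1)) \ge \frac{e^{1-p}-1}{e(1-p)}\,f^+(\bm{x})$. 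For the transfer step, I would decompose a product sample $R \sim \bm{x}$ as $R_1 \cup R_2$ with independent $R_1 \sim \bm{y}(1)$ and $R_2$ of marginals $y'_i = 1 - (1-x_i)\,e^{x_i}$; the identity $(1-y_i(1))(1-y'_i) = 1-x_i$ certifies the coupling, and one computes $|\bm{y}'|_\infty = 1 - (1-p)\,e^p$. A second FMV application to $g(S) = f(R_1 \cup S)$ then gives $F(\bm{x}) \ge (1-p)\,e^p\,F(\bm{y}(1))$. Multiplying the two estimates produces $F(\bm{x}) \ge (1-e^{-(1-p)})\,f^+(\bm{x})$, which is strictly stronger than the stated $(1-p)(1-1/e)$ bound and in fact closes the gap with the tightness side of the theorem.

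For the tightness construction, I propose $\cN = ([n]\times[m]) \cup \{*\}$ with the non-negative submodular function $f(S) = |\{i \in [n] : S \cap (\{i\}\times[m]) \ne \emptyset\}| \cdot \bm{1}[* \notin S]$; submodularity can be verified by case analysis on whether $*$ lies in $A$, $B$, both, or neither. Setting $x_{(i,j)} = (1-p)/m$ and $x_* = p$ (so $\max_i x_i = p$ once $m \ge (1-p)/p$), direct computation gives $F(\bm{x}) = (1-p)\, n \bigl(1 - (1-(1-p)/m)^m\bigr)$ and $f^+(\bm{x}) = (1-p)\,n$, the latter attained by the distribution that places mass $p$ on $\{*\}$ and mass $1-p$ on a uniformly chosen representative per group; letting $m \to \infty$ drives $F/f^+$ to $1 - e^{-(1-p)}$. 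The hardest step will be the FMV sampling lemma itself: a naive induction on $|\cN|$ stalls because for non-monotone $g$ one need not have $g(\{i\}) \ge g(\emptyset)$, and a more delicate argument (bounding the one-step loss via $b+c \ge a+d$ when splitting the worst coordinate) is required; checking that the transfer coupling in Step~3 stays valid throughout the parameter range is a secondary, but routine, concern.
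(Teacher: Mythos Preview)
Your argument is correct, and it is genuinely different from the paper's. For the upper bound the paper argues combinatorially: it fixes a basic feasible solution $(q_j,A_j)_j$ of the LP defining $f^+(\bm{x})$, couples $R(\bm{x})$ to a union of independently sampled copies $A_j(q_j)$ via Lemma~\ref{lem:sampling}, conditions on the minimum surviving index $k$ to extract a factor $(1-p)f(A_k)$ through Lemma~\ref{lem:buchbinder}, and finishes with the telescoping Lemma~\ref{lem:crs} to get the $(1-1/e)$ factor. You instead run a continuous-time argument: the trajectory $\bm{y}(t)=\bm{1}-e^{-t\bm{x}}$ together with the concavity bound $\sum_{i\in T}(1-y_i)\partial_iF(\bm{y})\ge F(\bm{y}\vee\bm{1}_T)-F(\bm{y})$ and Lemma~\ref{lem:buchbinder} give the differential inequality $\phi'+\phi\ge e^{-tp}f^+(\bm{x})$, and a second application of Lemma~\ref{lem:buchbinder} via the coupling $(1-y_i(1))(1-y'_i)=1-x_i$ transfers the estimate from $\bm{y}(1)$ back to $\bm{x}$.

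What your route buys is more than elegance: the product of the two steps is exactly $(1-p)e^{p}\cdot\frac{e^{1-p}-1}{e(1-p)}=1-e^{-(1-p)}$, so you prove $F(\bm{x})\ge(1-e^{-(1-p)})f^+(\bm{x})$, which is strictly stronger than the paper's $(1-p)(1-1/e)$ and matches the tightness construction. In other words, your proof closes the gap the paper leaves open in Section~\ref{sec:conclusions}. Your lower-bound instance is essentially the paper's (Theorem~\ref{thm:lower-gap}): setting $n=1$ in your $[n]\times[m]\cup\{*\}$ family recovers their star graph verbatim. One minor point: the ``hardest step'' you flag, the FMV sampling lemma, is a standard tool and appears in the paper as Lemma~\ref{lem:buchbinder}; you do not need to reprove it.
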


The upper
bound of $(1-p)(1-1/e)$ is optimal when $p$ is close to $0$ or when
$p$ is close to $1$. The lower bound on the gap that we show,
$1-e^{-(1-p)}$, agrees nicely with the extremes, but we do not know
whether it is the right bound for all ranges of $p$ and leave it as
an interesting open problem.

We then consider the correlation gap considered by \cite{rs} with
respect to the function $f_{\max}$. We observe that known results on
the Measured Continuous Greedy (\mcg) algorithm \cite{mcg,rocrs}
show that  $F_{\max}(\bm{x}) \ge \frac{1}{e}f^+(\bm{x})$.

We strengthen this observation by considering the parameter $p$.

\begin{theorem}\label{thm:intro-gap-mcg3}
  Let $f:2^\cN \rightarrow \Rp$ be a non-negative submodular function
  and let $\bm{x} \in [0,1]^n$, where $n = |\cN|$. Let
  $p = \max_i x_i$. There exists a point $\bm{y} \in [0,1]^n$, where
  $\bm{y} \le \bm{x}$ (coordinate wise), such that
  $F(\bm{y}) \ge \max\{\frac{1}{e}, (1-p-\frac{1}{e}(1 + \ln(1-p)))\}
  f^+(\bm{x})$.
\end{theorem}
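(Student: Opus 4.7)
The plan is to construct $\bm{y}$ as a point on a measured-continuous-greedy trajectory starting from the origin, and derive the claimed bound from a linear ODE for $F(\bm{y}(t))$. Concretely, I consider the ODE $\frac{d y_i}{dt} = x_i\bigl(1 - y_i(t)\bigr)$ with $\bm{y}(0) = \bm{0}$, whose closed-form solution is $y_i(t) = 1 - e^{-x_i t}$. The elementary bound $1 - e^{-u} \le u$ gives $\bm{y}(t) \le \bm{x}$ for every $t \in [0, 1]$, so stopping the process at any $t^* \in [0,1]$ produces a candidate $\bm{y} = \bm{y}(t^*)$ satisfying the coordinate-wise constraint of the theorem. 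Moreover $\|\bm{y}(t)\|_{\infty} = 1 - e^{-pt}$, a quantity depending only on $p$ and $t$.

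The central analytic ingredient I would prove is the correlation-gap-style lemma: for non-negative submodular $f$, any $\bm{z} \in [0,1]^n$, and any $S \subseteq \cN$, one has $F(\bm{z} \vee \mathbf{1}_S) \ge (1 - \max_{i \notin S} z_i)\, f(S)$. I would prove it by applying the standard fact that the multilinear extension of a non-negative submodular function dominates its Lov\'asz extension, to the shifted submodular function $g(T) := f(S \cup T)$ on the ground set $\cN \setminus S$: the first interval in the Lov\'asz representation of $g$, evaluated at the restriction of $\bm{z}$, contributes exactly $(1 - \max_i z_i)\, g(\emptyset) = (1 - \max_i z_i)\, f(S)$.

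Combining these, fix a distribution $\cD^* = \{(\lambda_S, S)\}$ on subsets of $\cN$ with marginals $\bm{x}$ and $f^+(\bm{x}) = \sum_S \lambda_S f(S)$. The identity $\sum_{i \in S}(1 - y_i)\partial_i F(\bm{y}) \ge F(\bm{y} \vee \mathbf{1}_S) - F(\bm{y})$, obtained by telescoping $F(\bm{y} \vee \mathbf{1}_S) - F(\bm{y})$ along the elements of $S$ and invoking the submodular decreasing-marginals inequality, averaged with the weights $\lambda_S$ and combined with the key lemma, gives
\[
\frac{d}{dt}F(\bm{y}(t)) \;=\; \sum_{i} x_i(1 - y_i(t))\partial_i F(\bm{y}(t)) \;\ge\; (1 - \|\bm{y}(t)\|_{\infty})\, f^+(\bm{x}) - F(\bm{y}(t)) \;=\; e^{-pt}\,f^+(\bm{x}) - F(\bm{y}(t)).
\]
Writing $\phi(t) := F(\bm{y}(t))$ and $M := f^+(\bm{x})$, this is the linear ODE $\phi'(t) + \phi(t) \ge M e^{-pt}$ with $\phi(0) \ge 0$, which I would integrate in closed form. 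Optimizing the stop time $t^* \in [0,1]$ in the resulting bound yields the expression $1 - p - \tfrac{1}{e}(1 + \ln(1 - p))$; the $1/e$ branch of the $\max$ corresponds to the global minimum of this function (attained near the crossover $p = 1 - 1/e$) and is also immediately available as the standard MCG guarantee at $t^* = 1$.

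The main difficulty will be the final time-optimization step: identifying the optimal stop time $t^* = t^*(p)$, verifying $\bm{y}(t^*) \le \bm{x}$ in each regime of $p$, and matching the algebra of the integrated ODE with the specific closed form $1 - p - \tfrac{1}{e}(1 + \ln(1-p))$. A secondary subtlety is to apply the key lemma with ``$\max$ over $\cN \setminus S$'' rather than the cruder ``$\max$ over $\cN$'' to obtain the strongest bound at each time slice.
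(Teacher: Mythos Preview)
Your plan is sound and will prove the theorem, and it is close in spirit to the paper's argument: both set up a measured-continuous-greedy trajectory inside the box $\{\bm{y}:\bm{y}\le\bm{x}\}$, derive a linear ODE of the form $\phi'(t)+\phi(t)\ge c(t)\,f^+(\bm{x})$ via the lemma $F(\bm{y}\vee\1_S)\ge(1-\|\bm{y}\|_\infty)f(S)$, and integrate. The paper applies this by invoking its Theorem~\ref{thm:intro-gap-mcg2} (the \mcg analysis on a general polytope $\cP\subseteq p\cdot[0,1]^\cN$) at $b=1$: there the direction is optimized at every instant, and the only control on the trajectory is the crude bound $\|\bm{x}(b)\|_\infty\le\min\{1-e^{-b},p\}$, which produces $c(b)=\max\{e^{-b},1-p\}$ and, after integration, exactly the closed form $1-p-\tfrac{1}{e}(1+\ln(1-p))$. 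You instead fix the direction to be $\bm{x}$ itself, which makes the trajectory explicit, gives the sharper equality $\|\bm{y}(t)\|_\infty=1-e^{-pt}$, and hence $c(t)=e^{-pt}\ge\max\{e^{-t},1-p\}$. This is a cleaner and slightly stronger route; you avoid appealing to the full \mcg machinery.

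One correction to your roadmap: your ODE $\phi'+\phi\ge M e^{-pt}$ does \emph{not} integrate to the paper's expression. With integrating factor $e^t$ you get $\phi(t)\ge \frac{M}{1-p}\bigl(e^{-pt}-e^{-t}\bigr)$, which is increasing on $[0,1]$ (its derivative is $\tfrac{1}{1-p}(e^{-t}-pe^{-pt})>0$ there), so the optimum is $t^*=1$ and your bound is $\frac{e^{-p}-e^{-1}}{1-p}$. This is a different closed form, and in fact strictly larger than $1-p-\tfrac{1}{e}(1+\ln(1-p))$ for $p\in(0,1)$, because your $c(t)=e^{-pt}$ pointwise dominates the paper's $c(t)=\max\{e^{-t},1-p\}$. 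It also dominates $1/e$ for all $p\in[0,1)$, since $\frac{e^{-p}-e^{-1}}{1-p}\ge\frac{1}{e}$ is equivalent to $e^{1-p}\ge 2-p$, which holds by $e^q\ge1+q$. So you will not ``match the algebra'' with the stated formula; rather, you will obtain a stronger inequality from which the theorem follows a fortiori. Adjust the final step of your plan accordingly.
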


We obtain the preceding theorem as a corollary of the following more
technical theorem.

\begin{restatable}{theorem}{mcggap}\label{thm:intro-gap-mcg2}
  Let $p \in [0,1)$, $f$ be a non-negative submodular function
    with multilinear extension $F$ and $\cP$ be a downward-closed solvable polytope\footnote{Informally,
    a polytope $\cP$ is solvable if one can efficiently do linear optimization over it. A formal
    definition is given in Appendix~\ref{app:background2}.}
  on $\cN$, such that $\cP \subseteq p \cdot [0,1]^{\cN}$ (that is, if
  $\bm{z} \in \cP$ then $z_i \leq p$ for all $i \in \cN$).  Then, the
  output of the Measured Continuous Greedy (\mcg) algorithm on $F$ and $\cP$ at time $b \in [0,1]$ is a
  vector $\bm{x}(b) \in b \cdot \cP$ such that
\[
F(\bm{x}(b)) \geq \begin{cases}
b \cdot e^{-b} \cdot \max_{\bm{z} \in \cP} {f^+(\bm{z})},                               & \text{for } 0 \leq b \leq \ln{\prn{\frac{1}{1-p}}} \\
\prn{1 - p - e^{-b} \prn{1 + \ln{(1-p)}}} \cdot \max_{\bm{z} \in \cP} {f^+(\bm{z})},    & \text{for } \ln{\prn{\frac{1}{1-p}}} \leq b \leq 1.
\end{cases}
\]
\end{restatable}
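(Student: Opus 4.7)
The plan is to track $F$ along the \mcg trajectory $\{\bm{x}(t)\}_{t\in[0,b]}$ through a differential inequality and then integrate piecewise, splitting at the critical time $t_0 := \ln\prn{\frac{1}{1-p}}$ at which the two natural upper bounds on $\|\bm{x}(t)\|_\infty$ coincide. Recall that \mcg starts at $\bm{x}(0)=\bm{0}$ and updates $\frac{dx_i}{dt} = v_i(t)\prn{1-x_i(t)}$, where $\bm{v}(t)\in\cP$ maximizes $\sum_i v_i\prn{1-x_i(t)}\partial_i F(\bm{x}(t))$. Integrating the ODE componentwise gives $x_i(t)=1-\exp\prn{-\int_0^t v_i(s)\,ds}$, so $\bm{x}(t)\le \int_0^t \bm{v}(s)\,ds$ coordinatewise; since $\cP$ is convex and downward-closed this implies $\bm{x}(t)\in t\cdot\cP$, which establishes the membership claim $\bm{x}(b)\in b\cdot\cP$.

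To derive the key differential inequality, I would pick $\bm{z}^* \in \arg\max_{\bm{z}\in\cP} f^+(\bm{z})$ with an optimal convex decomposition $\bm{z}^*=\sum_S \lambda^*_S \mathbf{1}_S$ and use $\bm{z}^*$ as a feasible direction in the \mcg linear subproblem. Combining this with the identity $(1-x_i)\partial_i F(\bm{x}) = F(\bm{x}|_{x_i=1}) - F(\bm{x})$, submodularity (marginals only decrease as we fix coordinates to $1$), and the standard non-negative submodular lemma $F(\bm{x}\vee\mathbf{1}_S)\ge (1-\|\bm{x}\|_\infty) f(S)$ applied to each $S$ in the decomposition, yields
\[
\frac{dF(\bm{x}(t))}{dt} \;\ge\; \prn{1 - \|\bm{x}(t)\|_\infty}\, f^+(\bm{z}^*) \;-\; F(\bm{x}(t)).
\]
I then bound $\|\bm{x}(t)\|_\infty$ in two ways: the standard \mcg bound $\|\bm{x}(t)\|_\infty\le 1 - e^{-t}$ (from $v_i(s)\le 1$) and the refined bound $\|\bm{x}(t)\|_\infty\le p$ coming from $\bm{x}(t)\in t\cdot\cP \subseteq p\cdot[0,1]^\cN$ for $t \le 1$. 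These agree exactly at $t_0$, with the first tighter for $t\le t_0$ and the second tighter for $t\ge t_0$.

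For $t\le t_0$, the inequality $\frac{dF}{dt} + F \ge e^{-t} f^+(\bm{z}^*)$ integrates via the factor $e^t$ to $F(\bm{x}(b))\ge b\,e^{-b}\,f^+(\bm{z}^*)$, which is the first case, and at the transition gives $F(\bm{x}(t_0))\ge -(1-p)\ln(1-p)\,f^+(\bm{z}^*)$. For $b\ge t_0$, switching to the tighter bound $1-\|\bm{x}(t)\|_\infty\ge 1-p$ yields $\frac{dF}{dt} + F \ge (1-p) f^+(\bm{z}^*)$ on $[t_0,b]$; integrating this with the boundary value above and substituting $e^{t_0}=1/(1-p)$ produces, after elementary rearrangement, the claimed bound $\prn{1 - p - e^{-b}\prn{1 + \ln(1-p)}} f^+(\bm{z}^*)$. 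The main technical step is exactly this second ODE integration: one has to verify that the boundary contribution at $t_0$ combines with the integrated $(1-p)f^+$ term to give precisely the advertised constants. The rest is routine \mcg bookkeeping; the only novelty is exploiting the hypothesis $\cP\subseteq p\cdot[0,1]^\cN$ to get the sharper late-regime bound $\|\bm{x}(t)\|_\infty\le p$.
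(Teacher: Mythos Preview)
Your proposal is correct and follows essentially the same route as the paper: both derive the differential inequality $\frac{dF(\bm{x}(t))}{dt} \ge \max\{1-p,\,e^{-t}\}\,f^+(\bm{z}^*) - F(\bm{x}(t))$ via the lemma $F(\bm{x}\vee\mathbf{1}_S)\ge (1-\|\bm{x}\|_\infty)f(S)$ applied to a convex decomposition of $\bm{z}^*$, then integrate piecewise around $t_0=\ln\frac{1}{1-p}$ with the integrating factor $e^t$. The only cosmetic difference is that the paper builds an auxiliary direction $\hat{\bm d}=\sum_S\alpha_S\bm d'_S$ (zeroing out negative-gradient coordinates) to compare against the \mcg direction, whereas you plug $\bm{z}^*$ directly into the \mcg linear subproblem and use DR-submodularity to telescope $\sum_{i\in S}(1-x_i)\partial_iF(\bm{x})\ge F(\bm{x}\vee\mathbf{1}_S)-F(\bm{x})$; both lead to the same inequality and the same integration.
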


Theorems~\ref{thm:intro-gap} and \ref{thm:intro-gap-mcg2} are
useful when one has a situation where $p$ is already small and we will
see later that this can indeed be achieved in some cases, such as in
the \spi problem, via a reduction.

\paragraph{Submodular Prophet Inequality:} For \spi we follow the high
level framework of \cite{rs} via the correlation gap and greedy Online
Contention Resolution Schemes (OCRSs) \footnote{The OCRSs that we will
need in this paper are greedy OCRSs. We will abuse notation and omit
greedy for the most part.} \cite{ocrs} (see Appendix~\ref{app:background2}
for some background and formal definitions). Our main contributions
are several technical improvements and refinements that lead to significantly
improved constants and, we believe, clarity on the parameters that
affect the constants. The competitive ratio that we achieve for a
particular constraint family is dictated by the OCRS available for
that family. Roughly speaking, an OCRS for a constraint family is an
online rounding scheme for a given polyhedral relaxation of the
constraints. The approximation quality of the OCRS is governed by two
parameters $b,c \in [0,1]$ via the notion of $(b,c)$-selectability.
For matroids there is $(b,1-b)$-selectable OCRS for any $b \in [0,1]$,
while there is a $(b,e^{-2b})$-OCRS for matching constraints and a
$\prn{1- \frac{t}{\sqrt{k}}, 1- \exp(\frac{-t^2}{4})}$-OCRS for
the special case of a uniform matroid of rank $k$ (picking at most $k$
elements); see \cite{ocrs}\footnote{For the uniform matroid of rank
  $k$ the OCRS we claim is not in \cite{ocrs} but it is easy to derive
and was explicitly done in an unpublished senior thesis \cite{Kalhan-thesis}.}.

\begin{theorem}[{\bf Informal}]
  \label{thm:sps-informal}
  For the \spi problem with a monotone function $f$ over a constraint
  family with a $(b,c)$-selectable OCRS, there is a
  $c \cdot (e^{-b}-\eps) (1-e^{-b})$-competitive algorithm for any fixed
  $\eps > 0$. For non-negative submodular functions there is a
  $\frac{c}{4} \cdot (e^{-b}-\eps) (1-e^{-b})$-competitive algorithm for
  any fixed $\eps > 0$. These competitive ratios can be achieved by an
  efficient randomized polynomial time algorithm, assuming value oracle access
  to $f$ and efficiency of the corresponding OCRS.
\end{theorem}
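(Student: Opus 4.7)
The plan is to follow the high-level approach of Rubinstein and Singla: set up a convex relaxation on the outcome universe $\cU$, use Measured Continuous Greedy (MCG) to convert the fractional optimum to a product-like fractional solution, and round online via a $(b,c)$-selectable OCRS. The new ingredients are the sharper correlation-gap bounds of Theorems~\ref{thm:intro-gap} and~\ref{thm:intro-gap-mcg2} and a careful lifting of the day-level OCRS on $P_{\cI} \subseteq [0,1]^n$ to the outcome universe $\cU$.

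Define $\bm{x}^* \in [0,1]^{\cU}$ by $x^*_u = \Pr[u \in \opt]$, where $\opt$ is the prophet's random choice. By the definition of the concave closure, $V^* \leq f^+(\bm{x}^*)$, while the day-aggregates $y^*_i = \sum_{u \in \cU_i} x^*_u$ satisfy $y^*_i \leq 1$ and lie in $P_{\cI}$. I would (approximately, within additive $\eps$) solve $\max f^+(\bm{x})$ subject to these constraints to obtain $\hat{\bm{x}}$, and then run MCG for time $b$ on the lifted polytope to obtain $\bm{z}$ whose day-aggregates lie in $b \cdot P_{\cI}$, with $\|\bm{z}\|_\infty \leq 1 - e^{-b}$ and $F(\bm{z}) \geq (1-e^{-b}) f^+(\hat{\bm{x}})$ in the monotone case (resp.\ an analogous bound via Theorem~\ref{thm:intro-gap-mcg2} in the non-monotone case).

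The online algorithm proceeds as follows: on day $i$, upon observing $X_i = u$, tentatively activate day $i$ with probability $z_u / \Pr[X_i = u]$. Since $X_1, \ldots, X_n$ are independent, the tentative-activation indicators on $[n]$ are also independent, with marginals $z_i \leq b$, so the $(b,c)$-OCRS may be applied to enforce feasibility in $\cI$. By selectability, each $u$ is accepted with marginal probability at least $c \cdot z_u$. To extract the competitive ratio, I would invoke a submodular correlation inequality (for monotone $f$: $\E[f(S)] \geq F(\bm{w})$ when marginals of $S$ componentwise dominate $\bm{w}$; for non-monotone $f$, a weaker analogue with an extra factor of $1/4$) together with Theorem~\ref{thm:intro-gap} applied at an appropriate scaled point to relate $F$ back to $f^+$. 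Chaining the resulting factors yields the claimed ratios $c(e^{-b}-\eps)(1-e^{-b})$ and $\frac{c}{4}(e^{-b}-\eps)(1-e^{-b})$, where the $(e^{-b}-\eps)$ captures the correlation-gap contribution evaluated at max-coordinate $1 - e^{-b}$ plus a small slack from approximate LP solution and MCG discretization.

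The principal obstacle is the two-level coupling between the OCRS (operating on $[n]$) and the submodular objective (living on $\cU$); the resolving observation is that independence of the $X_i$ makes the tentative-activation indicators a product distribution on $[n]$, exactly matching the OCRS hypothesis. Polynomial-time efficiency then follows from: (i) approximate solvability of the convex relaxation via standard convex-optimization machinery, (ii) polynomial-time MCG given value-oracle access to $f$, and (iii) the assumed efficiency of the $(b,c)$-OCRS.
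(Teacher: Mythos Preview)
Your proposal has the right high-level architecture (relax, run MCG, round via the day-level OCRS), but the rounding analysis contains a genuine gap and misidentifies where the factor $e^{-b}$ comes from.

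The step ``invoke a submodular correlation inequality: for monotone $f$, $\E[f(S)] \ge F(\bm{w})$ when marginals of $S$ componentwise dominate $\bm{w}$'' is false. Marginal domination alone gives no such bound for submodular $f$; positive correlation among accepted elements (which an OCRS can and does create, since its acceptance of day $i$ depends on the full active set) can drive $\E[f(S)]$ well below $F$ at the marginals. This is precisely why the CRS framework needs \emph{monotonicity} of the scheme, not just per-element balance. Your algorithm activates day $i$ with probability $z_u/\Pr[X_i=u]$ and then feeds the day indicators to the OCRS; the accepted set on $\cU$ is not the output of any monotone CRS applied to a product sample on $\cU$, so the standard CRS-to-$F$ bound does not apply.

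The paper closes this gap with three ingredients you are missing. First, a reduction (Observation~\ref{obs:reduction}) that makes every $\cD(e)\le\eps$, hence every $z_e\le\eps$. Second, after seeing the realization $e\in\cU_i$, the algorithm does not merely flip a day-activation coin: it \emph{simulates a full product sample} $T_i\subseteq\cU_i$ with marginals $\bm{z}_i$ (Lemma~\ref{lem:ti-indep}), so that $Q=\bigcup_i T_i$ is a genuine product sample on $\cU$ with marginals $\bm{z}$. Day $i$ is declared active iff $T_i\neq\emptyset$, and the realized $e$ is taken only when $T_i=\{e\}$ and the OCRS accepts $i$. Third, the analysis builds a \emph{monotone} CRS $\pi'$ on $\cU$ from the characteristic CRS $\bar\pi$ of the greedy OCRS (Lemma~\ref{lem:new-crs}); monotonicity of $\bar\pi$ transfers to $\pi'$, and $\pi'$ is $(b,c\gamma)$-balanced with $\gamma=\min_i\prod_{e\in\cU_i}(1-z_e)$. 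The factor $e^{-b}-\eps$ in the competitive ratio is exactly this $\gamma$: after the small-probability reduction, $\gamma\ge e^{-\sum_{e\in\cU_i}z_e}-\eps\ge e^{-b}-\eps$. It is \emph{not} a correlation-gap term evaluated at max-coordinate $1-e^{-b}$; the correlation-gap/MCG analysis contributes only the separate $(1-e^{-b})$ factor relating $F(\bm{z})$ to $\opt$ (Lemma~\ref{lem:mon-cont-greedy} and, via the reduction, Remark~\ref{rmk:refined-mcg} for the non-monotone case).
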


So far we have avoided mentioning the power of the adversary in
choosing the order of the variables. Our results hold in the setting
of an almighty adversary who can adaptively decide the ordering of the
variables based on the full realization of all the variables and the
choices of the algorithm at each step. We note that the
competitive ratios we obtain are explicit and relatively small. We
summarize our concrete competitive ratios for several constraints of
interest below. OCRSs for constraints can be composed nicely (similar
to CRSs) and our black box reduction is hence very useful.

\begin{table*}[!h]
\begin{small}
\begin{center}
\begin{tabular}{|c|c|c|}
\hline
{Feasibility constraint} & \multicolumn{2}{|c|}{Competitive Ratio} \\
& {Monotone Submodular} & {General Submodular} \\
\hline
\hline
Uniform matroid of rank $k \rightarrow \infty$ & $1/4.3$ & $1/17.2$ \\
\hline
Matroid & $1/7.4$ & $1/30$ \\
\hline
Matching & $1/9.5$ & $1/38$ \\
\hline
  Knapsack & $1/17.5$ & $1/70$ \\
  \hline
  Intersection of $k$ matroids & $\Omega(1/k)$ & $\Omega(1/k)$\\
\hline
\end{tabular}
\caption{A summary of our results for several feasibility constraints.}
\label{tab:concrete-results}
\end{center}
\end{small}
\end{table*}

\subsection{Brief overview of technical ideas}\label{sec:brief-overview}
\emph{Correlation gap:} The correlation gap for monotone functions
\cite{ccpv07,vondrak} used a continuous time argument by relating
$F(\bm{x})$ and $f^+(\bm{x})$ via another continuous extension $f^*$
and this was the same approach followed in \cite{rs}. We take a
different approach. For the exact correlation gap in
Theorem~\ref{thm:intro-gap} we build on a proof for the monotone case
from \cite{crs} which is less well-known; we adapt their proof for the
non-negative case via the parameter $p$. As we remarked, another
approach to bound the correlation gap is via properties of the
\mcg algorithm. The original papers on
the \mcg algorithm \cite{ccpv,mcg} related the quality of
the output to that of the multilinear relaxation. It was only observed
later, motivated by stochastic optimization, that the guarantees of
these algorithms are stronger and can be shown with respect to the
optimum concave closure (see \cite{ward,rocrs}). We build on these
to derive Theorem~\ref{thm:intro-gap-mcg3}.

\smallskip \noindent \emph{\spi:} We follow the high-level approach of
\cite{rs} but make several technical improvements that we briefly
describe. The approach in \cite{rs} is to obtain a fractional solution
offline and then round it online via a greedy OCRS. There are three
technical ingredients.  First is the correlation gap; we already
discussed our improvements on this issue. Second is the fact that the
distribution over $\cU_i$ on each day is a \emph{correlated}
distribution. Rubinstein and Singla essentially show that the limited
correlation in the distribution on each day can be approximately
simulated by a product distribution over $\cU_i$. In this process one
loses large constant factors. We show a simple reduction that reduces
the original \spi instance to another one in which the probability of
any item in $\cU_i$ being realized can be made arbitrarily small. This
allows us to substantially improve the constants in several
interrelated ways.  Third, \cite{rs} rely on the greedy OCRS schemes
from \cite{ocrs} in order to round the fractional solution. As we
remarked, the constraints are on the days/variables while the
objective is defined over $\cU$. In \cite{rs} the authors implicitly
use the fact that the derived constraint on $\cU$ is still a matroid
constraint, if the original constraint on $\cN$ is a matroid. However,
this does not generalize to other constraints. One way to handle this
is to obtain a new OCRS for the derived constraint over $\cU$ from that
on $\cN$. This would lead to technical difficulties and also lose further
constants. In this paper we overcome this difficulty by using the greedy
OCRS for $(\cN,\cI)$ in a black box fashion. We obtain a clean algorithm
that works for any constraint on $\cN$ that admits a greedy OCRS. Our
analysis relies on opening up the internal properties of a greedy OCRS
from \cite{ocrs}.

\subsection{Other related work}
We already referred to recent surveys on prophet inequalities and
secretary problems and related models \cite{Lucier-survey, Correa-survey, GuptaS-survey, Dinitz-survey}.
An older survey on prophet inequalities from a stopping theory point
of view is due to Hill and Kertz \cite{hill-kertz-survey}. The work
here is connected to submodular optimization, stochastic optimization,
online algorithms, and mechanism design which have extensive literature.
It is infeasible to describe all the related work; Singla's thesis
\cite{singla-thesis} touches upon several of these themes and has several
pointers. Contention Resolution Schemes (CRSs) have found many applications
since their introduction \cite{crs}; in fact Bayesian mechanism design
and posted price mechanisms \cite{ChawlaHMS} and subesequent work
by Yan \cite{yan}, connecting mechanism design with the correlation gap,
played an important role in \cite{crs}. Online CRSs were developed
\cite{ocrs} with applications to Bayesian mechanism design as one of
the main motivations and they yield prophet inequalities in the
modular case. Random order CRSs were introduced in \cite{rocrs} and
yield improved bounds when the arrival order is random. We discuss a
variant of the \spi problem for this setting in more detail in
Appendix~\ref{app:sps}. Random Order CRSs found several applications
including to \emph{(submodular) stochastic probing} which has been
extensively studied in the past several years \cite{ward, gupta1, gupta2, gupta3, singla, asad-nazer}.
There are some high-level connections between submodular stochastic
probing and the \spi model, and it is not too surprising that
continuous extensions and CRSs play a role in prior work and ours.
However the probing model requires items to be chosen if examined, and
thus differs from the prophet inequality model, in which items are allowed
to be examined before making the decision. As we remarked already, the
\spi model introduces a particular nuance due to correlations which is
absent in prior models.

Submodular functions and constraints that we consider
such as cardinality, matroids, and others
provide generality and computational tractability. It is
possible to go beyond and consider more general objective
functions such as subadditive and monotone XOS functions, as
well as more complex and general independence constraints.
In such settings one can ignore computational considerations
and focus on the online competitive ratio or assume access to a
demand oracle (even though a demand oracle may be NP-Hard in general).
We refer to \cite{rubin, rs} for some recent work and pointers. Such
functions have also been considered under the related
model of ``combinatorial auctions'' \cite{Dutting2, DuttingKL, AssadiKS},
in which a seller wants to sell distinct items to buyers that have
combinatorial valuation functions for the items. The seller wishes
to maximize either the social welfare or the revenue. In this model,
Dutting, Feldman, Kesselheim and Lucier \cite{Dutting2} obtained a
$2$ prophet inequality for submodular functions, while Dutting,
Kesselheim and Lucier \cite{DuttingKL} obtained a $O(\log \log m)$
prophet inequality for subadditive functions. For the latter, the
authors also show that achieving a constant factor prophet inequality
for subadditive valuation functions is impossible via their techniques
and requires a different approach.

\paragraph{Organization:} Section~\ref{sec:preliminaries} sets up technical
preliminaries on submodular functions and introduces our notation.
Section~\ref{sec:spi} describes the relaxation of the prophet's objective.
Section~\ref{sec:rounding} describes the algorithm and analysis for \spi.
We describe some open questions in Section~\ref{sec:conclusions}.

For several lemmas on submodular functions and sampling used throughout
the paper, see Appendix~\ref{app:background1}. For background information
on contention resolution schemes, see Appendix~\ref{app:background2}.
The proofs of Theorems~\ref{thm:intro-gap} and \ref{thm:intro-gap-mcg2}
have been moved to Appendix~\ref{app:correlation-gap} due to space
constraints. The reduction to small probabilities can be found in
Appendix~\ref{app:reduction}. Appendix~\ref{app:sps} contains a discussion
on a variant of the \spi problem, the \sps problem, in which the random
variables arrive uniformly at random instead of in adversarial order.

\section{Preliminaries}\label{sec:preliminaries}

Let $\cN$ be a finite ground set. A real-valued set function
$f : \{0, 1\}^\cN \to \Rp$ is called {\em submodular} if, for all
$A, B \subseteq \cN$, it satisfies
$f(A) + f(B) \geq f(A \cup B) + f(A \cap B)$. $f$ is {\em
  monotone} if $f(A) \leq f(B)$ for all $A \subseteq B$.
In the rest of this paper we work with non-negative normalized functions that
satisfy $f(\emptyset) = 0$ and $f(A) \geq 0$
for all $A \subseteq \cN$. We often equate $\cN$ with $[n] =
\{1,2,\ldots,n\}$. We use the terminology $S+i$ and $S-i$
as shorthands for $S \cup \{i\}$ and $S \setminus \{i\}$  respectively.
The following continuous extensions of submodular functions to
$[0,1]^\cN$ play an important role in our discussion.

\begin{definition}[Multilinear Extension]
  Let $f : {\{0, 1\}}^\cN \to \R_{\geq 0}$. For any
  $\bm{x} \in {[0, 1]}^n$, let $S \sim \bm{x}$ denote a random set $S$
  that contains each element $i \in \cN$ independently w.p. $x_i$. The
  {\em multilinear extension} of $f$ is defined as
\[
  F(\bm{x}) \coloneqq \E_{S \sim \bm{x}}[f(S)] = \sum_{S \subseteq N}
  {f(S) \prod_{i \in S} {x_i} \prod_{i \notin S} {(1 - x_i)}}.
\]
\end{definition}

\begin{definition}[Concave Closure]
  Let $f : {\{0, 1\}}^\cN \to \R_{\geq 0}$. Moreover, let $\1_S$ denote
  the characteristic vector of length $n = |\cN|$. For any $\bm{x} \in {[0, 1]}^n$,
  the {\em concave closure} of $f$ is defined as
\[
f^+(\bm{x}) \coloneqq \max_{\bm{a} \in [0,1]^{2^\cN}}\left\{\sum_{S \subseteq \cN} {a_S
    f(S)} \midd \sum_{S \subseteq \cN} {a_S} = 1, \sum_{S \subseteq \cN} {a_S \1_S} = \bm{x} \right\}.
\]
\end{definition}

Recall that $f^+(\bm{x})$ can be interpreted as the maximum expected
value of $f(R)$ where $R$ is generated by a distribution whose
marginal values are given by  $\bm{x}$. Since $F(\bm{x})$ corresponds
to the product distribution defined by $\bm{x}$, which is a specific
distribution, it follows that $F(\bm{x}) \le f^+(\bm{x})$
for all $\bm{x}$.

We also need the following notation, which will be useful in our analysis
when dealing with the input constraints.

\begin{definition}[Blowup of a Ground Set]
  Let $\cN$ denote a finite set. Suppose for each $e \in \cN$
  there is an associated finite non-empty set $A_e$
  such that the sets $A_e, e \in \cN$ are mutually disjoint. Let
  $\mathcal{A} = \{A_e \mid e \in N\}$. We call
  $\cN' = \bigcup_{e \in N} A_e$ the {\em blowup} of $\cN$ by
  $\mathcal{A}$.
\end{definition}

Recall that each day only one item from $\cU_i$ is realized and
this motivates the following definition of a constraint family.

\begin{definition}[Partition Extension of a Constraint]
  Let $\cC = (\cN, \cI)$ be a downward-closed constraint family over
  $\cN$. Consider a blowup $\cN'$ of $\cN$ induced by sets
  $A_e, e \in \cN$. Consider the function $g: \cN' \rightarrow \cN$
  where $g(e') =  e$ if and only if $e' \in A_e$. The {\em
    partition extension} of $\cC$, denoted by $\cC'$, is a
  constraint family $(\cN', \cI')$ where
  \[
    \cI'_{A} = \set{S \subseteq \cN' \midd g(S) \in \cI \text{ and }
      \forall e \in \cN, \: |A_e \cap S| \le 1}.
  \]
\end{definition}

\section{Submodular Prophet Inequality Problem}\label{sec:spi}

In the {\em Submodular Prophet Inequality (SPI)} problem, we are given
$n$ random variables $X_1, \dots, X_n$ following (known) distributions
$D_1, \dots, D_n$, along with a constraint $\cC$ on
$\cN = \{1, 2, \dots, n\}$. The random variables are arranged in
adversarial (worst-case) order. Let $\cU_i$ denote the image (range) of
$X_i$, and $\cI$ denote the independent sets of $\cC$.

The online algorithm starts with a set $S = \emptyset$ of selected
elements and a set $Z = \emptyset$ of selected days. At the $i$-th
time step, it is presented with the realization $e \in \cU_i$ of
$X_i$. At that moment, it has to decide irrevocably whether to include
$e$ in $S$ (and hence $i$ in $Z$) or not, subject to $Z$ remaining
independent in $\cC$. The algorithm is also given a non-negative
submodular function $f : \cU \to \Rp$, where
$\cU \triangleq \bigcup_{i = 1}^n {\cU_i}$. The algorithm's objective is to maximize
$f(S)$, subject to $Z$ being independent in $\cC$.

In this model, we are comparing against the almighty adversary who
already knows all realizations and can adaptively change the order in
which to reveal the random variables based on the algorithm's actions
so far and also the random coins it uses (if the algorithm is
randomized). The prophet/adversary will select the best possible set
$S^*$ according to the constraints with knowledge of the
realizations. Thus, we compare the expected value of the online algorithm
against the expected value of the prophet, which is
\[
\opt = \E_{\bm{X}}\left[\max_{T \in \cI} f\left(\left\{X_i \midd i \in T\right\}\right)\right].
\]

Later, we will use an OCRS to round the fractional solution we obtain
in this section. Since $f$ is defined over $\cU$ but the constraint
given is over $\cN$, the days, we cannot immediately apply an OCRS for
rounding. To overcome this issue, we view $\cU$ as the blowup of $\cN$
with respect to $\set{\cU_i}_{i = 1}^n$. On each day $i$, only one
element arrives. Therefore, our input constraint $\cC$ is equivalent
to a new constraint $\cC'$ on $\cU$, where we are allowed to pick only
one element from each day. Notice that this is exactly the partition
extension $\cC'=(\cU,\cI')$ of $\cC$.

We also denote $\bm{X} = \set{X_1, X_2, \dots, X_n}$ and
$\bm{D} = \set{D_1, D_2, \dots, D_n}$. For an element $e \in \cU_i$ we let
$D_i(e)$ denote the probability of $e$ being realized; we also use the
notation $\cD(e)$ to denote the probability of $e \in \cU$ when we do
not need to specify the part it belongs to. Note that the elements
within $\cU_i$ are correlated and hence we do not have a product
distribution on $\cU$.

\paragraph{Algorithmic approach:} Following the description in
Section~\ref{sec:intro}, we design an online algorithm following the
general approach of \cite{rs} but with technical differences. First,
we obtain a relaxation of the prophet's objective. Afterwards, to design
an online algorithm, we obtain an \emph{offline} fractional point $\bm{z}$
based on the input, and round it \emph{online} using a greedy OCRS and other
tools. In this section, we describe the relaxation of the prophet's objective
and how to obtain an offline fractional point $\bm{z}$. The process of
rounding $\bm{z}$ online using a greedy OCRS is presented in
Section~\ref{sec:rounding}.

Before we proceed, we describe a simple but technically important
reduction that allows us to obtain improved bounds.

\begin{observation}[Reduction to small probabilities]\label{obs:reduction}
  Let $I = (\cN, \cU, \bm{D}, \bm{X}, f, \cC)$ be an instance of the
  Submodular Prophet Inequality problem. For every fixed $\eps > 0$,
  there is a reduction of $I$ to another instance
  $I' = (\cN, \cU', \bm{D'}, \bm{Y}, g, \cC)$ of the SPI problem such
  that that (i) for all $e \in \cU'$, $\bm{D'}(e) \le \eps$ and (ii)
  there exists an $\alpha$-competitive algorithm for $I$ if and only
  if there exists an $\alpha$-competitive algorithm for $I'$.
\end{observation}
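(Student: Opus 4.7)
\emph{Plan.} My plan is to obtain $I'$ from $I$ by \emph{subdividing} every realization whose probability exceeds $\eps$ into many identical low-probability copies, keeping $\cN$ and $\cC$ fixed. The new objective $g$ will be defined as $f$ pulled back along the natural projection $\phi : \cU' \to \cU$ that sends each copy to its original, and a coupling $\phi(Y_i) = X_i$ will then identify the stochastic behavior of $I'$ and $I$ pointwise, so that algorithms and the prophet's value translate losslessly between the two instances.

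\emph{Construction and correspondence.} Concretely, for each $i \in [n]$ and each $e \in \cU_i$ with $\bm{D}(e) > \eps$, I replace $e$ by $k_e := \lceil \bm{D}(e)/\eps \rceil$ copies $e^{(1)}, \ldots, e^{(k_e)}$, each with probability $\bm{D}(e)/k_e \leq \eps$; atoms already satisfying $\bm{D}(e) \leq \eps$ are kept as single copies. Let $\cU_i'$ be day $i$'s new universe, $\cU' = \bigcup_i \cU_i'$, and $\phi(e^{(j)}) := e$; define $Y_i$ to take value $e^{(j)}$ with probability $\bm{D}(e)/k_e$ (so $\phi(Y_i) \sim D_i$) and set $g(S) := f(\phi(S))$. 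Property (i) is then immediate. For (ii) I couple $\phi(Y_i) = X_i$ almost surely, which gives for every $T \in \cI$,
\[
  g\bigl(\{Y_i : i \in T\}\bigr) = f\bigl(\phi(\{Y_i : i \in T\})\bigr) = f\bigl(\{X_i : i \in T\}\bigr),
\]
and therefore $\opt_{I'} = \opt_I$. Given an $\alpha$-competitive algorithm $\mathrm{ALG}$ for $I$, I simulate an algorithm $\mathrm{ALG}'$ for $I'$ by feeding $\phi(Y_i)$ to $\mathrm{ALG}$ on day $i$ and mirroring its accept/reject decisions; conversely, given $\mathrm{ALG}'$ for $I'$, I simulate $\mathrm{ALG}$ for $I$ by, upon observing $X_i = e$, sampling $Y_i$ uniformly from the copies of $e$ and feeding to $\mathrm{ALG}'$. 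In either direction the sets of accepted days agree, and the achieved value $f(\{X_i : i \in Z\}) = g(\{Y_i : i \in Z\})$ coincides under the coupling, so the competitive ratio is transferred exactly.

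\emph{Main obstacle.} The only step requiring real care will be verifying that $g$ is a genuine non-negative submodular function on the \emph{entire} power set $2^{\cU'}$, as the SPI framework formally demands---not merely on partition-feasible subsets, which are the only ones ever evaluated by the algorithm or the prophet and on which $\phi$ is injective, so submodularity of $g$ inherits trivially from that of $f$. For monotone $f$ a direct case check extends submodularity everywhere. For non-monotone $f$, however, the naive $g = f \circ \phi$ can violate submodularity whenever two partition-feasible sets $A, B$ contain different copies of a common original: their union is no longer partition-feasible, the marginal of adding a duplicate copy is zero on the larger side, but on the smaller side it equals $f(\phi(A) \cup \{\phi(x)\}) - f(\phi(A))$, which can be negative under non-monotonicity. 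I would resolve this by carefully refining the definition of $g$ on non-partition-feasible subsets---e.g., by collapsing duplicate copies in a canonical manner before applying $f$, or by symmetrizing over random copy-assignments---chosen so that submodularity holds on all of $2^{\cU'}$ while the values on partition-feasible sets are preserved, keeping the coupling argument of the previous paragraph intact.
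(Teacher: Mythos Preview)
Your construction and coupling argument are exactly the paper's: split each high-probability element $e$ into $\lceil 1/\eps\rceil$ equiprobable copies, set $g(A)=f((A\setminus S_e)\cup\{e\})$ whenever $A$ meets the copy set $S_e$ (this is your $f\circ\phi$), and transfer algorithms via the coupling $\phi(Y_i)=X_i$. The paper's sketch is essentially your first two paragraphs.

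Where you go further is in flagging the submodularity of $g$ on all of $2^{\cU'}$, which the paper dismisses as ``easy to verify.'' You are right to worry, and in fact the paper's assertion is false for non-monotone $f$: with $f(\emptyset)=f(\{u,v\})=0$, $f(\{u\})=f(\{v\})=1$ and $v$ split into $v_1,v_2$, the marginal of $v_2$ at $\{u\}$ is $-1$ while at the superset $\{u,v_1\}$ it is $0$, violating diminishing returns. So the gap you identify is genuinely present in the paper's sketch as well. Your proposed remedies, however, are not yet fixes: ``collapsing duplicate copies in a canonical manner before applying $f$'' \emph{is} $f\circ\phi$ and hence still fails, and the symmetrization idea is not made precise. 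Closing this cleanly for non-monotone $f$ requires more than either your sketch or the paper's provides.
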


\begin{remark}
  The reduction's simplicity may make the reader wonder why it is
  useful in achieving improved bounds. The reason is a combination of
  the model and the power of submodularity. The fact that we can only pick a single random
  variable from each day allows us to make copies of the elements, and
  we can use a derived submodular function to treat the
  copies as a single element.
\end{remark}

We describe the reduction in Appendix~\ref{app:reduction}, but only sketch
its correctness since it is rather simple and easy to see, though tedious
to formally prove. The reduction to small probabilities allows us to use
improved correlation gaps, as well as obtain better bounds in the rounding
algorithm.

\subsection{An upper bound on the prophet's value:}

Let $\cP$ denote a solvable polyhedral relaxation of $\cC$ \footnote{For
some background on polyhedral relaxations see Appendix~\ref{app:background2}}.
Then one can easily develop a solvable polyhedral relaxation of $\cC'$ as
follows:
\[
  \cP' = \set{\bm{y} \in [0,1]^{|\cU|} \midd \sum_{e \in \cU_i} y_e = x_i, \,\,\, i \in
  [n], \bm{x} \in \cP}.
\]

Consider any algorithm, including an offline algorithm, that computes a
feasible output given the realizations of the random variables.
For any fixed algorithm $\cA$ (deterministic or
randomized) we have a probability $p_{\cA}(e)$ for each $e \in \cU$
appearing in the output of $\cA$. Since an element $e \in \cU$ is
realized with probability $\cD(e)$, $e$ cannot appear in the output
of $\cA$ with probability more than $\cD(e)$. Moreover, for a given
realization, each output of the algorithm is feasible. Putting these
facts together we obtain the following observation.

\begin{observation}
  Let $\cA$ be any online or offline algorithm for a given instance of
  the  problem. Let $p_{\cA}(e)$ denote the probability that $e$ is in the
  output of $\cA$. Then the vector $\bm{p}$ is in the polytope
\[
\cP'' = \set{ \bm{z} \in [0,1]^{|\cU|} \midd \bm{z} \in \cP', z_e \le
  \cD(e) \,\, e \in \cU}.
\]
\end{observation}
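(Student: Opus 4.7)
The plan is to verify the two defining constraints of $\cP''$ separately: the per-element cap $z_e \le \cD(e)$ and the membership $\bm{p} \in \cP'$. The former is essentially immediate and the latter reduces to a standard convex-combination argument; the only subtle point is correctly translating element-level marginals into day-level marginals so the partition extension machinery applies cleanly.

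First, for the cap $p_{\cA}(e) \le \cD(e)$, I would argue that element $e \in \cU_i$ can appear in $\cA$'s output only in realizations where $X_i = e$, since on day $i$ exactly one element of $\cU_i$ is realized and $\cA$ is forbidden to select elements that did not arrive. Taking probabilities (over both the realizations and $\cA$'s internal randomness) yields $p_{\cA}(e) \le \Pr[X_i = e] = \cD(e)$.

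Next, to show $\bm{p} \in \cP'$, I would define the day-level marginals $x_i := \sum_{e \in \cU_i} p_{\cA}(e)$ and argue $x_i = \Pr[i \in Z_{\cA}]$, where $Z_{\cA}$ denotes the random set of days on which $\cA$ selects an element. The key observation is that $|S_{\cA} \cap \cU_i| \le 1$ holds pointwise (only one element of $\cU_i$ is ever realized), so the events $\{e \in S_{\cA}\}$ for $e \in \cU_i$ are disjoint, giving $\Pr[i \in Z_{\cA}] = \sum_{e \in \cU_i} \Pr[e \in S_{\cA}]$. By the problem's feasibility requirement, $Z_{\cA} \in \cI$ with probability one, so the vector $\bm{x} = (\Pr[i \in Z_{\cA}])_{i \in [n]}$ is a convex combination of characteristic vectors of independent sets of $\cC$. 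Since $\cP$ is a polyhedral relaxation of $\cC$, it contains the convex hull of $\{\1_I : I \in \cI\}$, hence $\bm{x} \in \cP$. Together with the defining equations $\sum_{e \in \cU_i} p_{\cA}(e) = x_i$, this places $\bm{p}$ in $\cP'$.

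Combining the two facts gives $\bm{p} \in \cP''$. I do not anticipate a major obstacle: the only place one needs to be careful is in not conflating the element marginals $p_{\cA}(e)$ with the day marginals $x_i$, and in invoking the correct (convex-hull-containing) interpretation of ``polyhedral relaxation'' for the family $\cC$. If one wanted to avoid appealing to the convex hull, an equivalent route would be to condition on the realization $\bm{X}$ and on $\cA$'s randomness, observe that in each such outcome $\1_{Z_{\cA}} \in \cP$, and then take the expectation to conclude $\bm{x} = \E[\1_{Z_{\cA}}] \in \cP$ by convexity of $\cP$.
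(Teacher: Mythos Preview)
Your proposal is correct and follows the same reasoning the paper gives in the paragraph immediately preceding the observation: the bound $p_{\cA}(e)\le \cD(e)$ because $e$ can only be output when it is realized, and $\bm{p}\in\cP'$ because each realized output is feasible in $\cC'$, so the marginal vector lies in the convex hull of feasible characteristic vectors and hence in the relaxation. Your write-up simply makes explicit the day-level marginal computation and the convexity step that the paper leaves implicit.
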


We are now ready to proceed with the relaxation of the prophet's
objective.

\begin{claim}\label{clm:fmax-opt}
Consider an instance of the Submodular Prophet Inequality problem. Then
\[
\max_{\bm{z} \in \cP''} {f^+(\bm{z})} \ge \opt.
\]
\end{claim}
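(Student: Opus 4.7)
The plan is to exhibit a specific point $\bm{z}^* \in \cP''$ whose value under $f^+$ is at least $\opt$. The natural candidate is the marginal vector of the prophet's own (offline) strategy. Concretely, I would let $\cA^*$ be the prophet's optimal offline algorithm: for every realization $\bm{x}$ of $\bm{X}$, $\cA^*$ picks a set $T^*(\bm{x}) \in \cI$ attaining $\max_{T \in \cI} f(\{x_i : i \in T\})$, and outputs $S^*(\bm{x}) = \{x_i : i \in T^*(\bm{x})\} \subseteq \cU$. Define $\bm{z}^* \in [0,1]^{|\cU|}$ by $z^*_e = \Pr_{\bm{X}}[e \in S^*(\bm{X})]$ for each $e \in \cU$.

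Next, I would invoke the observation stated just before the claim: since $\cA^*$ is a (legal, possibly randomized, offline) algorithm whose output is always feasible and can only contain $e$ when $e$ is realized, its marginal vector $\bm{z}^*$ lies in $\cP''$. In particular $z^*_e \le \cD(e)$ (as $e \in S^*(\bm{X})$ requires $X_{i(e)} = e$), and the "day-wise" sum $\sum_{e \in \cU_i} z^*_e = \Pr[i \in T^*(\bm{X})]$ gives a vector $\bm{x}^* \in \cP$ because the prophet only selects sets in $\cI$ and $\cP$ is a relaxation of $\cC$.

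Finally, I would use the defining property of the concave closure. The random set $S^*(\bm{X}) \subseteq \cU$ is drawn from some distribution over $2^{\cU}$ whose marginals on singletons are exactly $\bm{z}^*$ by construction. Hence $\bm{z}^*$ is a feasible point for the maximization defining $f^+(\bm{z}^*)$, giving
\[
f^+(\bm{z}^*) \;\ge\; \E_{\bm{X}}[f(S^*(\bm{X}))] \;=\; \opt.
\]
Combined with $\bm{z}^* \in \cP''$ this yields $\max_{\bm{z} \in \cP''} f^+(\bm{z}) \ge f^+(\bm{z}^*) \ge \opt$, as required.

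There is no real technical obstacle here; the only thing to be careful about is checking that $\bm{z}^*$ genuinely lies in $\cP''$ (i.e.\ that the per-day partition constraint $\sum_{e \in \cU_i} z^*_e \le x^*_i$ for some $\bm{x}^* \in \cP$ and the pointwise bound $z^*_e \le \cD(e)$ both hold), and noting that the prophet can be assumed deterministic or randomized without changing the argument since averaging marginals preserves membership in the convex polytope $\cP''$.
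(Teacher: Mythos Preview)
Your proposal is correct and follows essentially the same approach as the paper: take the marginal vector $\bm{y}^*$ of the prophet's optimal strategy, invoke the preceding observation to conclude $\bm{y}^* \in \cP''$, and use the definition of the concave closure to get $f^+(\bm{y}^*) \ge \opt$. Your write-up is slightly more explicit about why the marginal vector lands in $\cP''$, but the argument is identical.
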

\begin{proof}
  Fix an optimal strategy for the prophet and let
  $\bm{y^*} \in [0,1]^{|\cU|}$ denote the vector of probabilities of the
  elements appearing in the output of the prophet's strategy.  We have
  $\bm{y^*} \in \cP''$. By the definition of the concave closure of
  $f$, $f^+(\bm{y^*})$ maximizes the value of $f$ among all
  distributions with the marginals $\bm{y^*}$ (note that the
  distribution that achieves this may not be a feasible strategy for
  any algorithm). Therefore, $f^+(\bm{y^*}) \geq \opt$, which also implies that
  $\max_{\bm{z} \in \cP''} {f^+(\bm{z})} \geq \opt$.
\end{proof}

\subsection{Fractional Solution and Correlation Gap}\label{sec:fractional}

From Claim~\ref{clm:fmax-opt},
$\max_{\bm{z} \in \cP''} {f^+(\bm{z})} \ge \opt$.  Since OCRSs are designed to relate the
quality of their output to that of the multilinear relaxation, we need
to relate $F(\bm{z})$ to $f^+(\bm{z})$ and hence to $\opt$. We present
two different ways to do this --- via a direct correlation gap and via the Measured
Continuous Greedy (\mcg) algorithm --- with the second yielding strictly
better results than the first.

\paragraph{The direct correlation gap approach}
The first approach is not computationally efficient and relies on
optimally solving the optimization problem
$\max_{\bm{z} \in \cP''} {f^+(\bm{z})}$. Let $\bm{z^*}$ be the optimum
solution. We can then use the correlation gap to relate $F(\bm{z^*})$
to $\opt$.  For monotone functions we have
$F(\bm{z^*}) \ge (1-1/e)f^+(\bm{z^*}) \ge (1-1/e)\opt$. For
non-negative functions we can use Theorem \ref{thm:intro-gap}, the
proof of which, along with all results on the direct correlation gap
approach, can be found in Appendix~\ref{app:correlation-gap}.
Following the reduction that we described earlier, we can assume that
$z_e^* \le \max_e \bm{D}(e) \le \eps$ for all $e$ and this implies,
via Theorem \ref{thm:intro-gap} that
$F(\bm{z^*}) \ge (1-\eps)(1-1/e) f^+(\bm{z^*}) \ge (1-\eps)(1-1/e)
\opt$.  In rounding it is useful to have a solution
$\bm{z} \in b \cdot \cP''$ for some parameter $b \in (0,1)$. One can
of course use $\bm{z} = b\bm{z^*}$ and in this case, we can use the
concavity of $f^+$ to see that $f^+(b\bm{z^*}) \ge b f^+(\bm{z^*})$,
and then apply the correlation gap to $b\bm{z^*}$ to conclude that, in
the monotone case,
$F(b \bm{z^*}) \ge b(1-1/e) f^+(\bm{z^*}) \ge b(1-1/e) \opt$ and, in
the non-monotone case,
$F(b \bm{z^*}) \ge b(1-\eps)(1-1/e) f^+(\bm{z^*}) \ge b(1-\eps)(1-1/e)
\opt$.

\paragraph{The measured continuous greedy approach}

The second approach is algorithmic and relies on the Measured
Continuous Greedy (\mcg) algorithm  and its properties. We state two
relevant known results about the algorithm. For these results as well
as Theorem~\ref{thm:intro-gap-mcg2}, we assume the submodular function $f$
is given via a value oracle, and that the algorithms are randomized and
run in polynomial time and are correct with high probability.

\begin{lemma}[Lemma 4 of \cite{ward}]\label{lem:mon-cont-greedy}
Let $f$ be a monotone submodular function with multilinear extension
$F$, and let $\cP$ be a solvable downward-closed polytope. Let $\bm{x}(b)$ be
solution produced by the Continuous Greedy algorithm on $F$ and $\cP$
until time $b \in (0, 1]$. Then (i) $\bm{x}(b) \in b \cdot \cP$ and (ii)
$F(\bm{x}(b)) \geq \prn{1 - e^{-b} - o(1)} \cdot \max_{\bm{y} \in \cP} {f^+(\bm{y})}$.
\end{lemma}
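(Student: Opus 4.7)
The plan is to prove the two parts separately, with (i) being nearly immediate from convexity of $\cP$, and (ii) requiring an ODE argument that relates the rate of increase of $F$ along the continuous greedy trajectory to the shortfall from $f^+(\bm{y^*})$, where $\bm{y^*}$ is an optimal point of the concave closure.

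First I would fix notation by writing out the continuous greedy dynamics: start at $\bm{x}(0) = \bm{0}$ and evolve via $\frac{d\bm{x}(t)}{dt} = \bm{v}(t)$, where $\bm{v}(t) \in \arg\max_{\bm{v} \in \cP} \langle \nabla F(\bm{x}(t)), \bm{v} \rangle$. Since $\cP$ is downward-closed and solvable, $\bm{v}(t)$ can be computed as a linear optimization over $\cP$. For (i), note that $\bm{x}(b) = \int_0^b \bm{v}(t)\,dt$ is $b$ times a convex combination of points of $\cP$, so $\bm{x}(b) \in b \cdot \cP$ by convexity.

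The heart of (ii) is to establish the differential inequality
\[
  \frac{dF(\bm{x}(t))}{dt} \;\geq\; f^+(\bm{y^*}) - F(\bm{x}(t)),
\]
where $\bm{y^*} \in \arg\max_{\bm{y} \in \cP} f^+(\bm{y})$. Since $\bm{v}(t)$ is chosen to maximize the inner product with $\nabla F(\bm{x}(t))$ over $\cP$, we have $\frac{dF(\bm{x}(t))}{dt} = \langle \nabla F(\bm{x}(t)), \bm{v}(t) \rangle \geq \langle \nabla F(\bm{x}(t)), \bm{y^*} \rangle$. Next, fix a distribution $\mu$ over subsets of $\cN$ with marginals $\bm{y^*}$ achieving $f^+(\bm{y^*}) = \E_{S^* \sim \mu}[f(S^*)]$, and let $R \sim \bm{x}(t)$ independently. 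Then monotonicity gives $\E[f(R \cup S^*)] \geq \E[f(S^*)] = f^+(\bm{y^*})$, and submodularity gives $f(R \cup S^*) - f(R) \leq \sum_{i \in S^*}[f(R + i) - f(R)]$. Taking expectations and using independence of $R$ and $S^*$ together with $\Pr[i \in S^*] = y^*_i$, the right-hand side equals $\sum_i y^*_i \, \partial_i F(\bm{x}(t)) = \langle \nabla F(\bm{x}(t)), \bm{y^*}\rangle$. Chaining these inequalities yields the claimed differential inequality.

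To finish, I would solve the ODE: setting $\phi(t) = f^+(\bm{y^*}) - F(\bm{x}(t))$, the inequality becomes $\phi'(t) \leq -\phi(t)$, whence $\phi(t) \leq \phi(0) e^{-t} = f^+(\bm{y^*}) e^{-t}$ (using $F(\bm{0}) = 0$), so $F(\bm{x}(b)) \geq (1 - e^{-b}) f^+(\bm{y^*})$. The $o(1)$ loss accounts for the standard discretization of the continuous process into small time steps of size $\delta$, and for estimating $\nabla F$ by polynomially many random samples, so the implemented algorithm is randomized polynomial time; these errors can be absorbed by choosing $\delta$ small enough and sampling enough times, as in the original analyses of continuous greedy. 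The main obstacle (and the only real departure from the classical argument that compares to $F(\bm{y^*})$) is the sharpening that compares to $f^+(\bm{y^*})$; this is precisely where the submodularity+monotonicity argument with the coupled sets $R$ and $S^*$ must be invoked rather than a naive union bound over the product distribution induced by $\bm{y^*}$.
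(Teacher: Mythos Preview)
The paper does not give its own proof of this lemma; it is quoted as Lemma~4 of \cite{ward} and used as a black box. Your sketch is correct and is the standard argument: the only nontrivial point is the differential inequality $\frac{d}{dt}F(\bm{x}(t)) \ge f^+(\bm{y^*}) - F(\bm{x}(t))$, and your coupling of $R \sim \bm{x}(t)$ with $S^*$ drawn from an optimal distribution realizing $f^+(\bm{y^*})$, together with monotonicity to get $\E[f(R\cup S^*)] \ge f^+(\bm{y^*})$ and submodularity to bound the marginal gain by $\langle \nabla F(\bm{x}(t)), \bm{y^*}\rangle$, is exactly the right move. This is also the mechanism the paper itself employs when it \emph{does} prove the non-monotone analogue (Theorem~\ref{thm:intro-gap-mcg2}) via Measured Continuous Greedy: there the convex combination $\bm{\alpha}$ achieving $f^+(\bm{\hat x})$ plays the role of your distribution $\mu$, so your approach is fully consistent with the paper's methodology.
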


For a general non-negative submodular function, the \mcg algorithm achieves the following bound.

\begin{lemma}[Lemma 8.3 of \cite{rocrs}]\label{lem:non-mon-cont-greedy}
  Let $b \in [0,1]$, $f$ be a non-negative submodular function
  with multilinear extension $F$, and let $\cP$ be a solvable downward-closed
  polytope. Then, the solution $\bm{x}(b) \in {[0,1]}^n$ produced by
  the \mcg algorithm satisfies (i) $\bm{x}(b)
  \in b \cdot \cP$
  and (ii) $F(\bm{x}(b)) \geq \prn{b \cdot e^{-b} - \eps} \cdot \max_{\bm{y} \in \cP} {f^+(\bm{y})}$,
  for any fixed $\eps > 0$.
\end{lemma}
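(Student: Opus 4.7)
The plan is to analyze the Measured Continuous Greedy (\mcg) algorithm in its idealized continuous-time form, derive a differential inequality for $F(\bm{x}(t))$, solve it to obtain the rate $b \cdot e^{-b} \cdot M$ with $M \coloneqq \max_{\bm{y} \in \cP} f^+(\bm{y})$, and then remark on how a standard polynomial-time discretization introduces the $\eps$ loss. Recall \mcg maintains $\bm{x}(t) \in [0,1]^\cN$ with $\bm{x}(0) = \bm{0}$, evolving by
\[
\dot{x}_i(t) = y_i(t) \cdot (1 - x_i(t)), \qquad \bm{y}(t) \in \arg\max_{\bm{y} \in \cP} \sum_{i \in \cN} y_i \cdot (1 - x_i(t)) \cdot \partial_i F(\bm{x}(t)),
\]
which is well-defined since $\cP$ is solvable. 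For part (i), $\dot{x}_i(t) \le y_i(t)$ gives $x_i(t) \le \int_0^t y_i(s)\,ds$; the integral $\int_0^t \bm{y}(s)\,ds$ is $t$ times a convex combination of points of $\cP$ and therefore lies in $t \cdot \cP$, so downward-closedness of $\cP$ yields $\bm{x}(t) \in t \cdot \cP$. Setting $t = b$ proves (i).

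For (ii), fix $\bm{x}^* \in \arg\max_{\bm{y} \in \cP} f^+(\bm{y})$ together with a random $S^*$ drawn from a distribution with marginals $\bm{x}^*$ realizing $\E[f(S^*)] = M$. The target is the differential inequality $\tfrac{d}{dt} F(\bm{x}(t)) + F(\bm{x}(t)) \ge e^{-t} \cdot M$. Two submodular facts feed into it. The first is the Feige--Mirrokni--Vondr\'ak-style bound: for any $\bm{x} \in [0,1]^\cN$ and $S \subseteq \cN$,
\[
F(\bm{x} \vee \1_S) \;\ge\; \prn{1 - \max_i x_i} \cdot f(S).
\]
Combined with the pointwise bound $\max_i x_i(t) \le 1 - e^{-t}$ (which follows from $\dot{x}_i \le 1 - x_i$ and $x_i(0) = 0$), taking expectation over $S^*$ gives $\E[F(\bm{x}(t) \vee \1_{S^*})] \ge e^{-t} \cdot M$. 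The second is the standard marginal bound from submodularity, obtained by inserting the elements of $S$ one at a time so that by submodularity each marginal can only decrease,
\[
F(\bm{x} \vee \1_S) - F(\bm{x}) \;\le\; \sum_{i \in S} (1 - x_i) \cdot \partial_i F(\bm{x}).
\]
Taking expectation over $S^*$ and combining the two, we obtain $\sum_i x^*_i (1 - x_i(t)) \partial_i F(\bm{x}(t)) \ge e^{-t} M - F(\bm{x}(t))$.

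Because $\bm{x}^* \in \cP$, the \mcg choice of $\bm{y}(t)$ guarantees
\[
\frac{d}{dt} F(\bm{x}(t)) = \sum_i y_i(t)(1 - x_i(t)) \, \partial_i F(\bm{x}(t)) \;\ge\; \sum_i x^*_i (1 - x_i(t)) \, \partial_i F(\bm{x}(t)) \;\ge\; e^{-t} M - F(\bm{x}(t)),
\]
where coordinates with a negative weighted gradient are harmlessly zeroed in the LP thanks to downward-closedness. Writing $g(t) = F(\bm{x}(t))$, the ODE $g'(t) + g(t) \ge e^{-t} M$ becomes $(e^t g(t))' \ge M$ after multiplying through by $e^t$; integrating from $0$ to $b$ with $g(0) = 0$ yields $F(\bm{x}(b)) \ge b \cdot e^{-b} \cdot M$, as claimed.

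To turn this into a randomized polynomial-time algorithm one discretizes into $1/\delta$ steps, estimates each $\partial_i F(\bm{x}(t))$ by $\mathrm{poly}(n, 1/\eps)$ Monte Carlo evaluations of $f$, and tracks three sources of error: (a) the first-order Taylor error per step, (b) Chernoff concentration of the gradient estimates, and (c) the LP error from using estimated weights. Choosing $\delta$ and the sample count suitably, these losses accumulate to at most $\eps \cdot M$ with high probability while preserving $\bm{x}(b) \in b \cdot \cP$. The main obstacle is precisely this discretization bookkeeping: the continuous ODE derivation is clean, but tracking the per-round error accumulation and maintaining feasibility under noisy gradients requires care, and is where citing Lemma~8.3 of \cite{rocrs} saves considerable work.
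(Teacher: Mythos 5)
Your proof is correct and follows essentially the same route the paper itself takes: the paper only cites Lemma~8.3 of \cite{rocrs} without reproving it, but its proof of the generalization (Theorem~\ref{thm:intro-gap-mcg2} in Appendix~\ref{app:correlation-gap}) is exactly your argument --- the marginal bound from Section~2.3 of \cite{ccpv}, the bound $F(\bm{x}\vee \1_S)\ge (1-\max_i x_i)f(S)$ from Lemma~III.5 of \cite{mcg}, zeroing negative-gradient coordinates via downward-closedness, and integrating $(e^t g)'\ge M$ --- specialized to the case where only $x_i(t)\le 1-e^{-t}$ is available. Your handling of non-monotonicity and the discretization remark are consistent with the paper's treatment.
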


The two preceding lemmas are algorithmic. If $\cP$ is solvable then
the underlying algorithms can be implemented efficiently. Based on our
reduction to small probabilities it is useful to consider whether the
preceding lemmas can take advantage of this.  No advantage is possible
in the monotone setting, however, we show below that one can indeed
take advantage of the reduction when $f$ is non-monotone. We provide a
refined analysis of the standard bound of the \mcg algorithm, which
depends on a parameter $p$ that quantifies the maximum value of
any coordinate that is feasible in the polytope.
For small enough $p$, Theorem~\ref{thm:intro-gap-mcg2} constitutes an
improvement over Lemma~\ref{lem:non-mon-cont-greedy}, which comprises
the main result of this section. We restate the theorem here, while its
proof can be found in Appendix \ref{app:correlation-gap}. Notice that
Theorem~\ref{thm:intro-gap-mcg3} follows from Theorem~\ref{thm:intro-gap-mcg2}
by setting $b = 1$.

\mcggap*

\begin{remark}\label{rmk:refined-mcg}
Notice that, for the \spi problem, due to our reduction, we can assume
that all vectors $\bm{z} \in \cP''$ have $z_i \leq \eps'$ for all $i \in \cN$,
for any fixed constant $\eps' > 0$. Therefore, for any fixed constant
$\eps > 0$, there exists an $\eps'$ such that
\[
F(\bm{x}(b)) \geq \prn{1 - e^{-b} - \eps} \cdot \max_{\bm{z} \in \cP} {f^+(\bm{z})},
\]
where $\bm{x}(b) \in b \cdot \cP''$ is the output of the \mcg
algorithm at time $b$.
\end{remark}

We summarize the results via both methods below. We observe that for
both monotone and non-monotone functions the bounds are best when
$p \rightarrow 0$, which we can ensure via the reduction. Once we make
this assumption, the bounds provided by the correlation gap
approach are essentially $(1-1/e)$ when $b = 1$ which is
optimal. However, these bounds are matched by the Continuous Greedy
approach. When $b < 1$, which will be the case when applying
the rounding schemes, the bound in Lemma~\ref{lem:mon-cont-greedy} and
our new refined bound in Theorem~\ref{thm:intro-gap-mcg2} are superior and
have the further advantage of being computable in polynomial time.

\section{Rounding the fractional solution}
\label{sec:rounding}
In the preceding section we described ways to obtain a vector
$\bm{z} \in b \cdot \cP''$ for some $b \in [0,1]$ such that
$F(\bm{z}) \geq \alpha \cdot OPT$ for various constants $\alpha$
depending on the approach. In this section we show how to round
$\bm{z}$ in an online fashion. We follow the high-level approach of
\cite{rs} but refine it in several ways.
We will use a greedy OCRS for $\cC$ via
the relaxation $\cP$ as a black box. Recall that our rounding needs to
produce a feasible set in $\cC'$ with ground set $\cU$, while the OCRS
is for the constraint on days/variables $\cN$. Moreover the
distribution $\bm{D}$ is \emph{not} a product distribution on $\cU$.
These are the technical challenges that need to be overcome in the
algorithm and analysis. The quality of the output will depend on the
properties of the OCRS for $\cP$. We assume that the greedy OCRS
for $\cP$ is $(b,c)$-selectable, where $c$ is some function of
$b$. This depends on the specific constraint family $\cC$ and the
polyhedral relaxation $\cP$. At the end of the section, we use
known results to derive concrete competitive ratios for several
constraint families of interest. We note that $\bm{z} \in \cP''$,
which also implies that $\bm{z} \in \cP'$. For rounding purposes
we only work with $\cP'$ and $\cP$; $\cP''$ is only necessary to
obtain an upper bound on $\opt$.

We rely on the certain parts of the analysis of OCRS for submodular
function maximization from \cite{ocrs}. In the following, we will use
$\pi$ to denote the mapping function for the OCRS over the ground set
$\cN$ and the polytope $\cP$. Technically the mapping $\pi$ is a
function of $\bm{x} \in \cP$ and should be written as $\pi_{\bm{x}}$
but we omit $\bm{x}$ for notational simplicity. We also note that $\pi$
can be randomized. An important
definition from \cite{ocrs} in the analysis of OCRSs is the
characteristic CRS of a greedy OCRS.

\begin{definition}[Characteristic CRS of an OCRS]\label{def:char-crs}
  The {\em characteristic CRS} $\bar{\pi}$ of a greedy OCRS $\pi$ for
  a polytope $\cP$ is a CRS for the same polytope $\cP$. It is defined
  for an input $x \in \cP$ and a set $A \subseteq \cN$ by
  $\bar{\pi}(A) = \set{e \in A \midd I \cup \{e\} \in \cF_{\pi,
      \bm{x}}, \: \: \forall I \subseteq A, \: I \in \cF_{\pi,
      \bm{x}}}$. Notice that, if $\pi$ is randomized, then $\bar{\pi}$
      is randomized as well.
\end{definition}

We will also need the following known results from \cite{ocrs}.

\begin{observation}[Observation 3.3 of \cite{ocrs}]\label{obs:char-crs}
  For every set $A \subseteq \cN$ and a characteristic CRS $\bar{\pi}$
  of a greedy OCRS $\pi$, the set $\bar{\pi}(A)$ is always a subset of
  the elements selected by $\pi$ when the active elements are the
  elements of $A$.
\end{observation}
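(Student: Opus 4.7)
The plan is to argue by tracing the online execution of $\pi$ on the active set $A$ and showing, element by element, that every $e \in \bar{\pi}(A)$ is accepted. Let the active elements arrive in the adversarial order chosen during the execution, and let $S_{t}$ denote the set of elements accepted by $\pi$ after processing the first $t$ arrivals. By the definition of a greedy OCRS, $\pi$ only ever accepts elements that are active and only when doing so keeps the current accepted set inside $\cF_{\pi,\bm{x}}$. Thus two invariants hold at every step: $S_{t} \subseteq A$ and $S_{t} \in \cF_{\pi,\bm{x}}$.

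Now fix any $e \in \bar{\pi}(A)$ and consider the step at which $e$ arrives. Let $I_e = S_{t-1}$ be the accepted set just before $e$ is revealed. By the invariants, $I_e$ is a subset of $A$ that lies in $\cF_{\pi,\bm{x}}$. Applying the defining property of $\bar{\pi}(A)$ to this particular $I_e$ yields $I_e \cup \{e\} \in \cF_{\pi,\bm{x}}$. By the greedy acceptance rule, $\pi$ therefore accepts $e$, so $e \in S_{t} \subseteq S_{\text{final}}$. Running this argument for every $e \in \bar{\pi}(A)$ gives $\bar{\pi}(A) \subseteq S_{\text{final}}$, which is the claim.

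There is essentially no obstacle beyond bookkeeping: the argument hinges on the universal quantifier $\forall I \subseteq A,\ I \in \cF_{\pi,\bm{x}}$ built into the definition of $\bar{\pi}$, which is tailored precisely so that whatever feasible subset of $A$ the online process happens to produce, it is guaranteed that appending $e$ preserves feasibility. The only minor point to verify is that randomization in $\pi$ does not interfere: since both $\pi$ and $\bar{\pi}$ are defined with respect to the same (possibly random) family $\cF_{\pi,\bm{x}}$, the argument applies pointwise for each realization of the internal randomness, and hence also in distribution.
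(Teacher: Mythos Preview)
Your proof is correct and follows the natural argument: the defining universal quantifier in $\bar{\pi}(A)$ guarantees that whatever feasible subset of $A$ the greedy process has accumulated so far, adding $e$ keeps it feasible, so the greedy rule must accept $e$. Note that the paper does not give its own proof of this observation; it is quoted verbatim as Observation~3.3 of \cite{ocrs}, and the argument there is essentially the one you wrote.
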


\begin{lemma}[Lemma 3.4 of \cite{ocrs}]\label{lem:char-crs}
  The characteristic CRS $\bar{\pi}$ of a $(b, c)$-selectable greedy
  OCRS $\pi$ is $(b, c)$-balanced and monotone.
\end{lemma}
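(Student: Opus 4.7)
The plan is to verify both claimed properties --- monotonicity and $(b,c)$-balance --- directly from Definition~\ref{def:char-crs}, using the fact that for any fixed realization of the (possibly random) feasibility family $\cF_{\pi, \bm{x}}$, $\bar{\pi}(A)$ is a deterministic function of $A$. No probabilistic ingredient beyond matching the conditioning in the two definitions is required.

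For monotonicity I would couple $\bar{\pi}(A)$ and $\bar{\pi}(A')$ through the same realization of $\cF_{\pi, \bm{x}}$. Given $A \subseteq A'$ and $e \in A \cap \bar{\pi}(A')$, Definition~\ref{def:char-crs} says that $I \cup \{e\} \in \cF_{\pi, \bm{x}}$ for every $I \subseteq A'$ with $I \in \cF_{\pi, \bm{x}}$. Restricting the quantifier to $I \subseteq A$, which is a sub-collection of feasible sets, immediately yields $e \in \bar{\pi}(A)$. Hence $\bar{\pi}(A') \cap A \subseteq \bar{\pi}(A)$, which is the monotonicity property of a CRS.

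For $(b,c)$-balance, fix $\bm{x} \in b \cdot \cP$ and $e \in \cN$, and let $R$ be the random set where each element $i$ is included independently with probability $x_i$. By definition, $e \in \bar{\pi}(R)$ iff $e \in R$ and $I \cup \{e\} \in \cF_{\pi, \bm{x}}$ for every $I \subseteq R$ with $I \in \cF_{\pi, \bm{x}}$. Subsets $I$ containing $e$ trivially satisfy $I \cup \{e\} = I \in \cF_{\pi, \bm{x}}$, so the only substantive constraints come from $I \subseteq R \setminus \{e\}$. Conditioning on $e \in R$ leaves the remaining coordinates of $R$ independent with unchanged marginals, so the probability of the surviving condition coincides exactly with the event in the definition of $(b,c)$-selectability of $\pi$. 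By hypothesis this probability is at least $c$, giving $\Pr[e \in \bar{\pi}(R) \mid e \in R] \geq c$ for every $e$, which is precisely $(b,c)$-balance.

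The only delicate point is the bookkeeping: balance is phrased as a conditional probability given $e \in R$, whereas selectability is phrased over the free randomness of $R$. Independence of the Bernoulli coordinates of $R$ reconciles the two without loss, so there is no real obstacle --- the lemma is essentially a dictionary between the greedy-OCRS "feasibility family" viewpoint and the CRS "balanced mapping" viewpoint.
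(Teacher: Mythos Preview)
The paper does not prove this lemma; it is quoted verbatim as Lemma~3.4 of \cite{ocrs} and used as a black box. So there is no ``paper's own proof'' to compare against here. That said, your argument is correct and is essentially the standard proof one would expect (and is, in substance, the argument in \cite{ocrs}). The monotonicity step is exactly right: fixing a realization of $\cF_{\pi,\bm{x}}$, the defining condition for $e\in\bar\pi(A')$ quantifies over a superset of the feasible $I$'s needed for $e\in\bar\pi(A)$ when $A\subseteq A'$, so membership only improves as the ambient set shrinks. For balance, your key observation---that subsets $I$ containing $e$ satisfy $I\cup\{e\}=I\in\cF_{\pi,\bm{x}}$ trivially, so the event depends only on $R\setminus\{e\}$ and is therefore independent of $\{e\in R\}$---is exactly what bridges the conditional probability in the balance definition and the unconditional probability in the selectability definition. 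Nothing is missing.
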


For any $S \subseteq \cU$, we define $S_\downarrow \subseteq \cN$ to
be the projection of $S$ onto $\cN$, i.e.
\[
  S_\downarrow \coloneqq \set{i \in \cN \midd S \cap \cU_i \neq \emptyset}.
\]

Also, for a greedy OCRS $\pi$, we denote the characteristic CRS of
$\pi$ by $\bar{\pi}$. We now define a CRS $\pi'$ for $\cP'$ that we
will need for our analysis later on. We define $\pi'$ using the
characteristic CRS $\bar{\pi}$ of $\pi$ as follows. For any set
$S \subseteq \cU$,
\[
  \pi'(S) \coloneqq \bigcup_{\stack{i \in \bar{\pi}(S_\downarrow)}{|S \cap \cU_i| = 1}} {(S \cap \cU_i)}.
\]

\begin{lemma}\label{lem:new-crs}
  For any $(b, c)$-selectable greedy OCRS $\pi$ for $\cP$ and
  $\bm{z} \in \cP'$, the mapping $\pi'$ is is a $(b, c \cdot \gamma)$-balanced monotone
  CRS $\pi'$ for $\cP'$, where
  $\gamma = \min_{i \in \cN} {\prod_{e \in \cU_i} {(1 -
      z_e)}}$.
\end{lemma}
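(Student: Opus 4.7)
The plan is to reduce both properties of $\pi'$ to the corresponding properties of $\bar{\pi}$, taking into account the loss introduced by the partition structure $\{\cU_i\}_{i \in \cN}$.

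Fix $\bm{z} \in \cP'$. Let $\bm{x} \in \cP$ be its projection onto $\cN$ (so $x_i = \sum_{e \in \cU_i} z_e$), and let $\bm{x}' \in [0,1]^\cN$ be defined by $x'_i = 1 - \prod_{e \in \cU_i}(1 - z_e)$. Since $x'_i \le x_i$ and $\cP$ is downward-closed, we have $\bm{x}' \in \cP$, so if $\bm{z} \in b \cdot \cP'$ then $\bm{x}' \in b \cdot \cP$; this is the point at which I will invoke the $(b,c)$-balancedness of $\bar{\pi}$. A key observation is that the mutual disjointness of the sets $\cU_i$ means that if $R = R(\bm{z})$ samples each $e \in \cU$ independently with probability $z_e$, then the projected random set $R_{\downarrow}$ is exactly distributed as $R(\bm{x}')$, a product sample on $\cN$ with marginals $\bm{x}'$.

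For monotonicity of $\pi'$, I would argue directly from its definition using a standard coupling. Couple copies of $\bar{\pi}$ on every set $A \subseteq \cN$ so that monotonicity holds pointwise: for $A_1 \subseteq A_2$ one has $\{i : i \in \bar{\pi}(A_2)\} \subseteq \{i : i \in \bar{\pi}(A_1)\}$. Then for $S_1 \subseteq S_2 \subseteq \cU$ and $e \in \cU_i \cap S_1$: if $e \in \pi'(S_2)$ then $S_2 \cap \cU_i = \{e\}$, which forces $S_1 \cap \cU_i = \{e\}$ as well, and monotonicity of $\bar{\pi}$ on $(S_1)_\downarrow \subseteq (S_2)_\downarrow$ gives $i \in \bar{\pi}((S_1)_\downarrow)$, hence $e \in \pi'(S_1)$.

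For $(b, c\gamma)$-balancedness, fix $e \in \cU_i$ and compute $\Pr[e \in \pi'(R) \mid e \in R]$. By definition of $\pi'$, the event $\{e \in \pi'(R)\}$ equals $\{R \cap \cU_i = \{e\}\} \cap \{i \in \bar{\pi}(R_\downarrow)\}$. I split this probability into two factors:
\begin{align*}
\Pr[e \in \pi'(R) \mid e \in R]
&= \Pr[R \cap \cU_i = \{e\} \mid e \in R] \cdot \Pr[i \in \bar{\pi}(R_\downarrow) \mid R \cap \cU_i = \{e\}].
\end{align*}
By independence of the $z_e$-sampling, the first factor equals $\prod_{e' \in \cU_i \setminus \{e\}}(1 - z_{e'}) \ge \prod_{e' \in \cU_i}(1 - z_{e'}) \ge \gamma$. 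For the second factor, conditioning on $R \cap \cU_i = \{e\}$ forces $i \in R_\downarrow$ and, crucially, leaves the events $\{j \in R_\downarrow\}$ for $j \ne i$ independent with marginals $x'_j$. Thus $R_\downarrow$ under this conditioning is distributed as $R(\bm{x}')$ conditioned on $i \in R(\bm{x}')$, and the $(b,c)$-balancedness of $\bar{\pi}$ on $\bm{x}' \in b \cdot \cP$ (given by Lemma~\ref{lem:char-crs}) yields $\Pr[i \in \bar{\pi}(R_\downarrow) \mid R \cap \cU_i = \{e\}] \ge c$. Multiplying, $\Pr[e \in \pi'(R) \mid e \in R] \ge c \gamma$, as required.

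The main subtlety to be careful about is the second factor: one must verify that the conditioning $\{R \cap \cU_i = \{e\}\}$ affects only coordinates inside $\cU_i$, so that the induced distribution of $R_\downarrow$ on $\cN \setminus \{i\}$ remains a product distribution with the correct marginals $x'_j$, and then argue that this is precisely the distribution to which $\bar{\pi}$'s balancedness guarantee applies (using downward-closedness of $\cP$ to certify $\bm{x}' \in b \cdot \cP$). Everything else is routine bookkeeping between $\pi$, $\bar{\pi}$, and the projection $S \mapsto S_\downarrow$.
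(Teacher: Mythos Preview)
Your proposal is correct and follows essentially the same route as the paper: the same monotonicity argument (fix the randomness of $\bar{\pi}$, then $e\in\pi'(S_2)$ forces $S_1\cap\cU_i=\{e\}$ and monotonicity of $\bar{\pi}$ gives $i\in\bar{\pi}((S_1)_\downarrow)$), and the same two-factor decomposition of $\Pr[e\in\pi'(R)\mid e\in R]$ with the first factor bounded by $\gamma$ and the second by $c$ via Lemma~\ref{lem:char-crs}. Your treatment of the second factor is in fact slightly more explicit than the paper's---you spell out that $R_\downarrow$ under the conditioning $\{R\cap\cU_i=\{e\}\}$ is distributed as $R(\bm{x}')$ conditioned on $i$, and you justify $\bm{x}'\in b\cdot\cP$ via downward-closedness---whereas the paper simply observes that $\bar{\pi}$ is indifferent to which nonempty subset of $\cU_i$ realized $i$; but these are the same argument.
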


\begin{proof}
  First, notice that $\pi'$ is a CRS, since $\pi'(S) \subseteq S$ for
  all $S \subseteq \cU$. This follows immediately from the definition
  of $\pi'$ as $S \cap \cU_i \subseteq S$ for all
  $i \in \cN, S \subseteq \cU$.

  Next, we show that $\pi'$ is monotone. Fix an element
  $e \in S_1 \subseteq S_2 \subseteq \cU$, and an instantiation of
  $\cF_{\pi, \bm{x}}$ (this is relevant if the OCRS is randomized). Let
  $e \in \cU_i$ for some $i \in \cN$. Suppose $e \in \pi'(S_2)$. This
  implies that $|S_2 \cap \cU_i| = 1$ and since $S_1 \subseteq S_2$ and
  $e \in S_1$, we have $|S_1 \cap \cU_i| = 1$. Furthermore, we know that
  $i \in \bar{\pi}({S_2}_\downarrow)$. Since $S_1 \subseteq S_2$, it
  follows that ${S_1}_\downarrow \subseteq {S_2}_\downarrow$. By Lemma
  \ref{lem:char-crs}, we know that $\bar{\pi}$ is monotone, and thus,
  since $i \in \bar{\pi}({S_2}_\downarrow)$, it follows that
  $i \in \bar{\pi}({S_1}_\downarrow)$. Therefore, we know that
  $e \in \pi'(S_1)$. Since $e \in \pi'(S_2)$ implies that
  $e \in \pi'(S_1)$, unconditioning over the instantiation of
  $\cF_{\pi, \bm{x}}$ yields
\[
\Pr\brk{e \in \pi'(S_1)} \geq \Pr\brk{e \in \pi'(S_2)}.
\]

We now show that $\pi'$ is $(b, c \cdot \gamma)$-balanced, for
$\gamma = \min_{i \in \cN} {\prod_{e  \in \cU_i} {(1 -
    z_e)}}$. It suffices to show that, for any $e \in \cU$
\[
\Pr_{S \sim R(\bm{z})}\brk{e  \in \pi'(S) \midd e \in S} \geq c \cdot
\gamma.
\]
Notice that, for any realization $S$ of $R(\bm{z})$,
$e  \in \pi'(S)$ if and only if $S \cap \cU_i = \{e\}$ and
$i \in \bar{\pi}(S_\downarrow)$. Thus,
\begin{align}
\Pr_{S \sim R(\bm{z})}\brk{e \in \pi'(S) \midd e  \in S} &= \Pr_{S \sim R(\bm{z})}\brk{S \cap \cU_i = \{e\} \wedge i \in \bar{\pi}(S_\downarrow) \midd e \in S} \nonumber \\
&= \Pr_{S \sim R(\bm{z})}\brk{S \cap \cU_i = \{e\} \midd e \in S} \nonumber \\
& \qquad \cdot \Pr_{S \sim R(\bm{z})}\brk{i \in \bar{\pi}(S_\downarrow) \midd S \cap \cU_i = \{e\}, e \in S} \nonumber \\
&= \Pr_{S \sim R(\bm{z})}\brk{S \cap \cU_i = \{e\} \midd e \in S} \nonumber \\
\label{eq:bc-balanced-1} & \qquad \cdot \Pr_{S \sim R(\bm{z})}\brk{i \in \bar{\pi}(S_\downarrow) \midd S \cap \cU_i = \{e\}},
\end{align}
where the last equality follows from the fact that, if
$S \cap \cU_i = \{e\}$, then $e \in S$. We lower bound each
probability in \eqref{eq:bc-balanced-1} separately, starting from
\begin{align}\label{eq:bc-balanced-2}
\Pr_{S \sim R(\bm{z})}\brk{S \cap \cU_i = \{e\} \midd e \in S} =
  \prod_{e' \neq e, e' \in \cU_i} {(1 - z_{e'})} \geq \prod_{e'  \in
  \cU_i} {(1 - z_{e'})} \geq \gamma.
\end{align}

Also, notice that $\bar{\pi}$ is a CRS over $\cN$ and does not depend
on which $S \cap \cU_i$ led to $i \in S_\downarrow$. Therefore,
\[
\Pr\brk{i \in \bar{\pi}(S_\downarrow) \midd i \in S_\downarrow} = \Pr\brk{i \in \bar{\pi} \midd S \cap \cU_i = T}
\]
for all $T \subseteq \cU_i$ such that $T \neq \emptyset$. Specifically, for $T = \{e\}$,
\begin{equation}\label{eq:bc-balanced-3}
\Pr\brk{i \in \bar{\pi} \midd S \cap \cU_i = \{e\}} = \Pr\brk{i \in \bar{\pi}(S_\downarrow) \midd i \in S_\downarrow} \geq c,
\end{equation}
where the last inequality follows from the fact that fact that $\bar{\pi}$ is $(b,c)$-balanced, by Lemma \ref{lem:char-crs}.

Combining \eqref{eq:bc-balanced-1}, \eqref{eq:bc-balanced-2} and \eqref{eq:bc-balanced-3}, we obtain
\[
\Pr_{S \sim R(\bm{z})}\brk{e  \in \pi'(S) \midd e \in S} \geq c \cdot \gamma.
\]
\end{proof}

\begin{remark}\label{rmk:small-z}
  Notice that via Observation \ref{obs:reduction}, we can assume
  without loss of generality that, for any fixed $\eps' > 0$,
  $z_{e} \leq \eps'$ for all $e \in \cU$. By choosing $\eps'$
  sufficiently small, for any fixed $\eps > 0$ we have
\[
\gamma = \min_{i \in \cN} {\prod_{e \in \cU_i} {(1 - z_e)}} \geq \min_{i \in \cN} {\prn{\prod_{e \in \cU_i} {e^{- z_{e}}}}} - \eps = \min_{i \in \cN} {\prn{e^{- \sum_{e \in \cU_i} {z_e}}}} - \eps \geq e^{-b} - \eps,
\]
where the last inequality follows from the fact that
$\bm{z} \in b \cdot \cP'$. Thus,
$c \cdot \gamma \geq c \cdot (e^{-b} - \eps)$, and we obtain the following as
corollary: For any $(b, c)$-selectable greedy OCRS $\pi$ for $\cP$ and
fixed $\eps > 0$, $\pi'$ defined earlier is a $\prn{b, c\prn{e^{-b} - \eps}}$-balanced
monotone CRS for $\cP'$.
\end{remark}

Now we are ready to describe our online algorithm. We describe and
analyze the algorithms for monotone and non-monotone cases separately,
since there are technical differences. The algorithms are similar
to the one in \cite{rs}, however, the main technical difference is
that we use the OCRS for $\cN$ as a black box; in \cite{rs} the
authors use an OCRS over $\cU$ since they work in the special case of
matroids.

\subsection{Monotone Non-Negative Submodular Functions}

We assume we have already computed a vector $\bm{z} \in b \cdot \cP''$ for
some $b \in [0,1]$ such that $F(\bm{z}) \ge \alpha \cdot \opt$ for some $\alpha$.
Note that the adversary is almighty and can alter the order in which it feeds the
variables to the algorithm based on knowledge of the full realizations
of the variables and the actions of the algorithm so far.

Let $\bm{z_i}$ denote the product distribution on $\cU_i$ defined by
marginals $z_i(e), e \in \cU_i$. We write $R \sim \bm{z_i}$ to
denote a random set $R \subseteq \cU_i$ realized according to this
product distribution, and we denote $z_i(e)$ by $z_e$ when $i$ is clear
from context or irrelevant. Furthermore, let $\bm{x} \in [0,1]^n$ be the
vector where $x_i = \Pr_{R \sim \bm{z_i}}\brk{R \neq \emptyset} =
1 - \prod_{e \in \cU_i} {(1 - z_e)}$, for all $i \in \cN$. We assume that
$\bm{x}$ is the input vector to our OCRS $\pi_{x}$ for $\cP$ and its characteristic
CRS $\bar{\pi}_{\bm{x}}$. To simplify our notation, we denote $\pi_{\bm{x}}$
and $\bar{\pi}_{\bm{x}}$ by $\pi$ and $\bar{\pi}$, respectively.

\begin{algorithm}\underline{{\sc{Monotone Rounding}}{$\prn{\cU, f, \bm{D}, \cC, \pi, \bm{z}}$}}
\\  $T_{\text{ALG}} = \emptyset$
\\  \For{$h \from 1$ \KwTo $n$}{
    	Let $X_i$ be variable that arrives on day $h$
\\  	Let $e  \in \cU_i$ be the realization of $X_i$
\\      With probability $\frac{\Pr_{R \sim \bm{z_i}}\brk{R = \{e\}}}{\cD_i(e)}$, set $T_i \from \{e\}$
\\      Otherwise, set $T_i$ to be a random subset $R$ of $\cU_i$, drawn according to $\bm{z}_i$, conditioned on $|R| \neq 1$
\\      \If{$T_i \neq \emptyset$}{
            Feed $i$ to OCRS $\pi$ for $\cP$
\\          \If{$\pi$ accepts $i$ \emph{and} $T_i = \{e\}$}{
                $T_{\text{ALG}} \from T_{\text{ALG}} \cup \{e\}$}}}
Return $T_{\text{ALG}}$
\caption{Algorithm for Monotone Non-Negative Submodular Functions}
\label{alg:mon}
\end{algorithm}

The online algorithm on the $h$-th day receives a random variable
$X_i$ decided by the almighty adversary, and once $X_i$ is received the
algorithm also sees the realization $e \in \cU_i$ of $X_i$ according to
the distribution $\cD_i$. The online algorithm generates a random set
$T_i \subseteq \cU_i$ \emph{after} seeing the realization $e$. The
idea is that if one does not see the realization $e$ of $X_i$, the
distribution of $T_i$ appears identical to the product distribution
generated by $\bm{z_i}$. Note that, for $S \subseteq \cU_i$,
$\Pr_{R \sim \bm{z_i}}[R = S] = \prod_{e \in S} z_{e} \prod_{e \in
  \cU \setminus S} \prn{1 - z_{e}}$.

\begin{lemma}\label{lem:ti-indep}
  For any $i \in \cN$ and $S \subseteq \cU_i$,
  \[
    \Pr[T_i = S] = \Pr_{R \sim \bm{z_i}}[R = S].
  \]
\end{lemma}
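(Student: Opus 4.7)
}
The plan is to verify the identity by a direct case split on $|S|$, marginalizing over the realization $e \in \cU_i$ of $X_i$. Write $p_e \coloneqq \Pr_{R \sim \bm{z_i}}[R=\{e\}]/\cD_i(e)$ for the ``singleton branch'' probability used by the algorithm, and let $q \coloneqq \Pr_{R \sim \bm{z_i}}[|R|\ne 1]$. Before anything else I would verify that $p_e \in [0,1]$, so that the algorithm is well-defined: since $\bm{z}\in\cP''$ we have $z_e\le \cD_i(e)$, and hence $\Pr_{R\sim\bm{z_i}}[R=\{e\}]= z_e\prod_{e'\ne e}(1-z_{e'}) \le z_e\le \cD_i(e)$.

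Next, I would handle the two cases. \emph{Case $S=\{e^*\}$:} the second branch of the algorithm conditions on $|R|\ne 1$ and therefore never outputs a singleton, so $T_i=\{e^*\}$ only if the realized $X_i$ equals $e^*$ and the first branch is taken. This gives
\[
\Pr[T_i=\{e^*\}] \;=\; \cD_i(e^*)\cdot p_{e^*} \;=\; \Pr_{R\sim\bm{z_i}}[R=\{e^*\}],
\]
which is exactly what we want. \emph{Case $|S|\ne 1$:} now only the second branch contributes, irrespective of which $e$ is realized, so
\[
\Pr[T_i=S] \;=\; \sum_{e\in\cU_i} \cD_i(e)\,(1-p_e)\cdot \frac{\Pr_{R\sim\bm{z_i}}[R=S]}{q}.
\]
The key identity is then $\sum_{e\in\cU_i}\cD_i(e)(1-p_e)=1-\sum_{e\in\cU_i}\Pr_{R\sim\bm{z_i}}[R=\{e\}]=q$, which follows by definition of $p_e$ and $\sum_e \cD_i(e)=1$. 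Substituting yields $\Pr[T_i=S]=\Pr_{R\sim\bm{z_i}}[R=S]$.

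This is essentially bookkeeping, and I do not expect any serious obstacle. The only subtle point is the degenerate case $q=0$, which would occur exactly when $\bm{z_i}$ is concentrated on singletons (i.e.\ $\sum_{e\in\cU_i}z_e=1$ and the singleton branch triggers with probability one for every realization); in that case the ``otherwise'' line is never reached, no conditioning on $q$ is needed, and the singleton case already accounts for every $S$ with positive probability. I would mention this briefly to make the derivation fully rigorous, but otherwise the proof is a one-paragraph calculation.
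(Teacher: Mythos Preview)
Your proposal is correct and follows essentially the same approach as the paper: split on whether $|S|=1$ or $|S|\neq 1$, and in the latter case marginalize over the realization $e$ of $X_i$ and use $\sum_e \cD_i(e)(1-p_e)=q$. You additionally verify that $p_e\in[0,1]$ (using $\bm{z}\in\cP''$) and handle the degenerate $q=0$ case, both of which the paper's proof leaves implicit.
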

\begin{proof}
  Let $\cE_e$ be the event that $e \in \cU_i$ is the realization of
  $X_i$. Note that $\Pr[\cE_e] = \cD_i(e)$. Consider
  $S \subseteq \cU_i$ such that $|S| \neq 1$.
  We see from the algorithm's description that

\[
    \Pr[T_i = S\mid \cE_e] = \prn{1 - \frac{\Pr_{R}\brk{R\prn{\bm{z}_i} = \{e\}}}{D_i(e)}} \cdot \frac{\Pr_{R}\brk{R\prn{\bm{z}_i} = S}}{1 - \Pr_{R}\brk{|R\prn{\bm{z}_i}| = 1}}.
  \]
  Summing up over all realizations of $X_i$, we have that, for any $S$
  such that $|S| \neq 1$,
  \begin{align*}
    \Pr\brk{T_i = S} &= \sum_{e \in \cU_i} {D_i(e) \Pr\brk{T_i = S \midd \cE_e}} \\
&= \sum_{e \in \cU_i} {D_i(e) \prn{1 - \frac{\Pr_{R}\brk{R\prn{\bm{z}_i} = \{e\}}}{D_i(e)}} \cdot \frac{\Pr_{R}\brk{R\prn{\bm{z}_i} = S}}{1 - \Pr_{R}\brk{|R\prn{\bm{z}_i}| = 1}}} \\
&= \frac{\Pr_{R}\brk{R\prn{\bm{z}_i} = S}}{1 -
  \Pr_{R}\brk{|R\prn{\bm{z}_i}| = 1}} \cdot \sum_{e \in \cU_i} {D_i(e) \prn{1 - \frac{\Pr_{R}\brk{R\prn{\bm{z}_i} = \{e\}}}{D_i(e)}}} \\
&= \frac{\Pr_{R}\brk{R\prn{\bm{z}_i} = S}}{1 -
  \Pr_{R}\brk{|R\prn{\bm{z}_i}| = 1}} \cdot \prn{\sum_{e \in \cU_i} {D_i(e)} - \sum_{e \in U_i}{\Pr_{R}\brk{R\prn{\bm{z}_i} = \{e\}}}} \\
&= \frac{\Pr_{R}\brk{R\prn{\bm{z}_i} = S}}{1 -
  \Pr_{R}\brk{|R\prn{\bm{z}_i}| = 1}} \cdot \prn{1 - \sum_{e \in \cU_i}{\Pr_{R}\brk{R\prn{\bm{z}_i} = \{e\}}}} \\
&= \frac{\Pr_{R}\brk{R\prn{\bm{z}_i} = S}}{1 - \Pr_{R}\brk{|R\prn{\bm{z}_i}| = 1}} \cdot \prn{1 - \Pr_{R}\brk{|R\prn{\bm{z}_i}| = 1}} \\
&= \Pr_{R}\brk{R\prn{\bm{z}_i} = S}.
\end{align*}

Next, consider any set $S$ with $|S| = 1$ and, without loss of
generality, assume $S = \{e\}$ for some $e \in \cU_i$. It can be
seen from the algorithm description that $T_i = \{e\}$ if and only if
$e$ is the realization of $X_i$ and the algorithm succeeds in
Line 5 in setting $T_i =  \{e\}$ which happens with probability
$\frac{\Pr_{R \sim \bm{z_i}}\brk{R = \{e\}}}{\cD_i(e)}$. Hence
\[
\Pr[T_i = \{e\}] = \cD_i(e) \cdot \frac{\Pr_{R \sim \bm{z_i}}\brk{R =
    \{e\}}}{\cD_i(e)} = \Pr_{R \sim \bm{z_i}}\brk{R = \{e\}},
\]
as desired.
\end{proof}

We now analyze the expected value of $f(T_{\text{ALG}})$ relying
on the CRS $\pi'$ that we set up (this is inspired by the
use of characteristic CRS in \cite{ocrs}).

\begin{lemma}\label{lem:mon-alg}
  Given a $(b, c)$-selectable greedy OCRS $\pi$ for $\cP$, for any
  $\bm{z} \in b \cdot \cP''$ and fixed $\eps > 0$, Algorithm
  \ref{alg:mon} returns a set $T_{\text{ALG}} \subseteq \cU$ such that
  \[
    \E\brk{f(T_{\text{ALG}})} \geq c \prn{e^{-b} - \eps} \cdot F(\bm{z}).
  \]
\end{lemma}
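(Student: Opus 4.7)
The plan is to argue that the set $T = \bigcup_i T_i$ produced by the algorithm is distributed as the product distribution $R(\bm{z})$ on $\cU$, that the selected set $T_{\text{ALG}}$ contains $\pi'(T)$ (where $\pi'$ is the CRS defined right before Lemma~\ref{lem:new-crs}), and that monotonicity plus the balanced-CRS property of $\pi'$ yield the stated bound. First, I would combine Lemma~\ref{lem:ti-indep} with the independence of $X_1,\ldots,X_n$ (and hence of the $T_i$'s, since each $T_i$ depends only on $X_i$ and internal random coins) to conclude that the joint distribution of $(T_1,\ldots,T_n)$ equals that of $(R_1,\ldots,R_n)$, where $R_i \sim \bm{z}_i$ independently. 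Equivalently, $T \sim R(\bm{z})$ on $\cU$.

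Next, I would relate $T_{\text{ALG}}$ to $\pi'(T)$. Note that the active set presented to the OCRS $\pi$ on the ground set $\cN$ is exactly $T_\downarrow = \{i : T_i \ne \emptyset\}$, revealed in the order chosen by the adversary, and that an element $e \in \cU_i$ enters $T_{\text{ALG}}$ precisely when $T_i = \{e\}$ and $\pi$ accepts $i$. By Observation~\ref{obs:char-crs}, whenever $i \in \bar{\pi}(T_\downarrow)$ the greedy OCRS $\pi$ also accepts $i$, so $e \in \pi'(T)$ forces $e \in T_{\text{ALG}}$; that is, $\pi'(T) \subseteq T_{\text{ALG}}$. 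Since $f$ is monotone, this gives $f(T_{\text{ALG}}) \ge f(\pi'(T))$.

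Finally, I would invoke Lemma~\ref{lem:new-crs} together with Remark~\ref{rmk:small-z}: $\pi'$ is a $(b, c\gamma)$-balanced monotone CRS for $\cP'$ with $c\gamma \ge c(e^{-b}-\eps)$ after the reduction to small probabilities. Since $\bm{z} \in b\cdot\cP'' \subseteq b\cdot\cP'$, the standard fact that a $(b,\beta)$-balanced monotone CRS $\pi'$ satisfies $\E[f(\pi'(R(\bm{z})))] \ge \beta \cdot F(\bm{z})$ for monotone submodular $f$ (the FKG-type argument from the CRS literature, stated in Appendix~\ref{app:background2}) yields
\[
\E[f(T_{\text{ALG}})] \ge \E_{T \sim R(\bm{z})}[f(\pi'(T))] \ge c\gamma \cdot F(\bm{z}) \ge c(e^{-b}-\eps) \cdot F(\bm{z}).
\]

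\textbf{Main obstacle.} The delicate step is the reduction $T_{\text{ALG}} \supseteq \pi'(T)$, because $\pi$ is used online and its accept/reject decisions on $i$ depend on the entire history of active days plus the adversary's adaptive ordering, whereas $\bar{\pi}$ is an offline object defined over a fixed set $A$. The key is that Observation~\ref{obs:char-crs} holds pointwise on every instantiation of the OCRS's internal randomness and every arrival order, so conditional on the realization of $T$ and of the adversary's ordering of $T_\downarrow$, acceptance by $\bar{\pi}$ implies acceptance by $\pi$; averaging then preserves the inclusion in expectation and hence the inequality on $f$-values.
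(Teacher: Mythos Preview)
Your proposal is correct and follows essentially the same route as the paper's proof: both use Lemma~\ref{lem:ti-indep} and independence across days to identify $\bigcup_i T_i$ with $R(\bm{z})$, invoke Observation~\ref{obs:char-crs} (pointwise over realizations of the active set and of $\cF_{\pi,\bm{x}}$) to obtain $\pi'(T)\subseteq T_{\text{ALG}}$, apply monotonicity of $f$, and finish via Lemma~\ref{lem:new-crs} with Remark~\ref{rmk:small-z}. One small inaccuracy: the implication ``$(b,\beta)$-balanced monotone CRS $\Rightarrow \E[f(\pi'(R(\bm{z})))]\ge \beta F(\bm{z})$'' is not actually stated in Appendix~\ref{app:background2}; the paper, like you, simply uses it as a known consequence from~\cite{crs}.
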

\begin{proof}
  It is easy to see from the algorithm's description that, for any
  $X_i$, only the actual realization of $X_i$ can be potentially chosen
  to be added to $T_{\text{ALG}}$. Furthermore, the variables chosen by
  the algorithm are feasible in $\cC$, since this is ensured by the
  OCRS.

  Let $T_i$ be the random set generated by the online algorithm for
  variable $X_i$. We see that $T_i$ is independent of $T_{i'}$ for
  $i \neq i'$, due to independence of the realization of the random
  variables $X_1,\ldots,X_n$ and the independence of the coins used in
  the algorithm across days. From Lemma~\ref{lem:ti-indep}, the
  distribution of $T_i$ is according to the product distribution
  $R \sim \bm{z_i}$ over $\cU_i$.  Let $Q = \bigcup_{i=1}^n T_i$. It
  follows that $Q$ is a random set drawn from the product distribution
  induced by $\bm{z}$ over $\cU$. Consider the distribution of the set
  $Q_\downarrow \in \cN$. Because of the product distribution of $Q$
  it can be see that the distribution of $Q_\downarrow$ is a product
  distribution on $\cN$ where $i \in \cN$ appears in $Q_\downarrow$
  with probability $x_i = 1 - \prod_{e \in \cU_i} (1-z_e) \le b$ since
  $z \in b \cdot \cP''$. Note that the algorithm feeds $Q_\downarrow$ to the
  OCRS $\pi$ which is $(b,c)$-selectable. Let $\bar{\pi}$ be the
  characteristic CRS of $\pi$.

  Fix a realization $S$ of $Q$, along with an instantiation of $\cF_{\pi, \bm{x}}$.
  Notice $e \in T_{\text{ALG}} \cap \cU_i$ if and only if
  $|S \cap \cU_i| = \{e\}$ and $i \in \pi(S_\downarrow)$. In
  fact,
  \[
    T_{\text{ALG}} = \bigcup_{\stack{i \in \pi(S_\downarrow)}{|S \cap \cU_i| = 1}} {(S \cap \cU_i)},
  \]
  by the description of Algorithm \ref{alg:mon}. By Observation
  \ref{obs:char-crs}, we have $\bar{\pi}(A) \subseteq \pi(A)$ for any
  $A \subseteq \cN$, and thus $\pi'(S) \subseteq
  T_{\text{ALG}}$. Therefore, by the monotonicity of $f$, we have
  $f\prn{T_{\text{ALG}}} \geq f\prn{\pi'(S)}$, and by unconditioning
\[
  \E\brk{f\prn{T_{\text{ALG}}}} \geq \E\brk{f\prn{\pi'(Q)}}.
\]
Finally, by Lemma \ref{lem:new-crs} and Remark \ref{rmk:small-z}, we have that for any $\bm{z} \in b \cdot \cP''$ and any fixed $\eps > 0$,
\[
\E\brk{f\prn{\pi'(Q)}} \geq c \prn{e^{-b} - \eps} \cdot F(\bm{z}),
\]
which yields
\[
\E\brk{f\prn{T_{\text{ALG}}}} \geq c \prn{e^{-b} - \eps} \cdot F(\bm{z}).
\]
\end{proof}

We are now ready for the main theorem of this section, which follows from
Lemmas \ref{lem:mon-alg} and \ref{lem:mon-cont-greedy}, and Claim \ref{clm:fmax-opt}.

\begin{theorem}\label{thm:mon}
  Let $(\cN, \bm{D}, \cC, f)$ be an instance of the Submodular Prophet
  Inequality model and let $OPT$ denote the prophet's value. Given a
  $(b, c)$-selectable greedy OCRS $\pi$ for $\cP$, for a non-negative
  monotone submodular function $f$, $\bm{z} \in b \cdot \cP''$ and
  fixed $\eps > 0$, Algorithm \ref{alg:mon} returns a set
  $T_{\text{ALG}}$ such that
\[
\E\brk{f(T_{\text{ALG}})} \geq c \prn{e^{-b} - \eps} \prn{1 - e^{-b}} \cdot \opt.
\]
\end{theorem}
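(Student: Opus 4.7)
The plan is to chain together the three ingredients developed in Section~\ref{sec:spi}: the upper bound on $\opt$ from Claim~\ref{clm:fmax-opt}, the monotone Continuous Greedy guarantee from Lemma~\ref{lem:mon-cont-greedy}, and the rounding guarantee from Lemma~\ref{lem:mon-alg}. Nothing new needs to be proved here; the work is in lining up the inequalities correctly and absorbing error terms.

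First, I would invoke Claim~\ref{clm:fmax-opt} to write $\max_{\bm{y} \in \cP''} f^+(\bm{y}) \ge \opt$. Next, I would run the Continuous Greedy algorithm on $F$ and $\cP''$ up to time $b$ to produce the vector $\bm{z} \in b \cdot \cP''$ fed into Algorithm~\ref{alg:mon}. By Lemma~\ref{lem:mon-cont-greedy} applied with polytope $\cP''$, this $\bm{z}$ satisfies
\[
F(\bm{z}) \;\ge\; \bigl(1 - e^{-b} - o(1)\bigr) \cdot \max_{\bm{y} \in \cP''} f^+(\bm{y}) \;\ge\; \bigl(1 - e^{-b} - o(1)\bigr) \cdot \opt,
\]
where the $o(1)$ can be made smaller than any fixed $\eps'>0$ by running the algorithm long enough.

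Now I would apply Lemma~\ref{lem:mon-alg} to this $\bm{z}$ to conclude
\[
\E\bigl[f(T_{\text{ALG}})\bigr] \;\ge\; c\bigl(e^{-b} - \eps'\bigr) \cdot F(\bm{z}) \;\ge\; c\bigl(e^{-b} - \eps'\bigr)\bigl(1 - e^{-b} - \eps'\bigr) \cdot \opt.
\]
Finally, given the target $\eps$ in the statement, I would pick $\eps'$ small enough (depending on $\eps$, $b$, and $c$) so that $c(e^{-b}-\eps')(1-e^{-b}-\eps') \ge c(e^{-b}-\eps)(1-e^{-b})$, which is possible because the left side is continuous in $\eps'$ and matches $c e^{-b}(1-e^{-b})$ at $\eps' = 0$.

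There is no real obstacle: the only subtlety is that the $\eps$ in Lemma~\ref{lem:mon-alg} and the $o(1)$ in Lemma~\ref{lem:mon-cont-greedy} are independent error terms that must both be controlled by a single $\eps$ in the final bound, and also that Remark~\ref{rmk:small-z} (and hence Lemma~\ref{lem:mon-alg}) requires the coordinates of $\bm{z}$ to be small; this is guaranteed by Observation~\ref{obs:reduction}, which lets us pre-process the instance so that $z_e \le \eps'$ for all $e \in \cU$ without changing $\opt$ or the achievable competitive ratio. Once this reduction is applied, the three inequalities compose cleanly to yield the stated bound $c(e^{-b}-\eps)(1-e^{-b}) \cdot \opt$.
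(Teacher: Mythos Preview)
Your proposal is correct and follows exactly the route the paper takes: the theorem is stated as an immediate consequence of Lemma~\ref{lem:mon-alg}, Lemma~\ref{lem:mon-cont-greedy}, and Claim~\ref{clm:fmax-opt}, chained together just as you describe, with the reduction of Observation~\ref{obs:reduction} used to ensure the small-coordinate hypothesis. Your handling of the $\eps$ bookkeeping is slightly more explicit than the paper's, but the argument is the same.
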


Next, we provide concrete results for several constraints, given an
OCRS for these constraints. First, we summarize known greedy OCRSs
for various constraints of interest below.

\begin{lemma}[Theorem 1.1 from \cite{ocrs}]\label{lem:ocrs-items}
There exist:
\vspace{-0.5em}
\begin{itemize}
\setlength\itemsep{0.0em}
	\item For every $b \in [0, 1]$, a $(b, 1 - b)$-selectable deterministic greedy OCRS for matroid polytopes.
	\item For every $b \in [0, 1]$, a $(b, e^{-2b})$-selectable randomized greedy OCRS for matching polytopes.
	\item For every $b \in [0, \frac{1}{2}]$, a $(b, \frac{1 - 2b}{2 - 2b})$-selectable randomized greedy OCRS for the natural relaxation of a knapsack constraint.
\end{itemize}
\end{lemma}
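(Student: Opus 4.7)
The plan is to construct, for each of the three constraint families, an explicit greedy OCRS $\pi$ by specifying its downward-closed family $\cF_{\pi,\bm{x}}\subseteq 2^\cN$ and verifying that, for every $i\in\cN$, the probability that $i$ can be greedily added to any feasible subset of the other active elements is at least the claimed $c$. Since the accepted set is always a subset of the active set $R(\bm{x})$ and $\cF_{\pi,\bm{x}}$ will be downward-closed in a structural sense, in each case it suffices to reason about $R(\bm{x})$ rather than about the actual online sequence of accepts.

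For the matroid polytope I take $\cF_{\pi,\bm{x}}=\cI(M)$, the independent sets of $M$. Then ``$i$ is selectable'' is exactly the event $\{i\notin\mathrm{span}_M(R(\bm{x}))\}$, so the claim reduces to showing $\Pr[i\in\mathrm{span}_M(R(\bm{x}))]\le b$ whenever $\bm{x}\in b\cdot P_M$. I plan to derive this from the rank identity $\E[r(R+i)-r(R)]=\Pr[i\notin\mathrm{span}(R)]$ combined with a concavity-along-$\bm{x}$ estimate of $\E[r(R(\bm{x}))]$; equivalently, one can write $\bm{x}/b$ as a convex combination of indicators of independent sets and use a short exchange argument to control the span probability. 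This span estimate is the main technical input for the matroid case.

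For the matching polytope, $\bm{x}\in b\cdot P_{\mathrm{match}}(G)$ gives $\sum_{e\ni v}x_e\le b$ at every vertex $v$. The plan is to run, at each vertex $v$ \emph{independently}, a randomized single-vertex greedy sub-OCRS on the star at $v$ (a rank-one matroid with total $\bm{x}$-weight at most $b$), using a fresh coin flip per arriving edge at each endpoint, and then to accept $e=(u,v)$ iff it is accepted by the sub-OCRSs at both $u$ and $v$. A direct survival calculation at a single vertex lower-bounds the local acceptance probability, conditional on $e$ being active, by $\prod_{e'\ne e,\,e'\ni v}(1-x_{e'})\ge e^{-b}$; the independent-randomness construction decouples the two endpoint decisions, yielding the overall selectability $e^{-b}\cdot e^{-b}=e^{-2b}$.

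For the knapsack relaxation, scale so the capacity is $1$ and sizes $s_i\in(0,1]$; then $\bm{x}\in b\cdot P_{\mathrm{knap}}$ gives $\sum_i s_i x_i\le b$. I take $\cF_{\pi,\bm{x}}$ to be the knapsack-feasible subsets and use a randomized greedy OCRS with thinning. Conditioning on $i$ active, I analyze the residual-capacity event by Markov's inequality applied to $\sum_{j\ne i}s_j\cdot\mathbf{1}[j\in R(\bm{x})]$, which has expectation at most $b$. Splitting elements into ``large'' items ($s_i>1/2$, handled via a rank-one sub-OCRS since at most one large item can ever fit) and ``small'' items ($s_i\le 1/2$, where the Markov bound gives survival probability at least $1-2b$), and then balancing the thinning weights across the two buckets, yields the claimed $(1-2b)/(2-2b)$ selectability. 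The step I expect to be most delicate is the matroid span estimate $\Pr[i\in\mathrm{span}_M(R(\bm{x}))]\le b$, which genuinely uses $\bm{x}\in b\cdot P_M$ rather than only the marginal bound $x_i\le b$; the matching and knapsack constructions are then comparatively routine applications of the greedy-OCRS template.
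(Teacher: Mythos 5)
First, note that the paper does not prove this lemma at all: it is quoted verbatim as Theorem~1.1 of Feldman--Svensson--Zenklusen \cite{ocrs} and used as a black box, so there is no in-paper proof to match. Judged on its own merits, your reconstruction has a fatal gap exactly at the step you flag as the main technical input. The ``span estimate'' $\Pr[i \in \mathrm{span}_M(R(\bm{x}))] \le b$ for $\bm{x} \in b \cdot \cP_M$ is \emph{false}, so the greedy OCRS with $\cF_{\pi,\bm{x}} = \cI(M)$ is not $(b,1-b)$-selectable. Counterexample: take the graphic matroid of the graph on vertices $u, v, w_1, \dots, w_m$ with edges $e = \{u,v\}$, $a_j = \{u,w_j\}$, $b_j = \{w_j,v\}$, and set $x_e = \eps$, $x_{a_j} = x_{b_j} = b/2$. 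One checks $\bm{x} \in b\cdot \cP_M$ (the tightest rank constraint, over $k$ of the paths, reads $2k\cdot\tfrac{b}{2} \le b(k+1)$), yet
\[
\Pr\left[e \in \mathrm{span}_M(R(\bm{x}))\right] \;\ge\; 1 - \left(1 - \tfrac{b^2}{4}\right)^m \;\longrightarrow\; 1
\]
as $m \to \infty$, which exceeds $b$ for any fixed $b<1$. This is precisely why the matroid OCRS in \cite{ocrs} is nontrivial: they do not accept greedily against $\cI(M)$, but construct a chain decomposition of the matroid (repeatedly extracting a densest set with respect to $\bm{x}$ and passing to restrictions/contractions), define $\cF_{\pi,\bm{x}}$ as the sets that are independent in each link of the chain, and prove the $1-b$ bound by an exchange/LP argument local to each link. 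That structural decomposition is the missing idea, and no global span bound can substitute for it.

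The matching case also contains an incorrect inequality: since $1-t \le e^{-t}$, you have $\prod_{e' \ni v,\, e' \ne e}(1 - x_{e'}) \le e^{-\sum_{e'} x_{e'}}$, and the product can be as small as $1-b$ (a single competing edge of weight $b$), which is strictly below $e^{-b}$. Your construction therefore only yields $(1-b)^2 \le e^{-2b}$, which does not prove the stated bound. The actual $(b, e^{-2b})$ scheme in \cite{ocrs} randomly thins each edge, keeping $e'$ in $\cF_{\pi,\bm{x}}$ with probability $q_{e'} = \frac{1 - e^{-2x_{e'}}}{2x_{e'}}$, so that the survival factor per competing edge becomes $1 - q_{e'}x_{e'} = \frac{1 + e^{-2x_{e'}}}{2} \ge e^{-x_{e'}}$ by AM--GM; the deliberate randomization is what converts the additive degree bound $\sum_{e' \ni u} x_{e'} + \sum_{e' \ni v} x_{e'} \le 2b$ into the multiplicative $e^{-2b}$. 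Your knapsack sketch (Markov on the residual size plus a large/small split with a randomized choice of bucket) is essentially the argument in \cite{ocrs} and is fine in outline, but the first two bullets as proposed do not establish the lemma.
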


By combining Lemma \ref{lem:ocrs-items} with Theorem \ref{thm:mon},
we obtain the following corollary.

\begin{corollary}\label{cor:mon}
  Let $(\cN, \bm{D}, \cC, f)$ be an instance of the Submodular Prophet
  Inequality model and let $OPT$ denote the prophet's value. For a
  non-negative monotone submodular function $f$ and any fixed
  $\eps > 0$, Algorithm \ref{alg:mon} returns a set $T_{\text{ALG}}$
  such that
\begin{align*}
  \E_{\bm{X}, \cT}[f(T_{\text{ALG}})] &\geq (1 - b) \prn{e^{-b} - \eps} \prn{1 - e^{-b}} \cdot OPT, & \forall b \in [0,1], \text{ if } \cC \text{ is a matroid constraint} \\
  \E_{\bm{X}, \cT}[f(T_{\text{ALG}})] &\geq e^{-2b} \prn{e^{-b} - \eps} \left(1 - e^{-b}\right) \cdot OPT, & \forall b \in [0,1], \text{ if } \cC \text{ is a matching constraint} \\
  \E_{\bm{X}, \cT}[f(T_{\text{ALG}})] &\geq \frac{1 - 2b}{2 - 2b} \prn{e^{-b} - \eps} \left(1 - e^{-b}\right) \cdot OPT, & \forall b \in \left[0,\frac{1}{2}\right], \text{ if } \cC \text{ is a knapsack constraint}
\end{align*}
where $\cT = \{T^1, \dots, T^n\}$ denotes the set of random sets Algorithm \ref{alg:mon} generates.
\end{corollary}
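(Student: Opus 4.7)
The plan is to apply Theorem \ref{thm:mon} separately for each of the three constraint families, plugging in the concrete $(b,c)$-selectable greedy OCRSs provided by Lemma \ref{lem:ocrs-items}. Theorem \ref{thm:mon} already establishes the generic bound $c(e^{-b}-\eps)(1-e^{-b}) \cdot \opt$ for any $(b,c)$-selectable greedy OCRS and any fixed $\eps > 0$, so the corollary reduces to substituting the appropriate value of $c$ for each constraint family.

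Concretely, for the matroid constraint I would invoke the $(b, 1-b)$-selectable deterministic greedy OCRS from Lemma \ref{lem:ocrs-items}, which is available for every $b \in [0,1]$; this yields the bound $(1-b)(e^{-b}-\eps)(1-e^{-b}) \cdot \opt$. For the matching constraint, using the $(b, e^{-2b})$-selectable randomized greedy OCRS for $b \in [0,1]$ gives $e^{-2b}(e^{-b}-\eps)(1-e^{-b}) \cdot \opt$. For the knapsack constraint, using the $(b, \frac{1-2b}{2-2b})$-selectable randomized greedy OCRS, valid for $b \in [0, \frac{1}{2}]$, yields $\frac{1-2b}{2-2b}(e^{-b}-\eps)(1-e^{-b}) \cdot \opt$.

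There is no substantive obstacle here, since the hard work is already packaged into Theorem \ref{thm:mon} and the OCRS constructions in Lemma \ref{lem:ocrs-items}. A minor item worth verifying is that Theorem \ref{thm:mon} applies equally well when the underlying OCRS is randomized (as in the matching and knapsack cases); this is already accommodated by Definition \ref{def:char-crs}, which explicitly allows the characteristic CRS $\bar{\pi}$ to inherit randomness from $\pi$, and by the proof of Lemma \ref{lem:mon-alg}, which conditions on an instantiation of $\cF_{\pi,\bm{x}}$ before unconditioning. One also needs each of the underlying polytopes (matroid polytope, matching polytope, natural knapsack relaxation) to be solvable so that the fractional vector $\bm{z} \in b \cdot \cP''$ guaranteed by Lemma \ref{lem:mon-cont-greedy} can be computed efficiently, which is standard.
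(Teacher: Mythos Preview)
Your proposal is correct and matches the paper's approach exactly: the paper states that Corollary~\ref{cor:mon} follows ``by combining Lemma~\ref{lem:ocrs-items} with Theorem~\ref{thm:mon},'' which is precisely the substitution of the constraint-specific $(b,c)$ parameters into the generic bound $c(e^{-b}-\eps)(1-e^{-b})\cdot\opt$ that you describe. Your additional remarks about randomized OCRSs and polytope solvability are valid sanity checks but go beyond what the paper itself spells out for this corollary.
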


\subsection{Non-Negative Submodular Functions}

Below we describe the algorithm for non-negative functions. It is very
similar to the monotone case except for a minor change in accepting an
element $e$; in the final step, the algorithm tosses an additional
random coin and accepts $e$ with probability $1/2$ (see Line 10 in the
algorithm). This is inspired by the similar idea in \cite{ocrs} in
handling non-monotone functions.

\begin{algorithm}\underline{{\sc{General Rounding}}{$\prn{\cU, f, \bm{D}, \cC, \pi, \bm{z}}$}}
\\	$T_{\text{ALG}} = \emptyset$
\\  \For{$h \from 1$ \KwTo $n$}{
		Let $X_i$ be variable that arrives on day $h$
\\  	Let $e  \in \cU_i$ be the realization of $X_i$
\\      With probability $\frac{\Pr_{R \sim \bm{z_i}}\brk{R = \{e\}}}{\cD_i(e)}$, set $T_i \from \{e\}$
\\      Otherwise, set $T_i$ to be a random subset $R$ of $\cU_i$, drawn according to $\bm{z}_i$, conditioned on $|R| \neq 1$
\\      \If{$T_i \neq \emptyset$}{
            Feed $i$ to OCRS $\pi$ for $\cP$
\\          \If{$\pi$ accepts $i$ \emph{and} $T_i = \{e\}$}{
                With probability $\frac{1}{2}$, $T_{\text{ALG}} \from T_{\text{ALG}} \cup \{e\}$}}}
Return $T_{\text{ALG}}$
\caption{Algorithm for General Non-Negative Submodular Functions}
\label{alg:non-mon}
\end{algorithm}

Notice that Lemmas \ref{lem:new-crs} and \ref{lem:ti-indep} still
hold, as they do not depend on the monotonicity of $f$. We present the
following analogue of Lemma \ref{lem:mon-alg} for general submodular
functions. The proof of the next lemma relies on an argument
similar to that in \cite{ocrs}.

\begin{lemma}\label{lem:non-mon-alg}
  Given a $(b, c)$-selectable greedy OCRS $\pi$ for $\cP$, for any
  $\bm{z} \in b \cdot \cP''$ and fixed $\eps > 0$, Algorithm
  \ref{alg:non-mon} returns a set $T_{\text{ALG}} \subseteq \cU$ such
  that
  \[
    \E\brk{f(T_{\text{ALG}})} \geq \frac{c \prn{e^{-b} - \eps}}{4} \cdot F(\bm{z}).
  \]
\end{lemma}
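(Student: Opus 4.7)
\textbf{Proof plan for Lemma \ref{lem:non-mon-alg}.}

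The approach is to follow the structure of Lemma \ref{lem:mon-alg}'s proof verbatim, up to and including the derivation of the CRS $\pi'$ and the containment $\pi'(Q) \subseteq W(Q)$, where $Q = \bigcup_i T_i \sim R(\bm z)$ (by Lemma \ref{lem:ti-indep}) and $W(Q) := \bigcup_{i \in \pi(Q_\downarrow),\ |Q \cap \cU_i|=1}(Q \cap \cU_i)$ is the set Algorithm \ref{alg:mon} would have output on the same randomness. The only place where the monotone proof invoked monotonicity of $f$---to conclude $f(T_{\text{ALG}}) \geq f(\pi'(Q))$ from $\pi'(Q) \subseteq T_{\text{ALG}}$---will be replaced by a two-stage symmetrization that exploits the extra fair coin in Line~10 of Algorithm \ref{alg:non-mon}. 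The symmetrization loses a factor of $\tfrac{1}{4}$, which is exactly the gap between the two lemma bounds.

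Concretely, in Algorithm \ref{alg:non-mon} the output $T_{\text{ALG}}$ is a $\tfrac{1}{2}$-subsample of $W(Q)$, independently per element. Setting $A = \pi'(Q)$ and $B = W(Q) \setminus A$, I decompose $T_{\text{ALG}} = T_A \cup T_B$ where $T_A \subseteq A$ and $T_B \subseteq B$ are independent $\tfrac{1}{2}$-subsamples. Fix $A, B, T_A$. Because the Line-10 coins on $B$ are fair and independent, $T_B$ is distributed identically to $\overline T_B := B \setminus T_B$; applying submodularity to $X = T_A \cup T_B$ and $Y = T_A \cup \overline T_B$ (so that $X \cup Y = T_A \cup B$ and $X \cap Y = T_A$) together with non-negativity of $f$ gives
\[
  f(T_A \cup T_B) + f(T_A \cup \overline T_B) \;\geq\; f(T_A \cup B) + f(T_A) \;\geq\; f(T_A).
\]
Averaging over the $B$-coins yields $\E\bigl[f(T_{\text{ALG}}) \mid A, B, T_A\bigr] \geq \tfrac{1}{2} f(T_A)$. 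Repeating the identical symmetrization on $T_A$ versus $\overline T_A := A \setminus T_A$ gives $\E\bigl[f(T_A) \mid A\bigr] \geq \tfrac{1}{2} f(A)$, and chaining and unconditioning produces $\E[f(T_{\text{ALG}})] \geq \tfrac{1}{4}\,\E[f(\pi'(Q))]$.

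It then suffices to reuse, without change, the monotone-case bound on $\E[f(\pi'(Q))]$: by Lemma \ref{lem:new-crs} and Remark \ref{rmk:small-z}, $\pi'$ is a monotone $\bigl(b,\,c(e^{-b}-\eps)\bigr)$-balanced CRS for $\cP'$, and the standard contention-resolution inequality for a monotone balanced CRS applied to an arbitrary non-negative submodular function yields $\E[f(\pi'(Q))] \geq c(e^{-b}-\eps)\,F(\bm z)$. Combined with the $\tfrac{1}{4}$ factor this gives the claimed bound. The delicate point is choosing the right conjugate pairing in the submodular inequality: one must flip only the $B$-coins (keeping $T_A$ fixed) so that the right-hand side collapses to $f(T_A \cup B) + f(T_A)$ and non-negativity can discard the first term, and then apply the same argument a second time inside $A$. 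Pairing $(T_A, T_B)$ with $(\overline T_A, \overline T_B)$ in a single shot would only lower-bound $\E[f(T_{\text{ALG}})]$ in terms of $f(W(Q))$, which can vanish in the non-monotone regime and thus give nothing.
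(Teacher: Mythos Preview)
Your overall structure matches the paper's: set $L = \pi'(Q)$, note $L \subseteq W$, view $T_{\text{ALG}}$ as $L(1/2)$ augmented by a random remainder inside $W \setminus L$, lose a factor $\tfrac12$ for the remainder and another $\tfrac12$ passing from $L(1/2)$ to $L$, then invoke Lemma~\ref{lem:new-crs} and Remark~\ref{rmk:small-z}. Your second symmetrization (on $T_A$) is fine and is equivalent to the paper's use of Lemma~\ref{lem:vondrak2}, since $A = \pi'(Q)$ depends only on $Q$ and $\cF_{\pi,\bm{x}}$ and is therefore independent of the Line-10 coins.

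The gap is in the first symmetrization. You fix $A, B, T_A$ and assert that the Line-10 coins on $B$ remain fair and independent, so that $T_B \sim \overline{T_B}$. But the adversary here is almighty: it adapts the arrival order to the Line-10 coins it has already seen. Since $B = W(Q) \setminus A$ depends on which days the greedy OCRS accepts, and those acceptances depend on the arrival order, the set $B$ is itself a function of the Line-10 coins. Conditioning on $B$ can therefore bias the coins of elements in $B$. Concretely, take $Q_\downarrow = \{1,2,3\}$ with $\cF_{\pi,\bm{x}}$ consisting of all sets of size at most $2$, so $\bar{\pi}(Q_\downarrow)=\emptyset$ and $A=\emptyset$. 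Let the adversary send $X_1$ on day~1 (accepted, coin $c_1$ flipped) and then send $X_2$ on day~2 if $c_1 = H$, else $X_3$. Then $B = \{e_1,e_2\}$ holds precisely when $c_1 = H$, so conditioned on $B = \{e_1,e_2\}$ we have $e_1 \in T_B$ with probability $1$, not $\tfrac12$; the complement-symmetry $T_B \sim \overline{T_B}$ fails.

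The paper sidesteps this by never conditioning on $B$. It applies Lemma~\ref{lem:buchbinder} to $h_{T_A}(\cdot) = f(T_A \cup \cdot)$, which only needs that every $e \notin L$ lies in $\Delta = T_{\text{ALG}} \setminus L$ with \emph{marginal} probability at most $\tfrac12$ given $S,\cF_{\pi,\bm{x}},L(1/2)$; this holds because $e \in T_{\text{ALG}}$ forces the fair coin $c_e$ to come up heads, and $c_e$ is independent of $S,\cF_{\pi,\bm{x}}$ and of the coins in $L$. Replacing your $T_B$-symmetrization by this marginal argument fixes the proof; the rest of your plan goes through unchanged.
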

\begin{proof}
  At every step $i$, Algorithm \ref{alg:non-mon} draws a random set
  $T_i$ according to the product distribution on $\cU_i$ with
  probabilities $\bm{z_i}$, by Lemma \ref{lem:ti-indep}. Let
  $Q = \bigcup_{i \in \cN} {T_i}$. Since the realizations between days
  are independent, $Q$ is a random set that follows the product
  distribution on $\cU$ with probabilities $\bm{z}$. Fix a realization
  $S$ of $Q$ and an instantiation of $\cF_{\pi, \bm{x}}$. Notice that
  $e \in T_{\text{ALG}} \cap \cU_i$ if and only if
  $|S \cap \cU_i| = 1$, $i \in \pi(S_\downarrow)$ and the coin toss of
  Line 10 succeeds. In fact, if we denote
  \[
    W = \bigcup_{\stack{i \in \pi(S_\downarrow)}{|S \cap \cU_i| = 1}} {(S \cap \cU_i)},
  \]
  we have that $\E\brk{f(T_{\text{ALG}})} = \E[f(W(1/2))]$, by the
  description of Algorithm \ref{alg:non-mon}. By Observation
  \ref{obs:char-crs}, we have $\bar{\pi}(A) \subseteq \pi(A)$ for any
  $A \subseteq \cN$, and thus $\pi'(S) \subseteq W$. For ease of
  notation, we denote $\pi'(S)$ by $L$. For our fixed choice of $S$
  and $\cF_{\pi, \bm{x}}$, $L$ is deterministic. Therefore, we can
  think of $W(1/2)$ as obtained by first calculating a set $L(1/2)$ in
  which every element of $L$ appears with probability $1/2$
  independently, and then adding to it a random set
  $\Delta \subseteq \cU \setminus L$. The almighty prophet can control
  the order in which the elements arrive, and thus can make the
  distribution of $\Delta$ depend on $L(1/2)$. However, $\Delta$ is
  guaranteed to contain every element with probability at most $1/2$,
  for every given realization of $L(1/2)$. Thus,
\begin{align*}
  \E\brk{f(W(1/2)) \midd S, \cF_{\pi, \bm{x}}} &= \E\brk{f(L(1/2) \cup \Delta) \midd S, \cF_{\pi, \bm{x}}} \\
                                               &= \sum_{B \subseteq L} {\Pr\brk{L(1/2) = B \midd S, \cF_{\pi, \bm{x}}} \cdot \E\brk{f(B \cup \Delta) \midd S, \cF_{\pi, \bm{x}}}} \\
&\geq \sum_{B \subseteq L} {\Pr\brk{L(1/2) = B \midd S, \cF_{\pi, \bm{x}}} \cdot \frac{\E\brk{f(B) \midd S, \cF_{\pi, \bm{x}}}}{4} } \\
&= \frac{\E\brk{f(L(1/2)) \midd S, \cF_{\pi, \bm{x}}}}{2} \\
&= \frac{\E\brk{f(L) \midd S, \cF_{\pi, \bm{x}}}}{4},
\end{align*}
where the first inequality follows from Lemma \ref{lem:buchbinder} since the function $h_B(T) = h(B \cup T)$ is non-negative and submodular for all $B \subseteq \cU$, and the second inequality follows from Lemma \ref{lem:vondrak2}. Taking an expectation over all possible realizations of $S$ and $\cF_{\pi, \bm{x}}$, we obtain
\[
\E\brk{f(W(1/2))} = \E_{S, \cF_{\pi, \bm{x}}}\brk{\E\brk{f(W(1/2)) \midd S, \cF_{\pi, \bm{x}}}} \geq \E_{S, \cF_{\pi, \bm{x}}}\brk{\frac{\E\brk{f(L) \midd S, \cF_{\pi, \bm{x}}}}{4}} = \frac{\E\brk{f(L)}}{4}.
\]
Finally, by Lemma \ref{lem:new-crs} and Remark \ref{rmk:small-z}, we have that for any $\bm{z} \in b \cdot \cP''$ and any fixed $\eps > 0$,
\[
\frac{\E\brk{f(L)}}{4} \geq \frac{c \prn{e^{-b} - \eps}}{4} \cdot F(\bm{z}),
\]
which implies
\[
\E\brk{f(T_{\text{ALG}})} = \E\brk{f(W(1/2))} \geq \frac{c \prn{e^{-b} - \eps}}{4} \cdot F(\bm{z}).
\]
\end{proof}

We are now ready to proceed with the main result for general submodular functions,
which follows from Lemma \ref{lem:non-mon-alg}, Theorem \ref{thm:intro-gap-mcg2}, and
Claim \ref{clm:fmax-opt}.

\begin{theorem}\label{thm:non-mon}
Let $(\cN, \bm{D}, \cC, f)$ be an instance of the Submodular Prophet Inequality model and let $OPT$ denote the prophet's value. Given a $(b, c)$-selectable greedy OCRS $\pi$ for $\cP$, for a non-negative submodular function $f$, $\bm{z} \in b \cdot \cP''$ and fixed $\eps > 0$, Algorithm \ref{alg:non-mon} returns a set $T_{\text{ALG}}$ such that
\[
\E\brk{f(T_{\text{ALG}})} \geq \frac{c \prn{e^{-b} - \eps}}{4} \cdot \prn{1 - e^{-b} - \eps} \cdot OPT.
\]
\end{theorem}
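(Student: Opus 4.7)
The plan is to chain together three results already established in the paper: the refined analysis of the Measured Continuous Greedy algorithm in Theorem \ref{thm:intro-gap-mcg2}, the concave-closure upper bound in Claim \ref{clm:fmax-opt}, and the rounding guarantee of Algorithm \ref{alg:non-mon} in Lemma \ref{lem:non-mon-alg}. The task is essentially to identify the right fractional point $\bm{z} \in b \cdot \cP''$ to feed into the algorithm, and then to combine the bounds.

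First I would invoke the reduction to small probabilities given in Observation \ref{obs:reduction}, which allows us to assume that $\cD(e) \leq \eps'$ for every $e \in \cU$ and any prescribed $\eps' > 0$. Since $\cP'' = \{\bm{z} \in \cP' : z_e \leq \cD(e)\}$, this forces $\cP'' \subseteq \eps' \cdot [0,1]^{\cU}$, so the parameter $p$ appearing in Theorem \ref{thm:intro-gap-mcg2} is at most $\eps'$. Because $\cP$ is a solvable downward-closed polyhedral relaxation of $\cC$, the derived polytope $\cP''$ is also solvable and downward-closed, so the \mcg algorithm can be run on $F$ and $\cP''$ in randomized polynomial time.

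Next I would produce $\bm{z}$ by running \mcg on $F$ and $\cP''$ until time $b$. Taking $\eps'$ small enough, Theorem \ref{thm:intro-gap-mcg2} (in the form recapped in Remark \ref{rmk:refined-mcg}) guarantees that the output $\bm{z} \in b \cdot \cP''$ satisfies
\[
F(\bm{z}) \;\geq\; (1 - e^{-b} - \eps) \cdot \max_{\bm{y} \in \cP''} f^+(\bm{y})
\]
for any fixed $\eps > 0$. Combined with Claim \ref{clm:fmax-opt}, which states that $\max_{\bm{y} \in \cP''} f^+(\bm{y}) \geq \opt$, this yields $F(\bm{z}) \geq (1 - e^{-b} - \eps) \cdot \opt$.

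Finally, feed this $\bm{z}$ into Algorithm \ref{alg:non-mon} and apply Lemma \ref{lem:non-mon-alg}, which guarantees $\E[f(T_{\text{ALG}})] \geq \frac{c(e^{-b} - \eps)}{4} \cdot F(\bm{z})$. Chaining the two estimates gives
\[
\E[f(T_{\text{ALG}})] \;\geq\; \frac{c(e^{-b} - \eps)}{4} \cdot (1 - e^{-b} - \eps) \cdot \opt,
\]
which is the desired bound. There is no substantive obstacle here; the only minor subtlety is bookkeeping the two different $\eps$-slacks (one from \mcg, one from the $\gamma$-bound inside Lemma \ref{lem:non-mon-alg}), but both are governed by how small we take $\eps'$ in the reduction, so they can be combined into a single \emph{for any fixed $\eps > 0$} statement without difficulty.
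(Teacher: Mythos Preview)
Your proposal is correct and follows exactly the approach the paper takes: it states that the theorem ``follows from Lemma \ref{lem:non-mon-alg}, Theorem \ref{thm:intro-gap-mcg2}, and Claim \ref{clm:fmax-opt},'' which is precisely the chain you spell out, with the reduction of Observation \ref{obs:reduction} and Remark \ref{rmk:refined-mcg} supplying the small-$p$ regime needed to invoke Theorem \ref{thm:intro-gap-mcg2}. Your added remark about consolidating the two $\eps$-slacks is accurate and simply makes explicit what the paper leaves implicit.
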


By combining Lemma \ref{lem:ocrs-items} with Theorem \ref{thm:non-mon},
we obtain the following corollary.

\begin{corollary}\label{cor:non-mon}
Let $(\cN, \bm{D}, \cC, f)$ be an instance of the Submodular Prophet Inequality model and let $OPT$ denote the prophet's value. For a non-negative submodular function $f$ and any fixed $\eps > 0$, Algorithm \ref{alg:non-mon} returns a set $T_{\text{ALG}}$ such that
\begin{align*}
\E[f(T_{\text{ALG}})] &\geq \frac{(1 - b) \prn{e^{-b} - \eps}}{4} \cdot \prn{1 - e^{-b} - \eps} \cdot OPT, & \forall b \in [0,1],  \text{ if } \cC \text{ is a matroid constraint} \\
\E[f(T_{\text{ALG}})] &\geq \frac{e^{-2b} \prn{e^{-b} - \eps}}{4} \cdot \prn{1 - e^{-b} - \eps} \cdot OPT, & \forall b \in [0,1],  \text{ if } \cC \text{ is a matching constraint} \\
\E[f(T_{\text{ALG}})] &\geq \frac{(1 - 2b) \prn{e^{-b} - \eps}}{8 - 8b} \cdot \prn{1 - e^{-b} - \eps} \cdot OPT, & \forall b \in \brk{0, \frac{1}{2}},  \text{ if } \cC \text{ is a knapsack constraint}
\end{align*}
where $\cT = \{T^1, \dots, T^n\}$ denotes the set of random sets Algorithm \ref{alg:mon} generates.
\end{corollary}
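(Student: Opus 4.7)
The plan is to observe that Corollary \ref{cor:non-mon} is a direct consequence of Theorem \ref{thm:non-mon} combined with the explicit greedy OCRS constructions summarized in Lemma \ref{lem:ocrs-items}. The main theorem delivers the bound $\E[f(T_{\text{ALG}})] \geq \frac{c(e^{-b} - \eps)}{4}\bigl(1 - e^{-b} - \eps\bigr) \cdot OPT$ as a function of the selectability parameter $c$, and Lemma \ref{lem:ocrs-items} tells us the value of $c$ available for each constraint family of interest. So, conceptually, there is no new mathematical content in the corollary — it is a bookkeeping step that makes the competitive ratios concrete for the specific constraints mentioned in the introduction's table.

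I would organize the argument as three short applications, one per constraint family. For a matroid constraint, Lemma \ref{lem:ocrs-items} provides a $(b, 1-b)$-selectable deterministic greedy OCRS for any $b \in [0,1]$; substituting $c = 1-b$ into Theorem \ref{thm:non-mon} yields the first inequality. For a matching constraint, the same lemma provides a $(b, e^{-2b})$-selectable randomized greedy OCRS for any $b \in [0,1]$; substituting $c = e^{-2b}$ gives the second inequality. For a knapsack constraint, the lemma provides a $\bigl(b, \tfrac{1-2b}{2-2b}\bigr)$-selectable randomized greedy OCRS for any $b \in [0, \tfrac{1}{2}]$; substituting $c = \tfrac{1-2b}{2-2b}$ into Theorem \ref{thm:non-mon} and simplifying $\frac{1}{4} \cdot \frac{1-2b}{2-2b} = \frac{1-2b}{8-8b}$ gives the third inequality, valid in the stated range of $b$.

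The only mild subtlety to note is that Theorem \ref{thm:non-mon} also requires a fractional solution $\bm{z} \in b \cdot \cP''$ with $F(\bm{z}) \ge (1 - e^{-b} - \eps) \cdot OPT$, which in the non-monotone case is supplied by the measured continuous greedy algorithm together with the reduction to small probabilities (Remark \ref{rmk:refined-mcg}), and this is available for any fixed $\eps > 0$ by choosing the reduction parameter small enough. The $\eps$ appearing inside the $(e^{-b} - \eps)$ and $(1 - e^{-b} - \eps)$ terms are those absorbed from Lemma \ref{lem:new-crs}/Remark \ref{rmk:small-z} and Theorem \ref{thm:intro-gap-mcg2} respectively, both of which can be driven arbitrarily close to zero in the same way.

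There is no real obstacle: once Theorem \ref{thm:non-mon} is in hand, the corollary is purely a substitution of the $(b,c)$ pairs from Lemma \ref{lem:ocrs-items}. The only thing to be careful about is matching the ranges of $b$ to those for which each OCRS exists, and performing the trivial algebraic simplification in the knapsack case. The proof should therefore be just a few lines per constraint family, essentially pointing to the two ingredients and reading off the resulting formula.
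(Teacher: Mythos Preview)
Your proposal is correct and matches the paper's own approach exactly: the paper derives the corollary simply by combining Lemma~\ref{lem:ocrs-items} with Theorem~\ref{thm:non-mon}, substituting the appropriate $c$ for each constraint family. Your write-up is in fact more detailed than the paper's one-line justification.
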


\section{Conclusion} \label{sec:conclusions}
We presented a general framework for submodular prophet inequalities
in the model of \cite{rs} via greedy Online Contention Resolution
Schemes and correlation gaps. The framework yields substantially
improved constant factor competitive ratios for both monotone and
general submodular functions, and can be implemented in polynomial time for many
classes of interesting constraints.
The framework resolves the open question posed in \cite{Lucier-survey}
regarding the model of \cite{rs}.

Along the way, we strengthened the notion of correlation gap for non-negative submodular
functions introduced in \cite{rs}, and provided a fine-grained variant of the standard
correlation gap. For both cases, our bounds are cleaner and tighter. Moreover, we presented
a refined analysis of the Measured Continuous Greedy algorithm for polytopes with small coordinates
and general non-negative submodular functions, showing that, for these cases, it yields a bound
that matches the bound of Continuous Greedy for the monotone case.

An interesting open question is whether our fine-grained correlation gap for general non-negative
submodular functions can be made tight. It is tempting to conjecture that the lower bound
on the gap shown in Theorem \ref{thm:lower-gap} is tight for all values of $p$.
We leave this as an interesting open problem to resolve. It is also interesting to obtain
further improvements in the bounds we showed for \spi. Of particular interest is the
cardinality constraint.

\paragraph{Acknowledgements:} 
The authors thank Sahil Singla for
clarifications on \cite{rs}. CC thanks Vondr\'{a}k and VL thanks
Maria Siskaki, Kesav Krishnan and Venkata Sai Bavisetty for
helpful discussions.

\balance
\bibliographystyle{plain}
\bibliography{references}

\begin{thebibliography}{10}

\bibitem{ward}
Marek Adamczyk, Maxim Sviridenko, and Justin Ward.
\newblock Submodular stochastic probing on matroids.
\newblock {\em Mathematics of Operations Research}, 41(3):1022--1038, 2016.

\bibitem{rocrs}
Marek Adamczyk and Michal Wlodarczyk.
\newblock Random order contention resolution schemes.
\newblock In {\em 59th {IEEE} Annual Symposium on Foundations of Computer
  Science, {FOCS} 2018, Paris, France, October 7-9, 2018}, pages 790--801,
  2018.

\bibitem{adsy-corgap}
Shipra Agrawal, Yichuan Ding, Amin Saberi, and Yinyu Ye.
\newblock Price of correlations in stochastic optimization.
\newblock {\em Operations Research}, 60(1):150--162, 2012.
\newblock Preliminary version in Proc.\ of ACM-SIAM SODA 2010.

\bibitem{Alaei14}
Saeed Alaei.
\newblock Bayesian combinatorial auctions: Expanding single buyer mechanisms to
  many buyers.
\newblock {\em SIAM Journal on Computing}, 43(2):930--972, 2014.

\bibitem{asad-nazer}
Arash Asadpour and Hamid Nazerzadeh.
\newblock Maximizing stochastic monotone submodular functions.
\newblock {\em Management Science}, 62(8):2374--2391, 2016.

\bibitem{AssadiKS}
Sepehr Assadi, Thomas Kesselheim, and Sahil Singla.
\newblock Improved truthful mechanisms for subadditive combinatorial auctions:
  Breaking the logarithmic barrier.
\newblock In {\em Proceedings of the 2021 ACM-SIAM Symposium on Discrete
  Algorithms (SODA)}, pages 653--661. SIAM, 2021.

\bibitem{singla}
Domagoj Bradac, Sahil Singla, and Goran Zuzic.
\newblock {(Near) Optimal Adaptivity Gaps for Stochastic Multi-Value Probing}.
\newblock In Dimitris Achlioptas and L{\'a}szl{\'o}~A. V{\'e}gh, editors, {\em
  Approximation, Randomization, and Combinatorial Optimization. Algorithms and
  Techniques (APPROX/RANDOM 2019)}, volume 145 of {\em Leibniz International
  Proceedings in Informatics (LIPIcs)}, pages 49:1--49:21, Dagstuhl, Germany,
  2019. Schloss Dagstuhl--Leibniz-Zentrum fuer Informatik.

\bibitem{bf-survey}
Niv Buchbinder and Moran Feldman.
\newblock Submodular functions maximization problems.
\newblock In Teofilo~F Gonzalez, editor, {\em Handbook of Approximation
  Algorithms and Metaheuristics: Methologies and Traditional Applications,
  Volume 1}. CRC Press, 2nd edition, 2018.

\bibitem{buchbinder}
Niv Buchbinder, Moran Feldman, Joseph~(Seffi) Naor, and Roy Schwartz.
\newblock Submodular maximization with cardinality constraints.
\newblock In {\em Proceedings of the Twenty-fifth Annual ACM-SIAM Symposium on
  Discrete Algorithms}, SODA '14, pages 1433--1452, Philadelphia, PA, USA,
  2014. Society for Industrial and Applied Mathematics.

\bibitem{ccpv07}
G~Calinescu, C~Chekuri, M~Pal, and J~Vondrak.
\newblock Maximizing a submodular function subject to a matroid constraint.
\newblock In {\em the Proceedings of the Conference on Integer Programming and
  Combinatorial Optimization (IPCO)}, 2007.

\bibitem{ccpv}
Gruia Calinescu, Chandra Chekuri, Martin Pál, and Jan Vondrák.
\newblock Maximizing a monotone submodular function subject to a matroid
  constraint.
\newblock {\em SIAM Journal on Computing}, 40(6):1740--1766, 2011.

\bibitem{ChawlaHMS}
Shuchi Chawla, Jason~D Hartline, David~L Malec, and Balasubramanian Sivan.
\newblock Multi-parameter mechanism design and sequential posted pricing.
\newblock In {\em Proceedings of the forty-second ACM symposium on Theory of
  computing}, pages 311--320, 2010.

\bibitem{Chawla-survey}
Shuchi Chawla and Balasubramanian Sivan.
\newblock Bayesian algorithmic mechanism design.
\newblock {\em SIGecom Exch.}, 13(1):5–49, November 2014.

\bibitem{crs}
Chandra Chekuri, Jan Vondr\'{a}k, and Rico Zenklusen.
\newblock Submodular function maximization via the multilinear relaxation and
  contention resolution schemes.
\newblock In {\em Proceedings of the Forty-third Annual ACM Symposium on Theory
  of Computing}, STOC '11, pages 783--792, New York, NY, USA, 2011. ACM.

\bibitem{Correa-survey}
Jos{\'e} Correa, Patricio Foncea, Ruben Hoeksma, Tim Oosterwijk, and Tjark
  Vredeveld.
\newblock Recent developments in prophet inequalities.
\newblock {\em ACM SIGecom Exchanges}, 17(1):61--70, 2019.

\bibitem{Dinitz-survey}
Michael Dinitz.
\newblock Recent advances on the matroid secretary problem.
\newblock {\em ACM SIGACT News}, 44(2):126--142, 2013.

\bibitem{DuttingKL}
Paul D{\"u}tting, Thomas Kesselheim, and Brendan Lucier.
\newblock An $o(\log \log m)$ prophet inequality for subadditive combinatorial
  auctions.
\newblock {\em ACM SIGecom Exchanges}, 18(2):32--37, 2020.

\bibitem{dynkin}
E.~B. Dynkin.
\newblock {The optimum choice of the instant for stopping a Markov process}.
\newblock {\em Soviet Math. Dokl}, 4, 1963.

\bibitem{Dutting2}
Paul Dütting, Michal Feldman, Thomas Kesselheim, and Brendan Lucier.
\newblock Prophet inequalities made easy: Stochastic optimization by pricing
  nonstochastic inputs.
\newblock {\em SIAM Journal on Computing}, 49(3):540--582, 2020.

\bibitem{EhsaniHKS18}
Soheil Ehsani, MohammadTaghi Hajiaghayi, Thomas Kesselheim, and Sahil Singla.
\newblock Prophet secretary for combinatorial auctions and matroids.
\newblock In {\em Proceedings of the Twenty-Ninth Annual ACM-SIAM Symposium on
  Discrete Algorithms}, SODA '18, page 700–714, USA, 2018. Society for
  Industrial and Applied Mathematics.

\bibitem{esf-prophsec}
Hossein Esfandiari, MohammadTaghi Hajiaghayi, Vahid Liaghat, and Morteza
  Monemizadeh.
\newblock Prophet secretary.
\newblock {\em CoRR}, abs/1507.01155, 2015.

\bibitem{mcg}
Moran Feldman, Joseph~(Seffi) Naor, and Roy Schwartz.
\newblock A unified continuous greedy algorithm for submodular maximization.
\newblock In {\em Proceedings of the 2011 IEEE 52Nd Annual Symposium on
  Foundations of Computer Science}, FOCS '11, pages 570--579, Washington, DC,
  USA, 2011. IEEE Computer Society.

\bibitem{ocrs}
Moran Feldman, Ola Svensson, and Rico Zenklusen.
\newblock Online contention resolution schemes.
\newblock In {\em Proceedings of the Twenty-seventh Annual ACM-SIAM Symposium
  on Discrete Algorithms}, SODA '16, pages 1014--1033, Philadelphia, PA, USA,
  2016. Society for Industrial and Applied Mathematics.

\bibitem{submod-sec}
Moran Feldman and Rico Zenklusen.
\newblock The submodular secretary problem goes linear.
\newblock {\em SIAM Journal on Computing}, 47(2):330--366, 2018.

\bibitem{gupta1}
Anupam Gupta and Viswanath Nagarajan.
\newblock A stochastic probing problem with applications.
\newblock In Michel Goemans and Jos{\'e} Correa, editors, {\em Integer
  Programming and Combinatorial Optimization}, pages 205--216, Berlin,
  Heidelberg, 2013. Springer Berlin Heidelberg.

\bibitem{gupta3}
Anupam Gupta, Viswanath Nagarajan, and Sahil Singla.
\newblock {\em Adaptivity Gaps for Stochastic Probing: Submodular and XOS
  Functions}, pages 1688--1702.

\bibitem{gupta2}
Anupam Gupta, Viswanath Nagarajan, and Sahil Singla.
\newblock {\em Algorithms and Adaptivity Gaps for Stochastic Probing}, pages
  1731--1747.

\bibitem{GuptaS-survey}
Anupam Gupta and Sahil Singla.
\newblock Random-order models, 2020.
\newblock \url{https://arxiv.org/abs/2002.12159}.

\bibitem{haji}
Mohammad~Taghi Hajiaghayi, Robert Kleinberg, and Tuomas Sandholm.
\newblock Automated online mechanism design and prophet inequalities.
\newblock In {\em Proceedings of the 22Nd National Conference on Artificial
  Intelligence - Volume 1}, AAAI'07, pages 58--65. AAAI Press, 2007.

\bibitem{Hartline-survey}
Jason~D. Hartline.
\newblock Bayesian mechanism design.
\newblock {\em Foundations and Trends® in Theoretical Computer Science},
  8(3):143--263, 2013.

\bibitem{hill-kertz-survey}
Theodore~P Hill and Robert~P Kertz.
\newblock A survey of prophet inequalities in optimal stopping theory.
\newblock {\em Contemp. Math}, 125:191--207, 1992.

\bibitem{Kalhan-thesis}
Sanchit Kalhan.
\newblock A study on matroid prophet inequalities.
\newblock Senior thesis, University of Illinois at Urbana-Champaign, May 2016.

\bibitem{klein-wein}
Robert Kleinberg and Seth~Matthew Weinberg.
\newblock Matroid prophet inequalities.
\newblock In {\em Proceedings of the forty-fourth annual ACM symposium on
  Theory of computing}, pages 123--136, 2012.

\bibitem{kren-such}
Ulrich Krengel and Louis Sucheston.
\newblock Semiamarts and finite values.
\newblock {\em Bull. Amer. Math. Soc.}, 83(4):745--747, 07 1977.

\bibitem{Lucier-survey}
Brendan Lucier.
\newblock An economic view of prophet inequalities.
\newblock {\em SIGecom Exch.}, 16(1):24–47, September 2017.

\bibitem{rubin}
Aviad Rubinstein.
\newblock {\em Beyond Matroids: Secretary Problem and Prophet Inequality with
  General Constraints}, page 324–332.
\newblock Association for Computing Machinery, New York, NY, USA, 2016.

\bibitem{rs}
Aviad Rubinstein and Sahil Singla.
\newblock Combinatorial prophet inequalities.
\newblock In {\em Proceedings of the Twenty-Eighth Annual ACM-SIAM Symposium on
  Discrete Algorithms}, pages 1671--1687. SIAM, 2017.
\newblock Longer ArXiv version is at \url{http://arxiv.org/abs/1611.00665}.

\bibitem{singla-thesis}
Sahil Singla.
\newblock {\em Combinatorial Optimization Under Uncertainty: Probing and
  Stopping-Time Algorithms}.
\newblock PhD thesis, CMU, 2018.
\newblock
  \url{http://reports-archive.adm.cs.cmu.edu/anon/2018/CMU-CS-18-111.pdf}.

\bibitem{vondrak}
Jan Vondr{\'a}k.
\newblock {\em Submodularity in combinatorial optimization}.
\newblock PhD thesis, Univerzita Karlova, Matematicko-fyzik{\'a}ln{\'\i}
  fakulta, 2007.

\bibitem{yan}
Qiqi Yan.
\newblock Mechanism design via correlation gap.
\newblock In Dana Randall, editor, {\em Proceedings of the Twenty-Second Annual
  {ACM-SIAM} Symposium on Discrete Algorithms, {SODA} 2011, San Francisco,
  California, USA, January 23-25, 2011}, pages 710--719. {SIAM}, 2011.

\end{thebibliography}

\appendix

\section{Background on Submodularity and Contention Resolution Schemes}\label{app:background}

\subsection{Useful Lemmas}\label{app:background1}

Below we state several relevant lemmas regarding sampling and
submodular functions that we need.

\begin{lemma}[Lemma 2.2 from \cite{buchbinder}]\label{lem:buchbinder}
Let $f : 2^\cN \to \Rp$ be submodular. Denote by $A(p)$ a random subset of $A$ where each element
appears with probability at most $p$ (not necessarily independently). Then,
\[
\E[f(A(p))] \geq (1-p) \cdot f(\emptyset).
\]
\end{lemma}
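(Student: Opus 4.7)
The plan is to derive the inequality from the identification of the convex closure of a submodular function with its Lovász extension, combined with non-negativity of $f$.

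First I would let $\bm{x} \in [0,1]^\cN$ denote the vector of marginals of $A(p)$, so that $x_i = \Pr[i \in A(p)] \leq p$ for every $i$ (elements outside $A$ just have $x_i = 0$). Writing the joint distribution of $A(p)$ as $\Pr[A(p)=S] = \alpha_S$, we have $\sum_S \alpha_S = 1$ and $\sum_S \alpha_S \mathbf{1}_S = \bm{x}$, so by the very definition of the convex closure $f^-$ of $f$,
\[
\E[f(A(p))] \;=\; \sum_S \alpha_S\, f(S) \;\geq\; f^-(\bm{x}).
\]

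The main step is then to invoke Lovász's classical theorem that for a submodular function the convex closure coincides with the Lovász extension:
\[
f^-(\bm{x}) \;=\; \hat{f}(\bm{x}) \;=\; \int_0^1 f\bigl(\{i \in \cN : x_i \geq \theta\}\bigr)\, d\theta.
\]
Because $x_i \leq p$ for every $i$, the threshold set $\{i : x_i \geq \theta\}$ is empty whenever $\theta > p$, so the integrand equals $f(\emptyset)$ on $(p,1]$ and is non-negative on $[0,p]$ by non-negativity of $f$. Splitting the integral at $\theta = p$ and discarding the non-negative piece then yields $\hat{f}(\bm{x}) \geq (1-p)\, f(\emptyset)$, which combined with the previous display gives the lemma.

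The main obstacle is simply identifying the right tool. Since the hypothesis explicitly allows arbitrary correlations among the coordinates of $A(p)$, one cannot work with the multilinear extension $F(\bm{x})$ directly, and a naive inductive approach that conditions on one coordinate fails because the conditional marginals of the remaining coordinates may exceed $p$. The convex closure handles arbitrary couplings uniformly, and the Lovász characterisation turns it into a one-dimensional integral where the bound drops out immediately from non-negativity.
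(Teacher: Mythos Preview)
Your proof is correct. The paper itself does not supply a proof of this lemma; it is merely quoted from \cite{buchbinder} in Appendix~\ref{app:background1}, so there is no in-paper argument to compare against. The route you take---bounding $\E[f(A(p))]$ below by the convex closure $f^-(\bm{x})$ at the marginal vector, invoking Lov\'asz's identification $f^-=\hat f$ for submodular $f$, and then reading off $(1-p)f(\emptyset)$ from the threshold-integral form of $\hat f$ once all marginals are at most $p$---is valid at every step and handles the arbitrary-correlation hypothesis cleanly. For what it is worth, the original proof in \cite{buchbinder} is a short elementary telescoping argument over an arbitrary ordering of the ground set that bounds marginal contributions using submodularity directly, without the Lov\'asz machinery; your approach is a legitimate and arguably more conceptual alternative.
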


\begin{lemma}[Lemma 2.2 from \cite{vondrak}]\label{lem:vondrak2}
Let $g : 2^\cN \to \Rp$ be submodular. Denote by $A(p)$ a random subset of $A$ where each element appears with probability exactly $p$ (not necessarily independently). Then
\[
\E[g(A(p))] \geq (1 - p) \cdot g(\emptyset) + p \cdot g(A).
\]
\end{lemma}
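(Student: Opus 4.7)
The plan is to derive the inequality via the Lov\'{a}sz extension of $g$, which for submodular set functions coincides with the convex closure and therefore provides a pointwise lower bound on $\E[g(S)]$ across all distributions on subsets with a prescribed marginal vector---this is precisely what is needed to handle the fact that $A(p)$ is not assumed to be an independent sample.

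First I would recall that for a submodular function $g:2^\cN \to \Rp$, the Lov\'{a}sz extension
\[
g^L(\bm{x}) = \E_{\theta \sim U[0,1]}\brk{g(\{i : x_i \geq \theta\})}
\]
coincides with the convex closure $g^-(\bm{x})$. By the very definition of $g^-$ as the pointwise infimum of $\E[g(S)]$ over all distributions whose marginal vector is $\bm{x}$, one obtains $\E[g(S)] \geq g^L(\bm{x})$ for \emph{every} random $S \subseteq \cN$ with marginal $\bm{x}$, with or without independent coordinates. Specializing to $S = A(p)$, whose marginal vector is $p \cdot \mathbf{1}_A$, this reduces the lemma to computing $g^L(p \cdot \mathbf{1}_A)$.

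Second, I would evaluate $g^L(p \mathbf{1}_A)$ directly. The coordinates of $p \mathbf{1}_A$ take only the two values $p$ (on $A$) and $0$ (off $A$), so the threshold set $\{i : (p \mathbf{1}_A)_i \geq \theta\}$ equals $A$ for $\theta \in [0, p]$ and $\emptyset$ for $\theta \in (p, 1]$. Integrating,
\[
g^L(p \mathbf{1}_A) = \int_0^p g(A)\, d\theta + \int_p^1 g(\emptyset)\, d\theta = p \cdot g(A) + (1 - p) \cdot g(\emptyset),
\]
which is precisely the claimed lower bound.

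The only delicate step is the identity $g^- = g^L$ for submodular $g$. If one prefers a self-contained argument, one can normalize so that $g(\emptyset) = 0$ and apply an uncrossing argument: for any pair $(S_1, S_2)$ in the support of a minimizing distribution, submodularity gives $g(S_1) + g(S_2) \geq g(S_1 \cup S_2) + g(S_1 \cap S_2)$, so replacing $(S_1, S_2)$ by $(S_1 \cup S_2,\, S_1 \cap S_2)$ preserves the marginal vector while not increasing the expected value of $g$. Iterating drives the support to a chain in $2^A$, and the only chain whose induced marginals are $p \mathbf{1}_A$ is $\{\emptyset, A\}$ with probabilities $1 - p$ and $p$; this attains $(1 - p) g(\emptyset) + p \cdot g(A)$ and thereby shows no distribution with the required marginals can do worse.
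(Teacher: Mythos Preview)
The paper does not contain its own proof of this lemma; it is listed in the appendix of background results and attributed to \cite{vondrak} without argument, so there is nothing to compare against. Your proof via the Lov\'asz extension is correct: for submodular $g$ the Lov\'asz extension $g^L$ coincides with the convex closure, hence lower-bounds $\E[g(S)]$ over \emph{all} distributions with a given marginal vector, and evaluating $g^L(p\,\mathbf{1}_A)$ directly gives $(1-p)\,g(\emptyset)+p\,g(A)$. The uncrossing alternative you sketch is also sound---once the support is driven to a chain in $2^A$, any intermediate set $\emptyset \subsetneq S \subsetneq A$ would force unequal marginals across $A$, so the chain must collapse to $\{\emptyset,A\}$ with weights $(1-p,p)$, as you claim.
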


\begin{lemma}[Lemma 2.3 from \cite{vondrak}]\label{lem:vondrak3}
Let $f : 2^\cN \to \Rp$ be submodular, $A, B \subseteq \cN$ two (not necessarily disjoint) sets and $A(p), B(q)$ their independently sampled subsets, where each element of $A$ appears in $A(p)$ with probability $p$ and each element of $B$ appears in $B(q)$ with probability $q$. Then
\[
\E[f(A(p) \cup B(q))] \geq (1-p)(1-q) \cdot f(\emptyset) + p(1-q) \cdot f(A) + (1-p)q \cdot f(B) + pq \cdot f(A \cup B).
\]
\end{lemma}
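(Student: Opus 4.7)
}

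The plan is to reduce the two-set sampling statement to two applications of the one-set sampling statement in Lemma~\ref{lem:vondrak2}. The key observation is that $A(p)$ and $B(q)$ are sampled independently, which allows a clean conditioning argument. First I would fix an arbitrary realization of $B(q)$ and define the auxiliary set function $g(S) \coloneqq f(S \cup B(q))$ on the ground set $\cN$. A short calculation shows that $g$ is non-negative and submodular whenever $f$ is, since the operations $S \mapsto S \cup T$ commute with both union and intersection in the submodularity inequality. Applying Lemma~\ref{lem:vondrak2} to $g$ with sampling set $A$ and probability $p$ then gives, conditionally on $B(q)$,
\[
\E\bigl[f(A(p) \cup B(q)) \,\big|\, B(q)\bigr] \;\ge\; (1-p)\,f(B(q)) \;+\; p\,f(A \cup B(q)).
\]

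The second step is to take expectation over $B(q)$ and handle each of the two resulting terms by a further invocation of Lemma~\ref{lem:vondrak2}. For the first term, I would apply the lemma directly to $f$ with sampling set $B$ and probability $q$ to obtain $\E[f(B(q))] \ge (1-q) f(\emptyset) + q\, f(B)$. For the second term, I would introduce $h(S) \coloneqq f(A \cup S)$, which is again non-negative and submodular, and apply Lemma~\ref{lem:vondrak2} to $h$ with sampling set $B$ and probability $q$ to obtain $\E[f(A \cup B(q))] \ge (1-q) f(A) + q\, f(A \cup B)$. Substituting these two bounds into the conditional inequality above and expanding the product of the two convex combinations gives exactly the four-term lower bound in the statement.

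The proof is essentially a coupling/conditioning argument, so there is no genuine obstacle — submodularity of the shifted functions $g$ and $h$ is immediate from the identities $(S_1 \cap S_2) \cup T = (S_1 \cup T) \cap (S_2 \cup T)$ and $(S_1 \cup S_2) \cup T = (S_1 \cup T) \cup (S_2 \cup T)$. The only point requiring minor care is to note that independence between $A(p)$ and $B(q)$ is what makes the tower of conditional expectations multiply out correctly into the product coefficients $(1-p)(1-q),\, p(1-q),\, (1-p)q,\, pq$; without independence one would pick up correlation terms that could spoil the bound on elements of $A \cap B$.
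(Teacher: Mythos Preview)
The paper does not give its own proof of Lemma~\ref{lem:vondrak3}; it merely cites it as Lemma~2.3 of \cite{vondrak} in Appendix~\ref{app:background1}. So there is no in-paper argument to compare against.

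Your reduction to two nested applications of Lemma~\ref{lem:vondrak2} is correct. Conditioning on $B(q)$, the shifted function $g(S)=f(S\cup B(q))$ is non-negative submodular, so Lemma~\ref{lem:vondrak2} applied to $g$ on $A$ gives the conditional bound; unconditioning and applying Lemma~\ref{lem:vondrak2} once more to $f$ on $B$ and to $h(S)=f(A\cup S)$ on $B$ yields precisely the four-term convex combination. The independence of $A(p)$ and $B(q)$ is indeed what makes the conditional expectation factor cleanly, as you note. The only small comment is that your remark about ``correlation terms on $A\cap B$'' is slightly beside the point: Lemma~\ref{lem:vondrak2} itself allows arbitrary within-set correlations, so the role of independence here is purely between the two sampling processes $A(p)$ and $B(q)$, not among elements of either set individually.
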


\begin{lemma}[Lemma 4.3 from \cite{vondrak}]\label{lem:sampling}
  Let $f : 2^\cN \to \Rp$ be a submodular function, let
  $A_1, A_2, \dots, A_k \subseteq \cN$ be $k$ (not necessarily disjoint)
  sets and let $A_1(p_1), A_2(p_2), \dots, A_k(p_k)$ their
  independently sampled subsets, where each element of $A_i$ appears
  in $A_i(p_i)$ with probability $p_i$, for all $1 \leq i \leq
  k$. Then
  \[
    \E\left[f\left(\bigcup_{i = 1}^k A_i(p_i)\right)\right] \geq \sum_{I \subseteq [k]} {\left(\prod_{j \in I} {p_j} \prod_{j \notin I} {(1 - p_j)} f\left(\bigcup_{j \in I} {A_j} \right)\right)}.
  \]
\end{lemma}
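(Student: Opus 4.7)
The plan is to prove the lemma by induction on $k$, with Lemma \ref{lem:vondrak2} playing the role of both the base case and the workhorse in the inductive step. For $k = 1$ the right-hand side collapses to $(1-p_1)\, f(\emptyset) + p_1\, f(A_1)$, which is exactly the bound guaranteed by Lemma \ref{lem:vondrak2} applied to $f$, $A_1$, and $p_1$.

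For the inductive step, assume the claim for any $k-1$ independently sampled sets, and let $T = \bigcup_{i=1}^{k-1} A_i(p_i)$. By construction $T$ is independent of $A_k(p_k)$. Condition on a fixed realization of $T$: the function $S \mapsto f(T \cup S)$ is non-negative and submodular in $S \subseteq A_k$, so Lemma \ref{lem:vondrak2} applied to this shifted function with parameter $p_k$ gives
\[
\E\!\left[f(T \cup A_k(p_k)) \mid T\right] \;\geq\; (1-p_k)\, f(T) + p_k\, f(T \cup A_k).
\]
Taking the outer expectation over $T$ and using independence then yields
\[
\E\!\left[f\!\left(\bigcup_{i=1}^{k} A_i(p_i)\right)\right] \;\geq\; (1-p_k)\, \E[f(T)] + p_k\, \E[f(T \cup A_k)].
\]

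The final step is to apply the induction hypothesis twice. First, directly to $f$, lower bounding $\E[f(T)]$ by the claimed sum over $I \subseteq [k-1]$. Second, to the auxiliary function $h(S) \coloneqq f(S \cup A_k)$, which is still non-negative and submodular because $A_k$ is fixed; this bounds $\E[f(T \cup A_k)] = \E[h(T)]$ by the analogous sum over $I \subseteq [k-1]$ in which each $\bigcup_{j \in I} A_j$ is replaced by $\bigcup_{j \in I} A_j \cup A_k$. Weighting these two sums by $(1-p_k)$ and $p_k$ respectively exactly corresponds to the subsets $I \subseteq [k]$ with $k \notin I$ and $k \in I$, so the two contributions combine into the single sum $\sum_{I \subseteq [k]} \prod_{j \in I} p_j \prod_{j \notin I} (1-p_j)\, f(\bigcup_{j \in I} A_j)$ claimed in the statement.

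The only real subtlety is recognizing that Lemma \ref{lem:vondrak2} can be invoked on the shifted submodular function $S \mapsto f(T \cup S)$ (so that one can peel off $A_k(p_k)$ cleanly after conditioning), and that the auxiliary function $h$ lets the induction hypothesis be reused for the ``$T \cup A_k$'' term without needing a separate argument. Beyond those two observations, everything is combinatorial bookkeeping, so I do not expect any serious obstacle.
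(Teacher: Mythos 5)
Your proof is correct. The paper does not prove this lemma itself --- it is imported verbatim as Lemma 4.3 of \cite{vondrak} --- and your induction on $k$, peeling off $A_k(p_k)$ by conditioning on $T = \bigcup_{i<k} A_i(p_i)$ and applying Lemma \ref{lem:vondrak2} to the shifted function $S \mapsto f(T \cup S)$, then invoking the hypothesis on both $f$ and $h(S) = f(S \cup A_k)$, is exactly the standard argument for this generalization; the bookkeeping splitting $I \subseteq [k]$ by whether $k \in I$ is right.
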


The next Lemma appears in \cite{crs}, but its proof is slightly
obfuscated within Lemma B.2. For clarity, we present it here on its
own.

\begin{lemma}[\cite{crs}]\label{lem:crs}
Let $a_1 \geq \dots \geq a_m \in \Rp$, and $q_1, \dots, q_m \in [0, 1]$ such that $\sum_{k = 1}^m {q_k} = 1$. Then
\[
\sum_{k = 1}^m {q_k \: a_k \prod_{j = 1}^{k-1} {(1 - q_j)}} \geq \left(1 - \frac{1}{e}\right) \cdot \sum_{j = 1}^m {q_j a_j}.
\]
\end{lemma}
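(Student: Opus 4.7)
The plan is to reduce the weighted sum inequality to a collection of partial-sum inequalities using the monotonicity $a_1 \geq a_2 \geq \cdots \geq a_m$. Set $w_k \coloneqq q_k \prod_{j=1}^{k-1}(1-q_j)$, so the left-hand side equals $\sum_k w_k a_k$. A standard telescoping identity gives
\[
\sum_{k=1}^{\ell} w_k \;=\; 1 - \prod_{k=1}^{\ell}(1-q_k),
\]
for each $\ell \in [m]$, whereas $\sum_{k=1}^{\ell} q_k \eqqcolon Q_\ell$ satisfies $Q_\ell \leq 1$ by hypothesis.

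First, I would rewrite both sides of the target inequality using Abel summation. Extending $a_{m+1} \coloneqq 0$ and writing $a_k = \sum_{\ell=k}^{m}(a_\ell - a_{\ell+1})$, where each increment $a_\ell - a_{\ell+1} \geq 0$ by monotonicity, I would swap the order of summation to obtain
\[
\sum_{k=1}^m w_k a_k \;=\; \sum_{\ell=1}^m (a_\ell - a_{\ell+1}) \sum_{k=1}^{\ell} w_k, \qquad \sum_{k=1}^m q_k a_k \;=\; \sum_{\ell=1}^m (a_\ell - a_{\ell+1}) \sum_{k=1}^{\ell} q_k.
\]
Since every coefficient $(a_\ell - a_{\ell+1})$ is non-negative, the claim reduces to the partial-sum comparison
\[
\sum_{k=1}^{\ell} w_k \;\geq\; \left(1 - \tfrac{1}{e}\right) Q_\ell \quad \text{for every } \ell \in [m].
\]

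Next, I would establish this partial-sum inequality analytically. By the AM--GM--style bound $1-q \leq e^{-q}$ applied to each factor, $\prod_{k=1}^{\ell}(1-q_k) \leq e^{-Q_\ell}$, so $\sum_{k=1}^{\ell} w_k \geq 1 - e^{-Q_\ell}$. It therefore suffices to prove
\[
1 - e^{-x} \;\geq\; \left(1 - \tfrac{1}{e}\right) x \qquad \text{for all } x \in [0,1].
\]
This is immediate from concavity of $g(x) = 1 - e^{-x}$: the chord joining $(0,0)$ to $(1, 1-1/e)$ lies below $g$ on $[0,1]$, since $g(0)=0$ and $g(1)=1-1/e$. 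Applying this with $x = Q_\ell \in [0,1]$ gives the needed partial-sum bound, and plugging back into the Abel-rearranged form yields the lemma.

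The main obstacle is conceptual rather than technical: one has to notice that the weights $w_k$, although only concentrated on a single ``first acceptance'' event, satisfy the prefix bound $\sum_{k \leq \ell} w_k \geq (1-1/e)\sum_{k \leq \ell} q_k$ uniformly, and that the decreasing structure of $(a_k)$ is precisely what converts these prefix bounds into the desired weighted inequality. Once the reduction to prefix sums is spotted, the remaining calculus step $1-e^{-x} \geq (1-1/e) x$ on $[0,1]$ is straightforward and the rest is bookkeeping.
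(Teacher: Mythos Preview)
Your proof is correct and follows essentially the same approach as the paper's: both reduce, via linearity in the $a_k$ (the paper by decomposing into indicator sequences, you by Abel summation---the same maneuver), to the prefix-sum inequality $1-\prod_{k\le \ell}(1-q_k)\ge (1-1/e)\sum_{k\le \ell} q_k$, and then finish with a product bound plus concavity. The only cosmetic difference is in that last analytic step: the paper bounds the product via AM--GM to get $1-(1-Q_\ell/\ell)^\ell$ and invokes concavity of $\phi_\ell(x)=1-(1-x/\ell)^\ell$, whereas you use $1-q\le e^{-q}$ directly and the concavity of $1-e^{-x}$, which is slightly cleaner but not materially different.
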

\begin{proof}
Since the above inequality is linear in the parameters $a_i$, it suffices to
prove it for the special case $a_1 = a_2 = \dots = a_r = 1$ and
$a_{r+1} = \dots = a_m = 0$. (A general decreasing sequence of $a_j$ can be
obtained as a positive linear combination of such special cases). Hence, it
remains to prove
\[
\sum_{k = 1}^r {q_k \prod_{j = 1}^{k-1} {(1 - q_j)}} \geq \left(1 - \frac{1}{e}\right) \cdot \sum_{j = 1}^r {q_j}.
\]
We start from the left-hand side, which we expand to
\[
\sum_{k = 1}^r {q_k \prod_{j = 1}^{k-1} {(1 - q_j)}} = 1 - \prod_{k = 1}^r {(1 - q_k)} \geq 1 - {\left(1 - \frac{1}{r} \sum_{k = 1}^r {q_k} \right)}^r,
\]
where the inequality follows from the arithmetic-geometric mean inequality. Finally, we use the concavity of $\phi_r(x) \coloneqq 1 - {\left(1 - \frac{x}{r}\right)}^r$, and the fact that $\phi_r(0) = 0$, to get
\[
\phi_r(x) \geq \phi_r(1) \cdot x = \left(1 - {\left(1 - \frac{1}{r}\right)}^r\right) \cdot x
\]
for $x \in [0, 1]$. Since $\left(1 - {\left(1 - \frac{1}{r}\right)}^r\right) \geq \left(1 - \frac{1}{e}\right)$ for all $r$, we get
\[
\phi_r(x) \geq \left(1 - \frac{1}{e}\right) \cdot x.
\]
which implies that
\[
\phi_r\left(\sum_{k = 1}^r {q_k} \right) = 1 - {\left(1 - \frac{1}{r} \sum_{k = 1}^r {q_k} \right)}^r \geq \left(1 - \frac{1}{e}\right) \cdot \sum_{k = 1}^r {q_k}.
\]
\end{proof}

\subsection{Constraints and rounding via Contention Resolution Schemes}\label{app:background2}

Let $\cN$ be a finite ground set. A constraint family over $\cN$ is simply
a subset $\cI \subseteq 2^\cN$; a set $S \in \cI$ is called feasible, while a
set $S \not \in \cI$ is called infeasible. We are interested only in
downward-closed families of constraints; $\cI$ is downward-closed if and only if
$A \in \cI$ implies that any set $B \subset A$ is also in $\cI$.  Classical
examples of downward-closed families include those induced by a matroid on
$\cN$ or intersections of several matroids on $\cN$, independent sets of
graphs, matchings in graphs and hypergraphs, boolean vectors that
satisfy packing constraints of the form $A\bm{x} \le b$ for non-negative
$A,b$, among many others. We will use the terminology $(\cN,\cI)$ to
indicate a constraint family. The maximum weight independent set
problem over $(\cN,\cI)$ is the following: given
$w: \cN \rightarrow \mathbb{R}$ solve $\max_{S \in \cI} w(S)$ where
$w(S) = \sum_{e \in \cN} w(e)$.  Since many of these problems are
NP-Hard, a common technique is to use polyhedral (or more generally
convex) relaxations.  We say $\cP \subseteq [0,1]^\cN$ is a polyhedral
relaxation of $(\cN,\cI)$ if $\cP$ is a polyhedron and $\1_S \in \cP$ for
all $S \in \cI$ (here $\1_S$ is the characteristic vector of $S$).  We
say that $\cP$ is \emph{solvable} if one can efficiently do linear
optimization over $\cP$, that is, given $w : \cN \rightarrow \mathbb{R}$,
there is a polynomial time algorithm that computes
$\max_{\bm{x} \in \cP} \sum_i w_ix_i$. Via the multilinear relaxation,
the polyhedral approach to approximation has been extended successfully
to submodular function maximization \cite{ccpv,crs,bf-survey}.

\paragraph{Contention Resolution Schemes:}
These are rounding schemes introduced in \cite{crs} for submodular
function maximization.

\begin{definition}[Contention Resolution Scheme]
  Let $b, c \in [0,1]$. A $(b, c)$-balanced {\em Contention Resolution
    Scheme} $\pi$ for $\cP_\cI$ is a procedure that for every
  $\bm{x} \in b \cdot \cP_{\cI}$ and $A \subseteq \cN$, returns a
  random set $\pi_{\bm{x}}(A) \subseteq A \cap \text{support}(\bm{x})$
  and satisfies the following properties:
\begin{enumerate}
\item $\pi_{\bm{x}}(A) \in \cI$ with probability
  $1, \quad \forall A \subseteq \cN, \bm{x} \in b \cdot \cP_{\cI}$,
  and
\item for all $i \in \text{support}(\bm{x})$, $\Pr\brk{i \in \pi_{\bm{x}}(R(\bm{x})) \midd i \in R(\bm{x})} \geq c, \quad \forall \bm{x} \in b \cdot \cP_{\cI}$.
\end{enumerate}

The scheme is said to be {\em monotone} if $\Pr\brk{i \in \pi_{\bm{x}}(A_1)} \geq \Pr\brk{i \in \pi_{\bm{x}}(A_2)}$ whenever $i \in A_1
\subseteq A_2$.
\end{definition}

CRSs are offline rounding schemes. {\em Online Contention
Resolution Schemes (OCRS)} were introduced by Feldman, Svensson and
Zenklusen \cite{ocrs} to handle online settings such as the \spi problem
where the arrival order of the elements is adversarial. {\em Random Order
Contention Resolution Schemes (ROCRS)} were introduced by \cite{rocrs}
to handle the cases where the arrival of the elements is a uniformly
random permutation.

\begin{definition}[Online Contention Resolution Scheme (OCRS)]
  Let us consider the following online selection setting. A point
  $\bm{x} \in \cP$ is given and let $R(\bm{x})$ be a random subset of
  active elements. The elements $e \in \cN$ reveal one by one whether
  they are active, i.e., $e \in R(\bm{x})$, and the decision whether
  to select an active element is taken irrevocably before the next
  element is revealed. An {\em Online Contention Resolution Scheme}
  for $\cP$ is an online algorithm that selects a subset
  $I \subseteq R(\bm{x})$ such that $\1_I \in \cP$.
\end{definition}

Monotonicity of a CRS is important for rounding the multilinear
relaxation of submodular functions \cite{crs}, although such a
condition is not needed for modular functions. In the online setting,
\cite{ocrs} defines the notion of a \emph{greedy} OCRS which is
helpful in rounding for submodular functions.

\begin{definition}[Greedy OCRS]
  Let $\cP \subseteq {[0,1]}^n$ be a relaxation for the feasible sets
  $\cF \subseteq 2^\cN$. A {\em greedy OCRS} $\pi$ for $\cP$ is an
  OCRS that for any $\bm{x} \in \cP$ defines a down-closed subfamily
  of feasible sets $\cF_{\pi, \bm{x}} \subseteq \cF$, and an element
  $e$ is selected when it arrives if, together with the already
  selected elements, the obtained set is in $\cF_{\pi, \bm{x}}$. If
  the choice of $\cF_{\pi, \bm{x}}$ given $\bm{x}$ is randomized, we
  talk about a randomized greedy OCRS; otherwise, we talk about a
  deterministic greedy OCRS.
\end{definition}

For a greedy OCRS, the quality of the approximation guaranteed
with respect to the multilinear relaxation is governed by
the notion of $(b,c)$-selectability \cite{ocrs}.

\begin{definition}[$(b,c)$-selectability]
  Let $b, c \in [0,1]$. A greedy OCRS for $\cP$ is $(b, c)$-selectable
  if for any $\bm{x} \in b \cdot \cP$, we have
  \[
    \Pr\brk{I \cup \{e\} \in \cF_{\pi, \bm{x}} \quad \forall I \subseteq R(\bm{x}), \: I \in \cF_{\pi, \bm{x}}} \geq c, \quad \forall e \in \cN.
  \]
\end{definition}

\section{Correlation Gap for Non-Negative Submodular Functions}
\label{app:correlation-gap}

In this section we prove Theorems~\ref{thm:intro-gap} and \ref{thm:intro-gap-mcg2} 
on the correlation gap for non-negative submodular functions. Theorem~\ref{thm:intro-gap}
is a direct approach to the correlation gap, whereas Theorem~\ref{thm:intro-gap-mcg2}
utilizes the Measured Continuous Greedy algorithm. The proofs
of the theorems are qualitatively different and we present them in separate sections.

\subsection{Proof of Theorem~\ref{thm:intro-gap}}

We split the proof into two parts, the upper bound and the lower bound,
state them separately and give their proof.

\paragraph{Upper bound:}
As we remarked in
Section~\ref{sec:intro}, the proof of this upper bound is inspired by the proof in \cite{crs} for the monotone case, which is different from
the earlier one in \cite{vondrak}.

\begin{theorem}\label{thm:upper-gap}
  Let $f : 2^\cN \to \Rp$ be a non-negative submodular function, where
  $n = |\cN|$. Let $\bm{x} \in {[0, 1]}^n$, such that $\bm{x} \leq p \cdot
  \bm{1}_\cN$ for some $p \in [0,1]$. Then,
  \[
    F(\bm{x}) \geq (1-p) \left(1 - \frac{1}{e}\right) f^+(\bm{x}).
  \]
\end{theorem}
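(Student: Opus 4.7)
My plan is to adapt the proof of the monotone correlation gap from \cite{crs}, which rests on Lemma~\ref{lem:crs}, to the non-monotone setting by paying a factor $(1-p)$ via Lemma~\ref{lem:buchbinder}. The proof has three pieces: a first-success construction whose expected value is controlled by Lemma~\ref{lem:crs}; a per-element coupling that realizes the product distribution of $F(\bm{x})$; and a combining step that glues the $(1-1/e)$ and $(1-p)$ losses together multiplicatively.

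First, I would write $f^+(\bm{x}) = \sum_{k=1}^m q_k f(S_k)$ via an optimal convex decomposition with $\sum_k q_k = 1$ and $\sum_k q_k \mathbf{1}_{S_k} = \bm{x}$, reordering so $f(S_1) \ge f(S_2) \ge \cdots \ge f(S_m) \ge 0$. Then I introduce the ``first-success'' random set $U$: sample independent $B_k \sim \mathrm{Bern}(q_k)$, let $k^* = \min\{k : B_k = 1\}$, and put $U = S_{k^*}$ (with $U = \emptyset$ if no $B_k = 1$). Expanding $\E[f(U)] = \sum_k q_k f(S_k) \prod_{j<k}(1-q_j)$ and invoking Lemma~\ref{lem:crs} with $a_k = f(S_k)$ immediately gives $\E[f(U)] \ge (1-1/e) f^+(\bm{x})$, the known monotone-case conclusion. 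The marginals of $U$ satisfy $\Pr[i \in U] \le \sum_{k : i \in S_k} q_k = x_i \le p$.

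The main technical step is to transfer this bound to $F(\bm{x})$ at the cost of a factor $(1-p)$. I would couple $U$ with the product-distributed sample $T \sim \bm{x}$ via a per-element uniform construction: draw $V_i \sim \mathrm{Uniform}[0,1]$ independently for each $i$, set $i \in T$ iff $V_i \le x_i$, and assign each $i \in T$ to a slot $\kappa(i) = k$ according to which width-$q_k$ subinterval of $[0, x_i]$ the variable $V_i$ lies in (over $k$'s with $i \in S_k$, in the sorted order). This realizes $T = \bigsqcup_k W_k$ as a genuine product sample, with each $W_k \subseteq S_k$ being a product random subset with per-element probability $q_k$. Crucially, conditional on any realization of $W_k$, the residual $T \setminus W_k$ remains product-distributed with per-element marginals at most $p$, so Lemma~\ref{lem:buchbinder} applied to the non-negative submodular function $A \mapsto f(W_k \cup A)$ yields $\E[f(T) \mid W_k] \ge (1-p) f(W_k)$, and thus $\E[f(T)] \ge (1-p) \E[f(W_k)]$ for every $k$.

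The main obstacle I expect is the aggregation: the per-$k$ inequalities alone give only $F(\bm{x}) \ge (1-p) \max_k \E[f(W_k)]$, which is too weak. The plan is to introduce a ``first-non-empty-$W_k$'' random index $K$ defined from the same $V_i$'s driving the slot assignment, and show that $\E[f(W_K)]$ satisfies the same Lemma~\ref{lem:crs}-style lower bound as $\E[f(U)]$, i.e. $\E[f(W_K)] \ge (1-1/e) f^+(\bm{x})$, while simultaneously $\E[f(T) \mid W_K] \ge (1-p) f(W_K)$ by the same Lemma~\ref{lem:buchbinder} argument. Executing this combination cleanly, so that the $(1-p)$ and the $(1-1/e)$ factors multiply, is the heart of the proof; the non-monotonicity is confined entirely to the Lemma~\ref{lem:buchbinder} step, and the sorted-ordering first-success combinatorics is inherited essentially verbatim from \cite{crs}.
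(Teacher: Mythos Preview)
Your high-level plan matches the paper's: take an optimal decomposition of $f^+(\bm{x})$, run a first-success argument to get the $(1-1/e)$ via Lemma~\ref{lem:crs}, and absorb the non-monotonicity into a single application of Lemma~\ref{lem:buchbinder} to get the $(1-p)$. The ingredients are all correct. But the aggregation step you flag as ``the main obstacle'' is a genuine gap, and your proposed fix via the index $K = \min\{k : W_k \neq \emptyset\}$ does not close it.

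The difficulty is that your $W_k$'s are \emph{random subsets} of the $S_k$'s, not the full sets. Consequently $\Pr[K=k]$ is governed by the events $\{W_j = \emptyset\}$, which have probabilities depending on $|S_j|$ and on the overlaps between the $S_j$'s; it is \emph{not} $q_k \prod_{j<k}(1-q_j)$. And even if you condition on $K=k$, the value you control is $\E[f(W_k)\mid K=k]$, not $f(S_k)$, so the hypothesis of Lemma~\ref{lem:crs} (a sorted sequence $a_k = f(S_k)$) is unavailable. Your per-$k$ inequality $\E[f(T)\mid W_k]\ge (1-p)f(W_k)$ is fine, but there is no mechanism in your outline that converts random subsets $W_k$ back into full sets $S_k$ with the correct weights.

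The paper resolves exactly this by inserting one additional lemma you did not use: Lemma~\ref{lem:sampling}. Instead of partitioning the product sample $T$ into element-wise slots, the paper compares $T$ to the random set $R' = \bigcup_{j\in\cJ} S_j$ where each index $j$ is included in $\cJ$ \emph{independently} with probability $q_j$ (plus some auxiliary singletons with probabilities $r_i$ to repair the marginals up to exactly $x_i$). Lemma~\ref{lem:sampling} gives $F(\bm{x}) \ge \E[f(R')]$ in one stroke. Now the first-success argument is clean: conditioning on the minimum $k\in\cJ$, one has the \emph{full} set $S_k\subseteq R'$, the probability of this event is exactly $q_k\prod_{j<k}(1-q_j)$, and the remaining elements of $R'$ each appear with marginal at most $x_i\le p$, so Lemma~\ref{lem:buchbinder} applied to $A\mapsto f(S_k\cup A)$ gives $(1-p)f(S_k)$. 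Lemma~\ref{lem:crs} then applies verbatim with $a_k=f(S_k)$. In short, the missing idea is to pass from element-wise sampling to set-wise inclusion \emph{before} running the first-success argument, and Lemma~\ref{lem:sampling} is precisely the tool that licenses this passage for submodular $f$.
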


\begin{proof}
  Consider a basic feasible solution ${(q_j, A_j)}_{j \in [m]}$ to the
  linear program that defines $f^+(\bm{x})$. In other words,
  $f^+(\bm{x}) = \sum_{j = 1}^m {q_j f(A_j)}$, where
  $\sum_{j = 1}^m {q_j} = 1$, $\sum_{j : i \in A_j} {q_j} = x_i$, for
  all $i$, and $q_j \geq 0$ for all $j$. Notice that, since we chose a
  basic feasible solution and the LP that defines $f^+(\bm{x})$ has only
  $n + 1$ constraints, apart from the non-negativity constraints, we
  have $m \leq n + 1$.

  Next, consider the following process to generate a subset of
  elements. For each $j \in [m]$ sample independently each element of
  $A_j$ with probability $q_j$. An element $i \in \cN$ is not
  selected with probability equal to
  $\prod_{j : i \in A_j} {(1 - q_j)}$, thus, $i$ is selected with
  probability equal to $1 - \prod_{j : i \in A_j} {(1 - q_j)}$. Notice
  that we can assume without loss of generality that $q_j \neq 1$ for
  all $j$; if $q_j = 1$ for some $j$ then that implies that
  $x_i = 1$ for every element $i \in A_j$, and $q_{j'} = 0$ for all
  $j' \neq j$, which then leads us to $F(\bm{x}) = f(A_j) = f^+(\bm{x})$.

  However, we want to make each element $i$ to be selected with
  probability exactly equal to $x_i = \sum_{j : i \in A_j} {q_j}$. To
  do this, we simply need to sample again each element $i$ with
  probability $r_i$, where
  \begin{align}\label{eq:rs}
    1 - (1 - r_i) \cdot \prod_{j : i \in A_j} {(1 - q_j)} &= \sum_{j : i \in A_j} {q_j}.
  \end{align}
  It is  easy to see that $0 \leq r_i \le x_i \le p$.

  Consider the sampling scheme described above, and let $R$ denote a
  random set created via this sampling scheme. Notice that in our
  sampling scheme, each element $i$ is chosen independently with
  probability $x_i$, which implies that $\E_R[f(R)] = F(\bm{x})$.

  We consider $m+n$ sets $B_1,B_2,\ldots,B_{m+n}$ where $B_j = A_j$
  for $1 \leq j \leq m$, and $B_{m+i} = \{i\}$ for $1 \leq i \leq
  n$. Let $\cJ$ denote a random subset of $[m+n]$ obtained by
  including each $j \in \{1, 2, \dots, m\}$ independently with
  probability $q_j$ and each $i \in \{m+1,m+2, \dots, m+n\}$
  independently with probability $r_i$. Also, let $R' \subseteq \cN$
  denote a random set where
  \[
    R' = \bigcup_{j \in \cJ} B_j.
  \]

  The next claim is based on the submodularity of $f$.
  \begin{claim}\label{clm:sampl-submod}
    \[
      F(\bm{x}) \geq \E_{\cJ}\left[f(R')\right].
    \]
  \end{claim}
  \begin{proof}
    Since $F(\bm{x}) = \E_R[f(R)]$, it suffices to show that
    \[
      \E_{R}\left[f(R)\right] \geq \E_{\cJ}\left[f(R')\right].
    \]
    We apply Lemma \ref{lem:sampling} for $k = m+n$, $A_j = B_j$ for
    $1 \leq j \leq m+n$, $p_j = q_j$ for $1 \leq j \leq m$, and
    $p_{m+i} = r_i$ for $1 \leq i \leq n$. Notice that
    \[
      \E_{R}\left[f(R)\right] = \E\left[f\left(\bigcup_{i = 1}^k B_i(p_i)\right)\right],
    \]
    while
    \[
      \E_{\cJ}\left[f(R')\right] = \sum_{I \subseteq [k]} {\left(\prod_{j \in I} {p_j} \prod_{j \notin I} {(1 - p_j)} f\left(\bigcup_{j \in I} {B_j} \right)\right)},
    \]
    and thus, by Lemma \ref{lem:sampling}, we get
    \[
      \E_{R}\left[f(R)\right] \geq \E_{\cJ}\left[f(R')\right].
    \]
  \end{proof}

  \begin{claim}\label{clm:conditioning}
    \[
      \E_{\cJ}\left[f(R')\right] \geq (1-p) \left(1 - \frac{1}{e}\right) \cdot f^+(\bm{x}).
    \]
  \end{claim}
  \begin{proof}
    Assume, without loss of generality, that
    $f(A_1) \geq \dots \geq f(A_m)$. We analyze $\E[f(R')]$ by
    conditioning on the minimum index $j$ that belongs to $\cJ$. For $k \in [m]$, let
    \[
      J_k = \set{I \subseteq [m+n] \: \midd \: k \in I \text{ and }
      \ell \notin I, \forall \ell < k}.
    \]
    Furthermore, for $k \in [m]$ define the set function
    $g_k:2^\cN \rightarrow \Rp$ where $g_k(S) = f(B_k \cup S)$
    for all $S \subseteq \cN$. It is easy to verify that $g_k$ is
    non-negative and submodular because $f$ is non-negative and
    submodular. $\cJ \in J_k$ implies that $B_k \subseteq R'$, hence,
    \begin{align*}
    \E_{\cJ}[f(R') \mid \cJ \in J_k] &= \E_{\cJ}\brk{f(B_k \cup (R' \setminus B_k)) \midd \cJ \in J_k} \\
    &= \E_{\cJ}[g_k(R' \setminus B_k) \mid \cJ  \in J_k].
    \end{align*}

For any fixed $i \in \cN$ we analyze the probability that $i \in R'
\setminus B_k$ conditioned on $\cJ \in J_k$. Using independence of the
choice of each index in $\cJ$ we obtain the following.
\begin{align*}
\Pr_{\cJ}[i \in (R' \setminus B_k) \mid \cJ \in J_k] &=  1 - (1 - r_i) \prod_{j : i \in A_j, k < j \le m} {(1 -  q_j)} \\
    & \le  1 - (1 - r_i) \prod_{j : i \in A_j, j \in [m]} (1 -  q_j)  \\
    &= x_i \leq p.
\end{align*}
Thus, applying Lemma \ref{lem:buchbinder} to $g_k$ yields
\[
\E_{\cJ}[g_k(R'\setminus B_k) \mid \cJ \in J_k] \geq (1-p) g_k(\emptyset) = (1-p) f(B_k).
\]
Combining the above,
\begin{equation}\label{eq:cond-prob}
\E_{\cJ}\left[f(R') \midd \cJ \in J_k \right] \geq (1-p) \cdot f(B_k).
\end{equation}

Also notice that
\begin{equation}\label{eq:prob}
\Pr_{\cJ}[\cJ \in J_k] = \Pr_{\cJ}[k \in \cJ] \cdot \prod_{j = 1}^{k-1} {\left(1 - \Pr_{\cJ}[j \in \cJ]\right)} = q_k \cdot \prod_{j = 1}^{k-1} {(1 - q_j)}.
\end{equation}
Therefore,
\begin{align}\label{eq:cond}
\E_{\cJ}\left[f(R')\right] & = \sum_{k = 1}^m {\Pr_{\cJ}[\cJ \in J_k] \cdot \E_{\cJ}\left[f(R') \midd \cJ \in J_k \right]} \nonumber \\
&\qquad +  \Pr_{\cJ}\left[\cJ \cap [m] = \emptyset \right] \cdot \E_{\cJ}\left[f(R') \midd \cJ \cap [m] = \emptyset \right]  \nonumber \\
& \geq \sum_{j = 1}^m {\Pr_{\cJ}[\cJ \in J_k] \cdot \E_{\cJ}\left[f(R') \midd \cJ \in J_k \right]} \nonumber \\
& \geq \sum_{k = 1}^m {\Pr_{\cJ}[\cJ \in J_k] \cdot (1-p) \cdot f(B_k)} \nonumber \\
& = (1 - p) \sum_{k = 1}^m {q_k f(B_k) \prod_{j = 1}^{k-1} {(1 - q_j)}},
\end{align}
where the first inequality follows from the non-negativity of $f$, the
second inequality follows from \eqref{eq:cond-prob} and the last
equality from \eqref{eq:prob}. However, for $1 \leq k \leq m$, we have
$B_k = A_k$, and thus
\[
\E_{\cJ}\left[f(R')\right] \geq (1 - p) \sum_{k = 1}^m {q_k f(A_k) \prod_{j = 1}^{k-1} {(1 - q_j)}}.
\]

Finally, utilizing Lemma \ref{lem:crs} for $a_k = f(A_k)$ , we get that
\begin{equation}\label{eq:crs}
\sum_{k = 1}^m {q_k  f(A_k) \prod_{j = 1}^{k-1} {(1 - q_j)}} \geq \left(1 - \frac{1}{e}\right) \cdot \sum_{j = 1}^m {q_j f(A_j)} = \left(1 - \frac{1}{e}\right) \cdot f^+(\bm{x}).
\end{equation}

Combining \eqref{eq:cond} and \eqref{eq:crs},
\[
\E_{\cJ}\left[f(R')\right] \geq (1-p) \left(1 - \frac{1}{e}\right) \cdot f^+(\bm{x}).
\]
\end{proof}

Finally, combining Claims \ref{clm:sampl-submod} and \ref{clm:conditioning}, we obtain
\[
  F(\bm{x}) \geq (1-p) \left(1 - \frac{1}{e}\right) \cdot f^+(\bm{x}),
\]
which completes the proof.
\end{proof}

\paragraph{Lower bound:}
A simple example on $n = 2$ shows that $F(\bm{x}) \le (1-p)f^+(\bm{x})$; the
function is the cut function of a directed graph on two vertices. For
monotone functions, a simple coverage example shows that
$F(\bm{x}) \le (1-1/e) f^+(\bm{x})$. We combine and generalize these two examples
to create an instance for non-monotone functions and obtain the following
theorem.

\begin{theorem}\label{thm:lower-gap}
  There exists a non-negative submodular function $f : 2^\cN \to \Rp$ such that, for any $0 \leq p \leq 1$,
  there exists an $\bm{x} \in {[0,1]}^n$ where $\|\bm{x}\|_\infty \leq p$ and
  \[
    F(\bm{x}) \leq \left(1 - e^{-(1-p)}\right) f^+(\bm{x}).
  \]
\end{theorem}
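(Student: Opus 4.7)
The plan is to combine the two tight examples implicitly mentioned just above the statement: the directed two-vertex cut (which witnesses a $(1-p)$ gap) and the monotone coverage example (which witnesses the $(1-1/e)$ gap). On the ground set $\cN = \{a_1,\ldots,a_n,v\}$ for a large integer $n$, I would take
\[
    f(S) \;=\; \1\brk{S \cap \{a_1,\ldots,a_n\} \ne \emptyset}\cdot \1\brk{v \notin S}.
\]
The first step is to verify non-negativity and submodularity. Since $f$ factors as the product of an indicator depending only on the $\{a_1,\ldots,a_n\}$-coordinates and an indicator depending only on the $v$-coordinate, with the two factors respectively monotone submodular and modular non-increasing, each discrete derivative of $f$ is non-increasing: the $a_i$-marginal equals the non-negative scalar $(1-\1[v \in S])$ times the $a_i$-marginal of the coverage function, and the $v$-marginal equals $-\1[S \cap \{a_1,\ldots,a_n\} \ne \emptyset]$, both monotone non-increasing in $S$.

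Next, I would take $x_{a_i} = (1-p)/n$ for every $i$ and $x_v = p$, picking $n \ge (1-p)/p$ so that $\|\bm{x}\|_\infty = p$. Because the multilinear extension factors over the two disjoint coordinate blocks,
\[
    F(\bm{x}) \;=\; (1-p)\prn{1-\prn{1-\tfrac{1-p}{n}}^{n}}.
\]
For the concave closure I would show $f^+(\bm{x}) = 1-p$. The upper bound $f^+(\bm{x}) \le 1-p$ follows from $f \le \1[v \notin \cdot]$ together with monotonicity of the concave closure (the dominating function is modular and evaluates to $1-p$ at $\bm{x}$). A matching lower bound is witnessed by the explicit distribution that places mass $(1-p)/n$ on each singleton $\{a_i\}$ and mass $p$ on $\{v\}$: its marginals agree with $\bm{x}$ by construction and its expected $f$-value is exactly $(1-p) \cdot 1 + p \cdot 0 = 1-p$.

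Dividing, $F(\bm{x})/f^+(\bm{x}) = 1 - (1 - (1-p)/n)^n$, which decreases monotonically to $1 - e^{-(1-p)}$ as $n \to \infty$. Taking $n$ sufficiently large (or passing to the corresponding limit in which the $a$-side is replaced by a Poisson point process of total intensity $1-p$, where the bound is attained exactly) then yields the theorem. The only genuinely delicate point in the proof is the submodularity check: replacing the single vertex $v$ by two or more ``killer'' vertices destroys the factorization above, because a product of two indicators over disjoint coordinate sets of size $\ge 2$ need not be submodular (one can already exhibit a violation on $\{a, b_1, b_2\}$), so keeping a single $v$ and boosting the coverage side by taking $n$ large is the essential structural choice.
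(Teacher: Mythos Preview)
Your construction is exactly the paper's: the same function $f(S)=\1[v\notin S]\cdot\1[S\cap\{a_1,\dots,a_n\}\neq\emptyset]$ (equivalently, $f(S)=1$ iff $v\notin S$ and $S\neq\emptyset$), the same marginal vector $x_{a_i}=(1-p)/n$, $x_v=p$, and the same limiting calculation. Your write-up is actually more careful than the paper's in two places---the submodularity verification via the factored marginals, and the two-sided argument for $f^+(\bm{x})=1-p$---and your remark about why a single ``killer'' vertex $v$ is essential is a nice addition.
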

\begin{proof}
  Consider the following graph $G = (V, E)$, where
  $V = \{u_1, \dots, u_n, v\}$, and
  $E = \{(u_i, v) \mid 1 \leq i \leq n\}$. Let
  $x_{u_i} = \frac{1-p}{n}$ for all $i \in \{1, \dots, n\}$ and
  $x_v = p$. We define a function $f : 2^V \to \Rp$ as follows
  \[
    f(S) = \begin{cases}
      1 & \text{if } v \notin S \text{ and } S \neq \emptyset, \\
      0 & \text{otherwise.}
    \end{cases}
  \]

  \begin{figure}[!ht]
    \centering
    \includegraphics[width=0.5\textwidth]{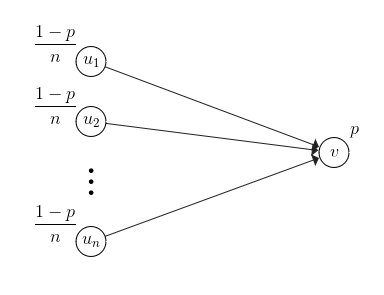}
    \caption{Graph $G$ which yields the desired lower bound.}
  \end{figure}

  It is easy to see that $f$ is submodular. Notice that
  \[
    f^+(\bm{x}) = 1 - p,
  \]
  as the coefficients that maximize $\sum_S {a_S f(S)}$ subject to the
  constraints are $a_{\{v\}} = p$, $a_{\{u_i\}} = \frac{1 - p}{n}$ for
  all $i \in \{1, \dots, n\}$ and $a_S = 0$, for $|S| \neq 1$. In
  other words, $a_{\{u\}} = x_u$ for all $u \in V$, and $a_S = 0$, if
  $|S| \neq 1$.

  Next, notice that, if $R(\bm{x}) \subseteq V$ is a random set, where
  each element $u \in V$ is sampled with probability $x_u$, then
  $f(R(\bm{x})) = 1$ if and only if $v$ is not selected in
  $R(\bm{x})$, but at least one element of $V \setminus \{v\}$ is
  selected. Therefore,
  \[
    F(\bm{x}) = \E[f(R(\bm{x}))] = (1-p) \cdot \left(1 - {\left( 1 - \frac{1-p}{n}\right)}^n\right),
  \]
  which implies that
  \[
    \frac{F(\bm{x})}{f^+(\bm{x})} = \frac{(1-p) \cdot \left(1 - {\left( 1 - \frac{1-p}{n}\right)}^n\right)}{1 - p} = 1 - {\left( 1 - \frac{1-p}{n}\right)}^n.
  \]
  As $n \to \infty$, we get
  \[
    \lim_{n \to \infty} {1 - {\left( 1 - \frac{1-p}{n}\right)}^n} = 1 - e^{-(1-p)}.
  \]

  We conclude that, for any $0 \leq p < 1$, when $x_i \leq p$ for all $i$,
  \[
    F(\bm{x}) \leq \left(1 - e^{-(1-p)}\right) f^+(\bm{x}).
  \]
\end{proof}

\subsection{Proof of Theorem~\ref{thm:intro-gap-mcg2}}

The proof of the correlation gap is via the Measured Continuous Greedy
(\mcg) algorithm and its analysis \cite{mcg}, when applied to an appropriate polytope.
Earlier, we remarked that known results on the \mcg algorithm
\cite{mcg, rocrs} imply that $F_{\max}(\bm{x}) \geq \frac{1}{e} f^+(\bm{x})$.
We quickly sketch the idea implicit in prior work, before we proceed. Let $f:2^\cN \rightarrow \Rp$
be a non-negative submodular function and let $\bm{x} \in [0,1]^n$, where $n = |\cN|$.
Consider a down-closed polytope $\cP$ defined by all points in
$[0,1]^n$ dominated by the given point $\bm{x}$: $\cP \coloneqq \set{\bm{y} \in [0,1]^n \midd \forall \: 1 \leq i \leq n, \: \: y_i \leq x_i}$.
Suppose we run the \mcg algorithm on $\cP$.
From Lemma 8.3 of \cite{rocrs} for $b = 1$, for any $\eps > 0$,
the algorithm can be used to find a point $\bm{z}_\eps \in \cP$ such that
$F(\bm{z}_\eps) \geq \prn{\frac{1}{e} - \eps} \max_{\bm{y} \in \cP} {f^+(\bm{y})} \geq \prn{\frac{1}{e} - \eps} f^+(\bm{x})$.
Since such a point $\bm{z}_\eps \in \cP$ exists for any $\eps >0$, by
the compactness of $\cP$ and the continuity of $F$ and $f^+$, it
follows that there exists a point $\bm{y} \in \cP$ such that
$F(\bm{y}) \geq \frac{1}{e} \cdot f^+(\bm{x})$.
Also notice that $\bm{x} \in \cP$, and thus
\[
F_{\max}(\bm{x}) = \max_{\bm{z} \in \cP} {F(\bm{z})} \geq F(\bm{y}) \geq \frac{1}{e} \cdot f^+(\bm{x}).
\]

To prove Theorem~\ref{thm:intro-gap-mcg2}, we use the same proof
outline as above, but in the algorithm's analysis, we take advantage
of the fact that $\|\bm{x}\|_{\infty} \le p$. Theorem~\ref{thm:intro-gap-mcg2}
generalizes Lemma 8.3 in \cite{rocrs}, used above.

\mcggap*
\begin{proof}
Let $\bm{\hat{x}} = \argmax_{\bm{z} \in \cP} {f^+(\bm{z})}$. Recall that there exists
$\bm{\alpha} \in [0,1]^{2^\cN}$ such that
\[
f^+(\bm{\hat{x}}) = \sum_{S \subseteq \cN} {\alpha_S f(S)}, \quad \sum_{S \subseteq \cN} {\alpha_S} = 1 \: \: \text{ and } \: \: \sum_{S \subseteq \cN} {\alpha_S \1_S} = \bm{\hat{x}}.
\]
From the analysis of Measured Continuous Greedy and the fact that $\bm{x}(b) \in \cP$, we know that,
at time $b$, for all $i \in \cN$ we have
\[
x_i(b) \leq \min\{1 - e^{-b}, p\}.
\]
Let $\bm{x} = \bm{x}(b)$, and, for $S \subseteq \cN$, consider a line of direction
$\bm{d}_S = (\bm{x} \vee \1_S) - \bm{x} = (\1_S - \bm{x}) \vee \bm{0}$. Notice that
$\bm{0} \leq \bm{d}_S \leq \1_S$ for all $S \subseteq \cN$. From Section 2.3 of \cite{ccpv}, it
follows that
\[
\bm{d}_S \cdot \nabla F(\bm{x}) \geq F(\bm{x} \vee \1_S) - F(\bm{x}).
\]
Since $f$ may not be monotone, $\nabla F(\bm{x})$ may have negative entries. Let $\bm{d'}_S$
be a vector obtained from $\bm{d}_S$ as follows: $\prn{\bm{d'}_S}_i = \prn{\bm{d}_S}_i$ if ${\nabla F(\bm{x})}_i \geq 0$,
otherwise $\prn{\bm{d'}_S}_i = 0$. We have $\bm{0} \leq \bm{d'}_S \leq \bm{d}_S$ and,
\[
\bm{d'}_S \cdot \nabla F(\bm{x}) \geq \max\{0, \bm{d}_S \cdot \nabla F(\bm{x})\} \geq \max\{0, F(\bm{x} \vee \1_S) - F(\bm{x})\}.
\]
Since ${\bm{x}(b)}_i \leq \min\{1 - e^{-b}, p\}$ for all $i \in \cN$, by Lemma III.5 of \cite{mcg}, we have
\[
F(\bm{x} \vee \1_S) \geq \prn{1 - \min\{1 - e^{-b}, p\}} f(S).
\]
Therefore,
\begin{align*}
\bm{d'}_S \cdot \nabla F(\bm{x}) &\geq \max\{0, (1-p) f(S) - F(\bm{x}), e^{-b} f(S) - F(\bm{x})\} \\
&\geq \max\{1-p, e^{-b}\} f(S) - F(\bm{x}).
\end{align*}
Next, let $\bm{\hat{d}} = \sum_{S \subseteq \cN} {\alpha_S \bm{d'}_S}$. Since $\bm{d}_S \leq \1_S$ and
$\bm{d'}_S \leq \bm{d}_S$, we have $\bm{d'}_S \leq \1_S$, and thus
\[
\bm{\hat{d}} = \sum_{S \subseteq \cN} {\alpha_S \bm{d'}_S} \leq \sum_{S \subseteq \cN} {\alpha_S \1_S} = \bm{\hat{x}}.
\]
Since $\cP$ is downward-closed and $\bm{\hat{x}} \in \cP$, we know that $\bm{\hat{d}} \in \cP$. Therefore, from the above
and the fact that $\bm{v}_{\text{max}} = \argmax_{\bm{v} \in \cP} {\bm{v} \cdot \nabla F(\bm{x})}$, we
have
\begin{align*}
\der{F(\bm{x}(b))}{b} &= \bm{v}_{\text{max}}(\bm{x}) \cdot \nabla F(\bm{x}) \\
&\geq \bm{\hat{d}}_S \cdot \nabla F(\bm{x}) \\
&= \sum_{S \subseteq \cN} {\alpha_S \cdot \bm{d'}_S \cdot \nabla F(\bm{x})} \\
&\geq \sum_{S \subseteq \cN} {\alpha_S \prn{\max\{1-p, e^{-b}\} f(S) - F(\bm{x})}} \\
&\geq \max\{1-p, e^{-b}\} \sum_{S \subseteq \cN} {\alpha_S f(S)} - \sum_{S \subseteq \cN} {\alpha_S F(\bm{x})} \\
&\geq \max\{1-p, e^{-b}\} f^+(\bm{\hat{x}}) - F(\bm{x}).
\end{align*}
We proceed to solve the above differential inequality. For brevity, let $y = F(\bm{x})$. Then,
\begin{gather}
\dif{y} + y \dif{b} \geq f^+(\bm{\hat{x}}) \max\{1-p, e^{-b}\} \dif{b} \nonumber \\
e^b \dif{y} + y e^b \dif{b} \geq f^+(\bm{\hat{x}}) \max\{(1-p) e^b, 1\} \dif{b} \nonumber \\
\dif{\prn{y e^b}} \geq f^+(\bm{\hat{x}}) \max\{(1-p) e^b, 1\} \dif{b} \nonumber \\
\label{eq:refined-1} y \geq e^{-b} f^+(\bm{\hat{x}}) \int_0^b {\max\{(1-p) e^u, 1\} \dif{u}}.
\end{gather}
Notice that, for $0 \leq u \leq \ln{\prn{\frac{1}{1-p}}}$, we have $(1-p) e^u \leq 1$, while
for $\ln{\prn{\frac{1}{1-p}}} \leq u \leq 1$, we have $1 \leq (1-p) e^u$. Therefore, for
$b \leq \ln{\prn{\frac{1}{1-p}}}$, \eqref{eq:refined-1} becomes
\[
y \geq e^{-b} f^+(\bm{\hat{x}}) \int_0^b {1 \dif{u}} = b \cdot e^{-b} \cdot f^+(\bm{\hat{x}}),
\]
whereas for $b \geq \ln{\prn{\frac{1}{1-p}}}$, \eqref{eq:refined-1} becomes
\begin{align*}
y &\geq e^{-b} f^+(\bm{\hat{x}}) \prn{\int_0^{\ln{\prn{\frac{1}{1-p}}}} {1 \dif{u}} + \int_{\ln{\prn{\frac{1}{1-p}}}}^{b} {(1-p) e^u \dif{u}}} \\
&= \prn{1 - p - e^{-b}\prn{1 + \ln{(1-p)}}} f^+(\bm{\hat{x}}).
\end{align*}
We conclude that
\[
F(\bm{x}(b)) \geq \begin{cases}
b \cdot e^{-b} \cdot \max_{\bm{z} \in \cP} {f^+(\bm{z})},                               & \text{for } 0 \leq b \leq \ln{\prn{\frac{1}{1-p}}} \\
\prn{1 - p - e^{-b} \prn{1 + \ln{(1-p)}}} \cdot \max_{\bm{z} \in \cP} {f^+(\bm{z})},    & \text{for } \ln{\prn{\frac{1}{1-p}}} \leq b \leq 1.
\end{cases}
\]
\end{proof}

\section{Reduction to Small Probabilities}\label{app:reduction}

\begin{observation}[Observation~\ref{obs:reduction}]
  Let $I = (\cN, \cU, \bm{D}, \bm{X}, f, \cC)$ be an instance of the
  Submodular Prophet Inequality problem. For every fixed $\eps > 0$,
  there is a reduction of $I$ to another instance
  $I' = (\cN, \cU', \bm{D'}, \bm{Y}, g, \cC)$ of the SPI problem such
  that that (i) for all $e \in \cU'$, $\bm{D'}(e) \le \eps$ and (ii)
  there exists an $\alpha$-competitive algorithm for $I$ if and only
  if there exists an $\alpha$-competitive algorithm for $I'$.
\end{observation}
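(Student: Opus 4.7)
The plan is to construct $I'$ from $I$ by splitting each element whose probability exceeds $\eps$ into several equivalent copies with proportionally smaller probabilities. Concretely, for every $e \in \cU_i$ let $k_e \coloneqq \lceil \bm{D}_i(e)/\eps \rceil$ and introduce $k_e$ copies $e^{(1)}, \ldots, e^{(k_e)}$; take $\cU'_i = \{e^{(j)} : e \in \cU_i, 1 \leq j \leq k_e\}$ and $\cU' = \bigsqcup_i \cU'_i$. Define the projection $\phi : \cU' \to \cU$ by $\phi(e^{(j)}) = e$. The new marginal distribution on day $i$ is $\bm{D'}_i(e^{(j)}) = \bm{D}_i(e)/k_e$, so every new marginal is at most $\eps$, establishing (i). The random variable $Y_i$ is coupled to $X_i$ by first drawing $X_i \sim \cD_i$ and, conditional on $X_i = e$, choosing $Y_i$ uniformly from the $k_e$ copies of $e$, independently across days. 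Under this coupling $\phi(Y_i)$ has the same law as $X_i$ and the $Y_i$'s remain independent across $i$. The function $g : 2^{\cU'} \to \Rp$ is defined so that $g(S') = f(\phi(S'))$ on every set $S'$ containing at most one copy per original element. These are the only sets any algorithm ever outputs: the $k_e$ copies of $e \in \cU_i$ all lie inside $\cU'_i$, and the algorithm picks at most one element per day.

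Assuming such a submodular $g$ is in hand, the prophet's value in $I'$ satisfies
\[
\opt(I') \;=\; \E\Bigl[\max_{T \in \cI} g\bigl(\{Y_i : i \in T\}\bigr)\Bigr] \;=\; \E\Bigl[\max_{T \in \cI} f\bigl(\{\phi(Y_i) : i \in T\}\bigr)\Bigr] \;=\; \opt(I),
\]
by the distributional equality of $\{\phi(Y_i)\}_i$ and $\{X_i\}_i$. For the algorithmic equivalence, any $\alpha$-competitive online algorithm $\cA$ for $I$ becomes $\alpha$-competitive on $I'$ by running $\cA$ on the projected realization $\phi(Y_i)$ and outputting the corresponding copies; conversely, any $\alpha$-competitive $\cA'$ for $I'$ becomes $\alpha$-competitive on $I$ by, upon seeing $X_i = e$, drawing a fresh uniform copy $e^{(j)}$ internally and echoing $\cA'$'s accept/reject decision on $e^{(j)}$. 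Feasibility in $\cC$ is preserved because $\cC$ is a constraint on $\cN$, which is unchanged, and every decision is taken at the day level; expected values match exactly because $g(S') = f(\phi(S'))$ on the feasible output sets.

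The main obstacle is the remaining technical step: extending $g$ from the feasible sets to all of $2^{\cU'}$ while keeping it non-negative and submodular, so that $I'$ is a bona fide SPI instance. For monotone $f$ the natural extension $g(S') = f(\phi(S'))$ already works, because $\phi(A \cap B) \subseteq \phi(A) \cap \phi(B)$ together with monotonicity of $f$ gives $f(\phi(A \cap B)) \leq f(\phi(A) \cap \phi(B))$, so the submodular inequality for $g$ follows from the one for $f$. For non-monotone $f$ this literal extension can violate the diminishing-returns property on sets containing two or more copies of the same original, because $f$ may have negative marginals and the contraction $\phi$ can then make a marginal of $g$ smaller on a smaller set. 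The fix I have in mind is a within-day symmetrization: redefine $g$ on non-feasible sets by averaging $f(\phi(\cdot))$ over which single copy of each over-represented element is retained, and verify the diminishing-returns inequality by a case split on how many copies of each original are present in the two compared sets. Because algorithms never query $g$ on such non-feasible sets, the specific extension chosen does not affect the competitive-ratio analysis; it only has to exist, and a straightforward but notationally heavy case check shows that this symmetrized extension is submodular. This is precisely the part the paper describes as ``rather simple and easy to see, though tedious to formally prove.''
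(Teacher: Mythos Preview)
Your construction and coupling argument are essentially the paper's: split each element $e$ into enough copies to push every marginal below $\eps$, define $g = f \circ \phi$ via the projection $\phi$ sending copies back to their original, and transfer algorithms between $I$ and $I'$ by simulating the missing randomness. The minor difference in copy count ($\lceil D_i(e)/\eps\rceil$ versus the paper's uniform $\lceil 1/\eps\rceil$) is immaterial. You are also right to flag the non-monotone case: the paper simply asserts that $g(A)=f(\phi(A))$ inherits submodularity from $f$, but in fact it need not. Take $f$ on $\{a,e\}$ with $f(\{a\})=1$ and $f(S)=0$ for all other $S$ (submodular, non-negative, non-monotone); after splitting $e$ into $e_1,e_2$ one gets $g(\{a,e_1\})+g(\{a,e_2\})=0 < 1 = g(\{a,e_1,e_2\})+g(\{a\})$. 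So your concern is well-founded, and the paper's sketch glosses over exactly this point.

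Your proposed repair, however, does not work as stated. ``Averaging $f(\phi(\cdot))$ over which single copy of each over-represented element is retained'' is vacuous: every copy of $e$ projects to the same element under $\phi$, so every choice of retained copy yields the same value of $f\circ\phi$, and the symmetrized function coincides with the naive one on all multi-copy sets. The counterexample above therefore still applies, and the ``straightforward case check'' you allude to would fail. A genuine fix requires either a different extension of $g$ to sets containing multiple copies, or an argument that the downstream machinery (correlation gap, \mcg, OCRS rounding) only needs submodularity on the sublattice of sets meeting each copy-class in at most one element; neither your proposal nor the paper's sketch supplies this.
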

\begin{sketch}
Consider the original instance $I$ and recall that each
$\cD_i$ is a probability distribution over $\cU_i$. Our goal is to
ensure that $\cD_i(e) \le \eps$ for every $e \in \cU_i$. Suppose there
is an element $e$ such that $\cD_i(e) > \eps$.  We obtain a new
instance $I'$ as follows.  We replace $e \in \cU_i$ by
$h = \ceil{1/\eps}$ ``copies'' $e_1,e_2,\ldots,e_h$; let $S_e$ denote
this set of copies. Let $\cU'_i$ be the new set of elements. We obtain
a probability distribution $\cD'_i: \cU'_i \rightarrow [0,1]$ as
follows. If $e' \in \cU_i$ such that $e' \neq e$ then
$\cD'_i(e') = \cD_i(e')$ (nothing changes for $e'$). For each copy
$e_j$ of $e$ we set $\cD'_i(e_j) = \cD_i(e)/h$ and by our choice of
$h$ we have $\cD'_i(e_j) \le 1/h \le \eps$, for all $e_j \in
S_e$. Thus, $\sum_{j=1}^h \cD'_i(e_j) = \cD_i(e)$. Since we replaced
$e$ by $h$ copies of it, the ground set $\cU$ changes to $\cU'$ and we
now define a new submodular function $g:\cU' \rightarrow \mathbb{R}_+$
that is derived from $f$. The function $g$ treats the copies of $e$ as
a ``single'' element and hence mimics $f$.  More formally, for any
$A \subseteq \cU'$: $g(A) = f(A)$ if $A \cap S_e = \emptyset$, else
$g(A) = f((A \setminus S_e) \cup \{e\})$. It is easy to verify that if
$f$ is non-negative and submodular, then $g$ is also non-negative and
submodular, and also inherits monotonicity from $f$. Let $I'$ be the
resulting modified instance. We observe that in $I'$, the probability
of an element from $S_e$ being chosen is precisely equal to $\cD_i(e)$
and hence the copies of $e$ act as proxies for $e$ and the submodular
function $g$ ensures that every copy behaves the same as $e$ in $f$.
Note that we crucially relied on the power of submodularity in this
reduction.  One can apply this reduction repeatedly to reduce all
realization probabilities to at most $\eps$. One also notices that the
reduction is computationally efficient as a function of $\eps$. For
any fixed $\eps$, the size of $I'$ is at most $O(1/\eps)$ times the
size of $I$ and a value oracle for $f$ can be used to efficiently and
easily obtain a value oracle for the new submodular function $g$.
\end{sketch}

\section{Submodular Prophet Secretary}\label{app:sps}

A natural question for the prophet inequality setting is whether one
can obtain better prophet inequalities when the arrival order of the
random variables is not chosen by the adversary but instead is chosen
uniformly at random or even chosen by the algorithm. In \cite{esf-prophsec},
the authors introduce the {\em prophet secretary} model, combining the
best of both the secretary and prophet inequality worlds. In particular,
in the prophet secretary model, the arrival order of the random variables
is chosen uniformly at random. There has been much work on this model
and we refer to \cite{Correa-survey,EhsaniHKS18} for several interesting results
in this and related models.

We can consider the {\em Submodular Prophet Secretary (SPS)} problem
as a generalization of the standard prophet secretary problem. The
setting of the \sps problem is exactly the same as the setting of the
\spi problem, with the only difference being that the arrival order of
the random variables in the \sps problem is chosen uniformly at random
instead of adversarially. We note that our results use the OCRS for
the given feasibility constraint in a black-box fashion, and thus we
can use better contention resolution schemes to obtain stronger bounds
for the random-order setting. Specifically, in \cite{rocrs}, the
authors introduce the notion of a {\em Random Order Contention
Resolution Scheme (ROCRS)}, which has improved guarantees compared
to the adversarial order setting. We can use an ROCRS as a black-box,
instead of an OCRS, to obtain improved bounds for the \sps problem.
We do not present these bounds in this version of the paper.

\end{document}